\documentclass[runningheads,a4paper]{llncs}

\usepackage{amssymb}
\setcounter{tocdepth}{3}
\usepackage{graphicx}

\usepackage{url}
\newcommand{\keywords}[1]{\par\addvspace\baselineskip
\noindent\keywordname\enspace\ignorespaces#1}

\newtheorem{fact}[lemma]{Lemma}
\newcommand{\Fact}{Lemma }
\newcommand{\Facts}{Lemmas\ }

\newcommand{\eqref}[1]{(\ref{#1})}

% the configuration symbol.
\newcommand{\conf}[1]{\ensuremath{C(#1)}}

\newcommand{\EXP}[1]{\ensuremath{{\mathbf{E}\left[#1\right]}}}
\newcommand{\cE} {\ensuremath{\mathcal E}}

% the cost of the configuration.
\newcommand{\Csc}[1]{\ensuremath{Cost(#1)}}

\newcommand{\xlambda}{\ensuremath{\lambda}}

\newcommand{\xstar}{\ensuremath{x^*}}
\newcommand{\ystar}{\ensuremath{y^*}}

\newcommand{\Deltaf}[1]{\ensuremath{\Delta(#1)}}

%images 274 x 318
\newcommand{\fgxnatwidth}{274}
\newcommand{\fgxnatheight}{318}

%plots 555 x 555
\newcommand{\plotnatwidth}{555}
\newcommand{\plotnatheight}{555}

% Terminologies
\newcommand{\elfarol}{El Farol game}
\newcommand{\elfarolws}{El Farol game\ }

\newcommand{\extelfarol}{$(c,s_1,s_2)$-El Farol game}
\newcommand{\extelfarolws}{$(c,s_1,s_2)$-El Farol game\ }

\newcommand{\mv}{\emph{Mediation Value}}
\newcommand{\mvsym}{\emph{MV}}

\newcommand{\ev}{\emph{Enforcement Value}}
\newcommand{\evsym}{\emph{EV}}

\def\D{{\cal D}}

\newcommand{\SpacedNe}{Nash equilibrium\ }
\newcommand{\Ne}{Nash equilibrium}

\newcommand{\paperTitle}{The Power of Mediation in an Extended El Farol Game}

\newcommand{\paperTitleAbbr}{The Power of Mediation in an Extended El Farol Game}

\newcommand{\authorsAbbr}{D. Mitsche, G. Saad and J. Saia}

%\author{Dieter Mitsche \thanks{Laboratoire Dieudonn\'{e}, Universit\'{e} de Nice; email: {\tt dmitsche@lsi.upc.edu}} \and George Saad \thanks{Department of Computer Science, University of New Mexico;  email: {\tt \{george.saad,saia\}@cs.unm.edu}}. \and Jared Saia \footnotemark[2]}

%This research was partially supported by NSF CAREER Award 0644058, NSF CCR-0313160, and an AFOSR MURI grant.} 

\date{}

\begin{document}
\mainmatter  % start of an individual contribution

\mainmatter              % start of the contributions

\author{Dieter Mitsche\inst{1} \and George Saad\inst{2} \and Jared Saia\inst{2}}

\institute{Laboratoire Dieudonn\'{e}, UMR CNRS 7351,Universit\'{e} de Nice,\\
\email{dmitsche@unice.fr}
%\\ WWW home page:
%\texttt{http://users/\homedir iekeland/web/welcome.html}
\and
Department of Computer Science,
University of New Mexico, 
\\
\email{\{george.saad,saia\}@cs.unm.edu}
}

% first the title is needed
\title{\paperTitle}
% a short form should be given in case it is too long for the running head
\titlerunning{\paperTitleAbbr}

% the name(s) of the author(s) follow(s) next
%
% NB: Chinese authors should write their first names(s) in front of
% their surnames. This ensures that the names appear correctly in
% the running heads and the author index.
%
%\author{Dieter Mitsche \thanks{}
%\and George Saad \thanks{Department of Computer Science, %University of New Mexico.} \and Jared Saia \footnotemark[2]}

%
\authorrunning{\authorsAbbr}
\tocauthor{Dieter Mitsche, George Saad, and  Jared Saia}

% (feature abused for this document to repeat the title also on left hand pages)

% the affiliations are given next; don't give your e-mail address
% unless you accept that it will be published
%\institute{ \mailsa\\ }

%
% NB: a more complex sample for affiliations and the mapping to the
% corresponding authors can be found in the file "llncs.dem"
% (search for the string "\mainmatter" where a contribution starts).
% "llncs.dem" accompanies the document class "llncs.cls".
%

\toctitle{\paperTitleAbbr}
\maketitle
%=========================================================================
%  Abstract
%=========================================================================

\begin{abstract}
A mediator implements a correlated equilibrium when it proposes a strategy to each player confidentially such that the mediator's proposal is the best interest for every player to follow.
In this paper, we present a mediator that implements the best correlated equilibrium for an extended \elfarolws with symmetric players. 
The extended El Farol game we consider incorporates both negative and positive network effects.

We study the degree to which this type of mediator can decrease the overall social cost. 
In particular, we give an exact characterization of \mv ~(\mvsym) and \ev ~(\evsym) for this game. 
\mvsym ~is the ratio of the minimum social cost over all Nash equilibria to the minimum social cost over all mediators of this type, and \evsym ~is the ratio of the minimum social cost over all mediators of this type to the optimal social cost.
This sort of exact characterization is uncommon for games with both kinds of network effects.  
An interesting outcome of our results is that  both the $\mvsym$ and $\evsym$ values can be unbounded for our game.

%\emph{abstract} environment.
\keywords{Nash Equilibria, Correlated Equilibria, Mediators and Network Effects.}
\end{abstract}

\section{Introduction}
When players act selfishly to minimize their own costs, the outcome with respect to the total social cost may be poor.
The Price of Anarchy \cite{Koutsoupias1999} measures the impact of selfishness on the social cost and is defined as the ratio of the worst social cost over all Nash equilibria to the optimal social cost.
In a game, with a high Price of Anarchy, one way to reduce social cost is to find a mediator of expected social cost less than the social cost of any \Ne.

In the literature, there are several types of mediators \cite{Ashlagi:2007,Diaz2009,Forgo2010,Forgo2010-2,Monderer:2009,Peleg:2007,RT,RT3,RT2,Tennenholtz:2008}. 
In this paper, we consider only the type of mediator that implements a correlated equilibrium (CE) \cite{Aumann}.

A mediator is a trusted external party that suggests a strategy to every player separately and privately so that each player has no gain to choose another strategy assuming that the other players conform to the mediator's suggestion.

The algorithm that the mediator uses is known to all players. 
However, the mediator's random bits are unknown.
We assume that the players are symmetric in the sense that they have the same utility function and the probability the mediator suggests a strategy to some player is independent of the identity of that player.

Ashlagi et al. \cite{AshlagiMT08} define two metrics to measure the quality of a mediator: the mediation value (\mvsym) and the enforcement value (\evsym).
In our paper, we compute these values, adapted for games where players seek to minimize the social cost. 
The \mv ~is defined as the ratio of the minimum social cost over all Nash equilibria to the minimum social cost over all mediators.
The \ev ~is the ratio of the minimum social cost over all mediators to the optimal social cost.

A mediator is optimal when its expected social cost is minimum over all mediators.
Thus, the \mv ~measures the quality of the optimal mediator with respect to the best \Ne ; and the \ev ~measures the quality of the optimal mediator with respect to the optimal social cost.

\subsection{El Farol Game}\label{sec:elfarol}
%\subsection{Overview}
First we describe the traditional \elfarolws \cite{Arthur,CPG,CMO,LAKUA}.
El Farol is a tapas bar in Santa Fe.  
Every Friday night, a population of people decide whether or not to go to the bar. 
If too many people go, they will all have a worse time than if they stayed home, since the bar will be too crowded.
That is a negative network effect \cite{David:2010}.

Now we provide an extension of the traditional \elfarol, where both negative and positive network effects \cite{David:2010} are considered. The positive network effect is that if too few people go, those that go will also have a worse time than if they stayed home.

\subsubsection{Motivation.}

Our motivation for studying this problem comes from the following discussion in \cite{David:2010}.

\medskip
\noindent
\emph{``It's important to keep in mind, of course, that many real situations in fact display both kinds of [positive and negative] externalities - some level of participation by others is good, but too much is bad.  For example, the El Farol Bar might be most enjoyable if a reasonable crowd shows up, provided it does not exceed 60.  Similarly, an on-line social media site with limited infrastructure might be most enjoyable if it has a reasonably large audience, but not so large that connecting to the Web site becomes very slow due to the congestion."}

\medskip
We note that our El Farol extension is one of the simplest,  non-trivial problems for which a mediator can improve the social cost.  Thus, it is useful for studying the power of a mediation.

%We now formally define the costs. The cost to stay at home is always $1$. The cost to go is a function of the number of players that go. It has linear negative and positive network effects as shown in Figure \ref{fig:fgx}. The coefficients of this function are $c, s_1$ and $s_2$.
%$c$ is the cost to go for any player when all other players go. 
%$-s_1$ is the slope of the positive network effect, and $s_2$ is the slope of the negative network effect.
%The slopes of the network effects are not restricted to have the same absolute value, but the linear positive network effect intersects the linear negative network effect at a point where the cost to go is $0$.

\subsubsection{Formal Definition of the Extended El Farol Game.}

\begin{figure}
\centerline{
\includegraphics[width=0.4\textwidth,natwidth=\fgxnatwidth,natheight=\fgxnatheight]{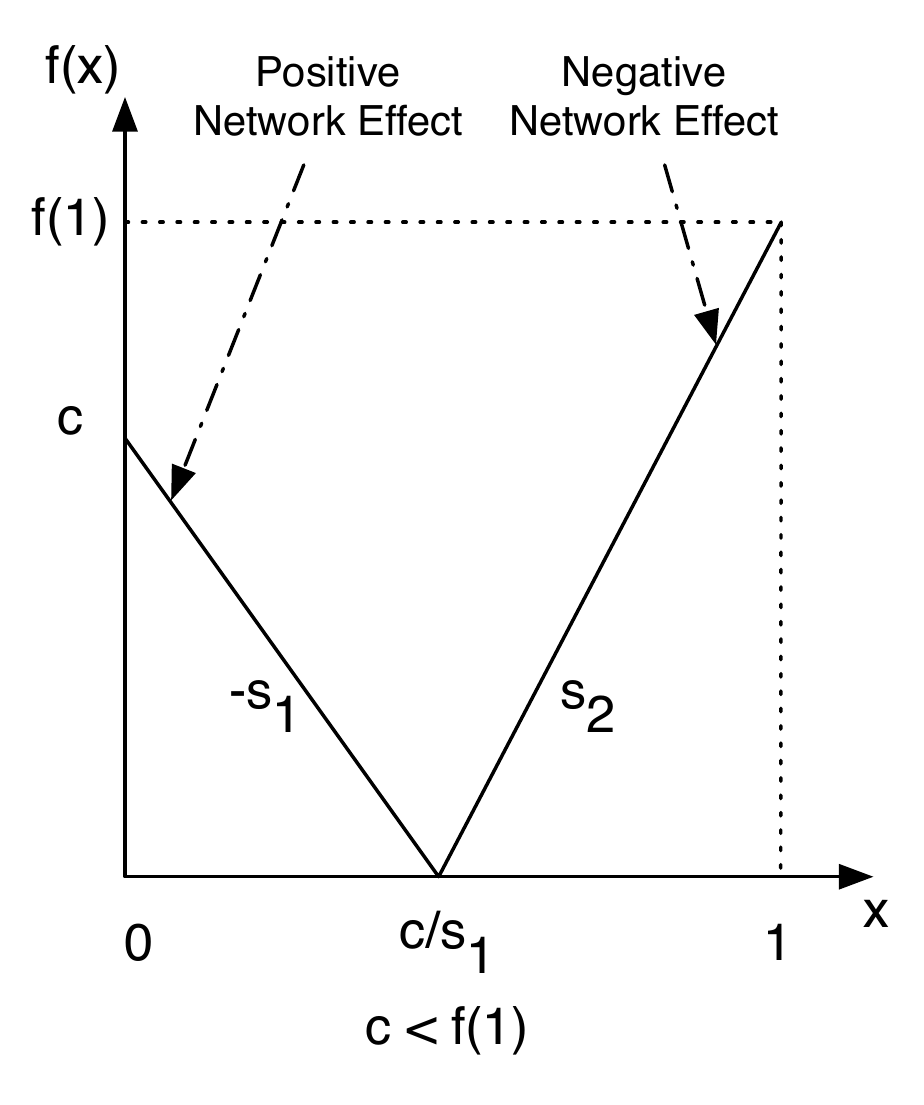}
\includegraphics[width=0.4\textwidth,natwidth=\fgxnatwidth,natheight=\fgxnatheight]{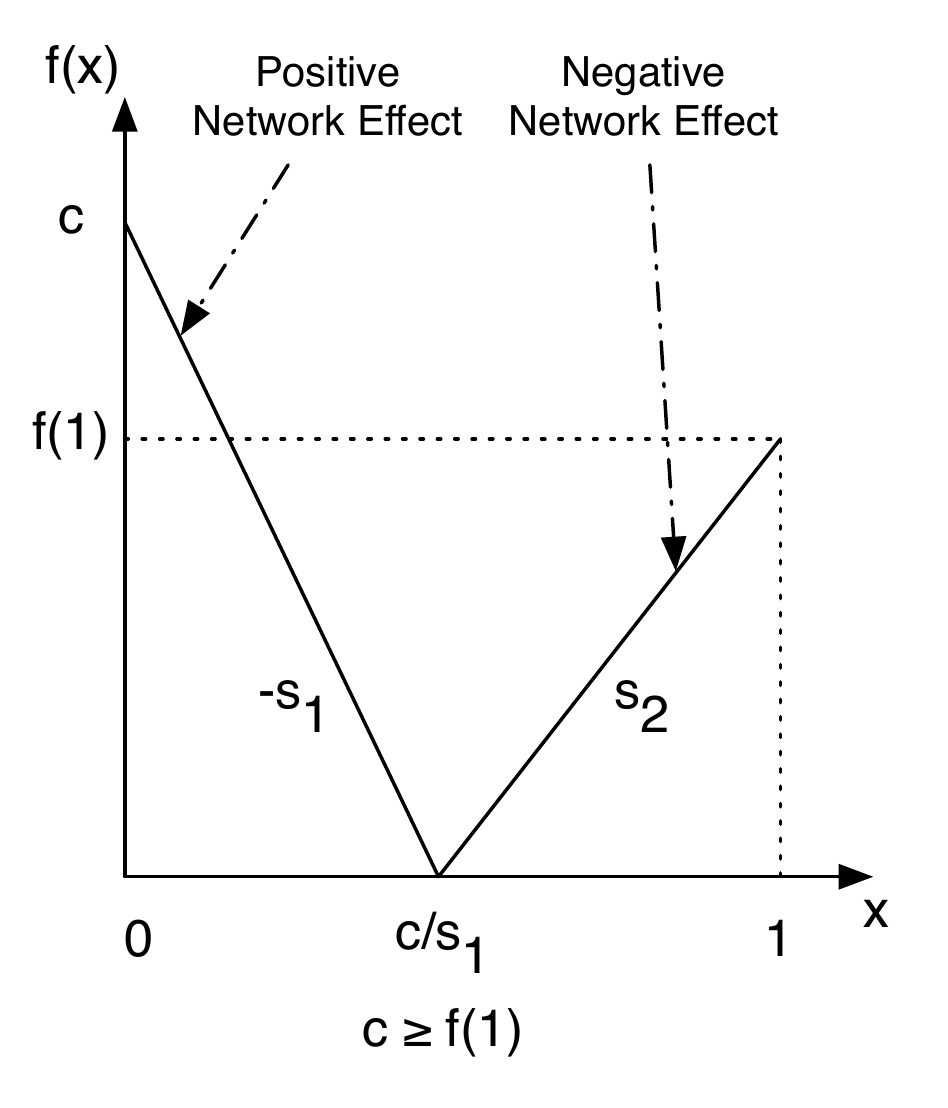}
}
\caption{The individual cost to go $f(x)$.}
\label{fig:fgx}
\end{figure}

We now formally define our game, which is non-atomic \cite{aumann1974values,Schmeidler1973}, in the sense that no individual player has significant influence on the outcome; moreover, the number of players is very large tending to infinity. 
The \extelfarolws has three parameters $c, s_1$ and $s_2$, where $0 < c < s_1$ and $s_2 > 0$.
If $x$ is the fraction of players to go, then the cost $f(x)$ for any player to go is as follows:
\begin{equation}\label{eq:fgx}
f(x) = \left\{ 
\begin{array}{l l}
  c- s_1 x & \quad \mbox{$0 \leq x \leq \frac{c}{s_1}$,}\\
  s_2 (x - \frac{c}{s_1}) & \quad \mbox{$\frac{c}{s_1} \leq x \leq 1$.}\\ 
\end{array} \right.
\end{equation}
and the cost to stay is 1.
The function $f(x)$ is illustrated in the two plots of Figure \ref{fig:fgx}.

\subsubsection{Our Contributions.}
The main contributions of our paper are threefold:
\begin{itemize}
\item We design an optimal mediator, which implements the best correlated equilibrium for an extension of the \elfarolws with symmetric players. 
Notably, this extension incorporates both negative and positive network effects.

\item We give an exact characterization of the \mv ~(\mvsym) and the \ev ~(\evsym) for our game. 
\item We show that both the $\mvsym$ and $\evsym$ values can be unbounded for our game.
\end{itemize}

\subsubsection{Paper Organization.}
In Section \ref{relatedwork}, we discuss the related work.
Section \ref{sec:definitions} states the definitions and notations that we use in the \elfarol.
Our results are given in Section \ref{sec:our_results}, where we show our main theorem that characterizes the best correlated equilibrium, and we compute accordingly the \mv ~and the \ev .
Finally, Section~\ref{sec:conclusion} concludes the paper and discusses some open problems.

\section{Related Work}\label{relatedwork}

%In our paper, we consider a mediator that implements the best correlated equilibrium, which suggests strategies to the players in such convincing way that no player has incentive to deviate unilaterally, where each player has free will to ignore the mediator's suggestion.

\subsection{Mediation Metrics}

Christodoulou and Koutsoupias \cite{Christodoulou:2005} analyze the price of anarchy and the price of stability for Nash and correlated equilibria in linear congestion games. 
A consequence of their results is that the \evsym ~for these games is at least $1.577$ and at most $1.6$, and the \mvsym ~is at most $1.015$.

%quantify and bound comparative ratios between various solution concepts in ranking games. 
%,Brandt:2009
Brandt et al. \cite{Brandt:2007} compute the mediation value and the enforcement value in ranking games. In a ranking game, every outcome is a ranking of the players, and each player strictly prefers high ranks over lower ones \cite{Brandt:2006}. 
They show that for the ranking games with $n>2$ players, $\evsym = n-1$. They also show that $\mvsym = n-1$ for $n>3$ players, and for $n=3$ players where at least one player has more than two actions.

%\george{I have to talk about ``Defender (Mis)coordination in Security Games'' paper in a couple of sentences. Hence, the PoM coincides with the mediation value in this game.}

The authors of \cite{Diaz2009} design a mediator that implements a correlated equilibrium for a virus inoculation game \cite{ACY,MSW}. 
In this game, there are $n$ players, each corresponding to a node in a square grid. Every player has either to inoculate itself (at a cost of $1$) or to do nothing and risk infection, which costs $L>1$. 
After each node decides to inoculate or not, one node in the grid selected uniformly at random is infected with a virus.
Any node, $v$, that chooses not to inoculate becomes infected if there is a path from the randomly selected node to $v$ that traverses only uninoculated nodes.
A consequence of their result is that \evsym ~is $\Theta(1)$ and \mvsym ~is $\Theta((n/L)^{1/3})$ for this game. 

Jiang et al. \cite{Xin13a} analyze the price of miscoordination (PoM) and the price of sequential commitment (PoSC) in security games, which are defined to be a certain subclass of Stackelberg games. A consequence of their results is that \mvsym ~is unbounded in general security games and it is at least $4/3$ and at most $\frac{e}{e-1} \thickapprox 1.582$ in a certain subclass of security games.

%\subsubsection{Price of Mediation.}

We note that a poorly designed mediator can make the social cost worse than what is obtained from the Nash equilibria.  
Bradonjic et al. \cite{Bradonjic:2009} describe the \emph{Price of Mediation} ($PoM$) which is the ratio of the social cost of the worst correlated equilibrium to the social cost of the worst \Ne.  
They show that for a simple game with two players and two possible strategies, $PoM$ can be as large as $2$.  
Also, they show for games with more players or more strategies per player that $PoM$ can be unbounded.

\subsection{Finding and Simulating a Mediator}
Papadimitriou and Roughgarden \cite{Papadimitriou:2008} develop polynomial time algorithms for finding correlated equilibria in a broad class of succinctly representable multiplayer games.
Unfortunately, their results do not extend to non-atomic games; moreover, they do not allow for direct computation of \mvsym ~and \evsym, even when they can find the best correlated equilibrium. 
%They show that optimizing over correlated equilibria is NP-hard in all classes of succinct games except for anonymous games and graphical games with bounded tree-width.

% Theorem 5.8: Optimal correlated equilibria of anonymous games can be computed in polynomial time.
% Theorem 5.9: Optimal correlated equilibria of graphical games with bounded tree-width can be computed in polynomial time

%HARDNESS RESULTS FOR OPTIMAL CORRELATED EQUILIBRIA: Our results for computing optimal correlated equilibria in succinct games (Theorem 5.7) are less all-encompassing than our results for computing arbitrary correlated equilibria (Sections 3 and 4). We now show that this difference is fundamental: in all of the classes of games not covered by Theorems 5.8 and 5.9, computing an optimal correlated equilibrium is an NP-hard problem. This negative result dispels any lingering concern that our approach to optimizing over correlated equilibria was too modest: for these other classes of games, there is nolinear system that characterizes the correlated equilibria and can be optimized over in polynomial time (assuming P = NP ).

%In the following classes of succinct games, it is NP-hard to compute a correlated equilibrium that maximizes the sum of the players’ expected payoffs:
%(1)
%Bounded-degree, bipartite graphical games;
%(2)
%Polymatrix games;
%(3)
%Hypergraphical games;
%(4)
%Congestion games;
%(5)
%Local effect games;
%(6)
%Facility location games;
%(7)
%Network design games;
%(8)
%Scheduling games

%\subsubsection*{Simulating the Mediator}
Abraham et al.~\cite{ADGH,ADH} describe a distributed algorithm that enables a group of players to simulate a mediator. 
This algorithm works robustly with up to linear size coalitions, and up to a constant fraction of adversarial players. 
The result suggests that the concept of mediation can be useful even in the absence of a trusted external party. 

\subsection{Other Types of Mediators}

%Forges [12] proposes a mechanism of communication in which players exchange messages at the exante stage (before they learn their types) and at the interim stage (after each player learns her own type) He may be seen as an extra player with payoff 0 who can help the players to coordinate themselve.

In all equilibria above, the mediator does not act on behalf of the players. However, a more powerful type of mediators is described in \cite{Ashlagi:2007,Forgo2010,Forgo2010-2,Monderer:2009,Peleg:2007,RT,RT3,RT2,Tennenholtz:2008}, where a mediator can act on behalf of the players that give that right to it. 
%Moreover, Davis et al. \cite{Davis:2009} analyzed the \emph{mediated equilibrium} in load balancing games. In particular, they considered \emph{the mediation analogue of the price of stability} that is the ratio of social cost of the best mediated equilibrium (BME) to the optimal social cost (OPT). The mediated equilibria achieve higher social welfare than Nash or correlated equilibria. In the mediated equilibrium, any player can delegate the responsibility of choosing the strategy to the mediator; in addition, the mediator is able to enforce an equilibrium by threatening to have the delegating players punish any player departing from mediation.

For multistage games, the notion of the correlated equilibrium is generalized to the communication equilibrium in \cite{Forges1986,Myerson1986}. In a communication equilibrium, the mediator implements a multistage correlated equilibrium; in addition, it communicates with the players privately to receive their reports at every stage and selects the recommended strategy to each player accordingly.

\section{Definitions and Notations}\label{sec:definitions}
%First, we define the concepts of a strategy profile and a correlated equilibrium.
%\begin{definition}
%\emph{A strategy profile} for a given strategic game is a vector of strategies for that game, one strategy for each player.  
%\end{definition}
%
%\begin{definition}
%\emph{A correlated equilibrium} for a given strategic game is a probability distribution on a set of strategy profiles that ensures that no player has incentive to deviate unilaterally. 
%\end{definition}

%From a distributed computing viewpoint, the major difference between a correlated equilibrium and a Nash equilibrium is that in a correlated equilibrium, players share a global coin, but in a Nash equilibrium, players only have access to private coins.

Now we state the definitions and notations that we use in the \elfarol.

\begin{definition}
% A \emph{configuration} C(x) is characterized by the fraction of players advised to go, where $x$ denotes an $x$-fraction of the players is advised to go, and the remaining $(1-x)$-fraction of the players is advised to stay.
\emph{A configuration} $\conf{x}$ characterizes that a fraction of players, $x$, is being advised to go; and the remaining fraction of players, $(1-x)$, is being advised to stay. 
%For example, $\conf{0}$ characterizes that all players are being advised to stay.
\end{definition}

\begin{definition}
\emph{A configuration distribution $\D\{(\conf{x_1},p_1),..,(\conf{x_k},p_k)\}$} is a probability distribution over $k\geq 2$ configurations, where $(\conf{x_i},p_i)$ represents that configuration $\conf{x_i}$ is selected with probability $p_i$, for $1\leq i \leq k$.
Note that $0\leq x_i\leq 1$, $0<p_i<1$, $\sum^k_{i=1} p_i = 1$ and if $x_i=x_j$ then $i=j$ for $1\leq i,j\leq k$.
\end{definition}

For any player $i$, let $\cE^i_G$ be the event that player $i$ is advised to go, and $C^i_G$ be the cost for player $i$ to go (when all other players conform to the advice). 
Also let $\cE^i_S$ be the event that player $i$ is advised to stay, and $C^i_S$ be the cost for player $i$ to stay. 
Since the players are symmetric, we will omit the index $i$.

A configuration distribution, $\D\{(\conf{x_1},p_1),..,(\conf{x_k},p_k)\}$, is a correlated equilibrium iff 
\begin{eqnarray*}
\EXP{C_S |\cE_G} \geq \EXP{C_G |\cE_G}, \\
\EXP{C_G |\cE_S} \geq \EXP{C_S |\cE_S}.
\end{eqnarray*}

\begin{definition}
\emph{A mediator} is a trusted external party that uses a configuration distribution to advise the players such that this configuration distribution is a correlated equilibrium.
The set of configurations and the probability distribution are known to all players. 
The mediator selects a configuration according to the probability distribution.
The advice the mediator sends to a particular player, based on the selected configuration, is known only to that player.
\end{definition}

Throughout the paper, we let $n$ be the number of players.

\section{Our Results}\label{sec:our_results}

In our results, we assume that \emph{the cost to stay} is $1$; we justify this assumption at the end of this section.
Our first results in Lemmas \ref{lem:socialoptimum} and \ref{lem:bestnash} are descriptions of the optimal social cost and the minimum social cost over all Nash equilibria for our extended \elfarol. 
We next state our main theorem which characterizes the best correlated equilibrium and determines the \mv ~and \ev.
%An interesting outcome of our results is that for any fixed positive $z$, there is a variant of our extended \elfarol, where $\mvsym$ is larger than $z$, and some variant where $\evsym$ is greater than $z$.

\begin{lemma}\label{lem:socialoptimum}
For any \extelfarol , the optimal social cost is $(\ystar f(\ystar)+(1-\ystar))n$, 
where \[\ystar = \left\{ 
\begin{array}{l l}
  \frac{1}{2}(\frac{c}{s_1}+\frac{1}{s_2}) & \quad \mbox{if $\frac{c}{s_1}  \leq  \frac{1}{2}(\frac{c}{s_1}+\frac{1}{s_2}) \leq 1$,}\\
  \frac{c}{s_1} & \quad \mbox{if $ \frac{1}{s_2} < \frac{c}{s_1}$,}\\ 
  1 & \quad \mbox{$otherwise$.}\\ 
\end{array} \right. \] 
\end{lemma}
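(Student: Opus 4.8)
The plan is to express the optimal social cost as $n$ times a per-capita cost function of the single parameter $x$, the fraction of players who go, and then minimize. Since the game is non-atomic, every $x\in[0,1]$ is realizable, and for any configuration distribution the expected social cost is a convex combination of the single-configuration costs; hence the minimum is attained at a single configuration, and the optimal social cost equals $n\min_{x\in[0,1]} g(x)$, where $g(x)=xf(x)+(1-x)$ accounts for the $xn$ players who go (each paying $f(x)$) together with the $(1-x)n$ players who stay (each paying the cost to stay, $1$). The minimizer then gives the optimal configuration $\conf{\ystar}$.

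First I would substitute the two branches of $f$ from \eqref{eq:fgx} into $g$. On the left piece $[0,\frac{c}{s_1}]$ this yields $g(x)=-s_1 x^2+(c-1)x+1$, a concave (downward) parabola, whose minimum over a closed interval is therefore attained at an endpoint; comparing $g(0)=1$ with $g(\frac{c}{s_1})=1-\frac{c}{s_1}$ shows the minimum on this piece is at $x=\frac{c}{s_1}$. On the right piece $[\frac{c}{s_1},1]$ I would compute $g(x)=s_2 x^2-(s_2\frac{c}{s_1}+1)x+1$, a convex (upward) parabola whose vertex lies at $x_v=\frac12(\frac{c}{s_1}+\frac{1}{s_2})$.

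The remaining step is a case analysis on the position of $x_v$ relative to $[\frac{c}{s_1},1]$, using $\frac{c}{s_1}<1$ (which follows from $0<c<s_1$) to keep the cases disjoint. When $\frac{c}{s_1}\le x_v\le 1$, the right piece attains its minimum at the interior vertex, and since the vertex value of an upward parabola does not exceed its value at the point $\frac{c}{s_1}$ on the same parabola, continuity at the kink gives $g(x_v)\le g(\frac{c}{s_1})$, so $\ystar=x_v$; here the inequality $\frac{c}{s_1}\le x_v$ is equivalent to $\frac{c}{s_1}\le\frac{1}{s_2}$, matching the first branch. When $x_v<\frac{c}{s_1}$, equivalently $\frac{1}{s_2}<\frac{c}{s_1}$, the right piece is increasing throughout $[\frac{c}{s_1},1]$, so both pieces are minimized at the kink and $\ystar=\frac{c}{s_1}$. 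In the remaining case $x_v>1$ (the ``otherwise'' branch, which is exactly the complement of the first two) the right piece is decreasing on $[\frac{c}{s_1},1]$, pushing the minimum to the endpoint $\ystar=1$. Substituting each $\ystar$ back into $g$ and multiplying by $n$ produces the stated optimal social cost $(\ystar f(\ystar)+(1-\ystar))n$.

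I expect the main obstacle to be bookkeeping rather than anything conceptual: verifying that the algebraic conditions defining the three branches of $\ystar$ in the statement coincide exactly with the geometric conditions ``$x_v$ lies to the left of, inside, or to the right of $[\frac{c}{s_1},1]$,'' and confirming that the comparison of the two pieces across the kink always selects the correct global minimizer, so that no additional comparison with $g(\frac{c}{s_1})$ is needed in the interior-vertex case.
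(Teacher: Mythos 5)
Your proposal is correct and follows essentially the same route as the paper: split the per-capita cost $xf(x)+(1-x)$ at the kink $x=\frac{c}{s_1}$, minimize the linear-piece contribution at the kink and the quadratic piece at its (clipped) vertex, and use continuity at $\frac{c}{s_1}$ to conclude the global minimizer is $\ystar$. The only additions are welcome bits of rigor the paper leaves implicit — the reduction to a single configuration via convexity and the explicit endpoint/vertex computations.
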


\begin{proof}
By Equation \eqref{eq:fgx}, $f(x)$ has two cases. 
Let $f_1(x)$ be $f(x)$ for $x \in [0,\frac{c}{s_1}]$, and let $f_2(x)$ be $f(x)$ for $x \in [\frac{c}{s_1}, 1]$.
Also let $h_1(x)$ be the social cost when $0 \leq x \leq \frac{c}{s_1}$, and let $h_2(x)$ be the social cost when $\frac{c}{s_1} \leq x \leq 1$. 
Thus, $h_1(x) = (xf_1(x)+(1-x))n$ and $h_2(x) = (xf_2(x)+(1-x))n$.

We know that $h_1(x)$ is minimized at $x=\frac{c}{s_1}$.
In addition, we know that $h_2(x)$ is a quadratic function with respect to $x$, and thus it has one minimum over $x \in [\frac{c}{s_1}, 1]$ at $x=\ystar$, where:
 \[\ystar = \left\{ 
\begin{array}{l l}
  \frac{1}{2}(\frac{c}{s_1}+\frac{1}{s_2}) & \quad \mbox{if $\frac{c}{s_1}  \leq  \frac{1}{2}(\frac{c}{s_1}+\frac{1}{s_2}) \leq 1$,}\\
  \frac{c}{s_1} & \quad \mbox{if $ \frac{1}{2}(\frac{c}{s_1}+\frac{1}{s_2}) < \frac{c}{s_1}$,}\\ 
  1 & \quad \mbox{$ otherwise$.}\\ 
\end{array} \right. \] 

Let $h^*$ be the optimal social cost. Then $h^* = min (h_1(\frac{c}{s_1}), h_2(\ystar))$. Since $f_1(\frac{c}{s_1}) = f_2(\frac{c}{s_1})$, we have $h_1(\frac{c}{s_1}) = h_2(\frac{c}{s_1})$. Hence, $h^* = min (h_2(\frac{c}{s_1}), h_2(\ystar))$. This implies that $h^* = h_2(\ystar)$. \qed
\end{proof}

\begin{lemma}\label{lem:bestnash}
For any \extelfarol, if $f(1)\geq 1$, then the best Nash equilibrium is at which the cost to go in expectation is equal to the cost to stay; otherwise, the best Nash equilibrium is at which all players would rather go. 
The social cost of the best \SpacedNe is $\min(n, f(1) \cdot n)$.
\end{lemma}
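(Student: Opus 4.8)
The plan is to first characterize all symmetric Nash equilibria of this non-atomic game by the fraction $x$ of players who go, and then to compare their social costs. Since the game is non-atomic, a single player's deviation does not change $x$, so a profile in which a fraction $x$ goes is a Nash equilibrium exactly when no player can strictly lower its cost by switching. Recalling that the cost to go is $f(x)$ and the cost to stay is $1$, this splits into three cases: for $0 < x < 1$ both actions are used, forcing indifference $f(x)=1$; for $x=1$ all players go, which is stable iff $f(1)\le 1$; and for $x=0$ all players stay, which is stable iff $f(0)=c\ge 1$. I would record these as the only candidate equilibria.

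Next I would evaluate the social cost $\Csc{x}=(x f(x)+(1-x))\,n$ at each candidate. The key observation is that every indifferent equilibrium, where $f(x)=1$, has social cost exactly $(x\cdot 1+(1-x))\,n=n$, independent of $x$; the all-stay equilibrium $x=0$ likewise costs $n$. Only the all-go equilibrium $x=1$ can differ, with cost $f(1)\,n$. Hence the minimum social cost over all Nash equilibria equals $n$ unless the all-go profile is both an equilibrium and strictly cheaper, i.e.\ unless $f(1)<1$.

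To finish, I would split on the sign of $f(1)-1$. If $f(1)\ge 1$, then $x=1$ is not a cheaper equilibrium, so I must exhibit an indifferent equilibrium to certify the value $n$: using that $f$ is V-shaped (decreasing from $c$ to $0$ on $[0,\frac{c}{s_1}]$, then increasing from $0$ to $f(1)$ on $[\frac{c}{s_1},1]$), the hypothesis $f(1)\ge 1$ forces a root of $f(x)=1$ on the increasing branch at $x=\frac{c}{s_1}+\frac{1}{s_2}\le 1$, giving social cost $n=\min(n,f(1)n)$. If instead $f(1)<1$, then $f(1)\le 1$ makes $x=1$ a Nash equilibrium of cost $f(1)\,n<n$, which beats every indifferent or all-stay equilibrium, so the best \SpacedNe has cost $f(1)\,n=\min(n,f(1)n)$.

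The main obstacle I anticipate is bookkeeping rather than depth: carefully justifying the equilibrium conditions at the boundaries $x\in\{0,1\}$ in the non-atomic setting, and invoking the V-shape of $f$ to guarantee existence of the indifferent equilibrium precisely when $f(1)\ge 1$. Once existence is pinned down, comparing costs is immediate, because all indifferent equilibria collapse to the single value $n$.
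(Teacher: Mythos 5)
Your proof is correct and follows essentially the same route as the paper's: interior equilibria are pinned down by the indifference condition $f(x)=1$ and hence cost exactly $n$, while the all-go profile is the only candidate that can cost less, giving $\min(n, f(1)\cdot n)$ after splitting on the sign of $f(1)-1$. You are in fact slightly more careful than the paper, which neither explicitly verifies that an indifferent equilibrium exists when $f(1)\geq 1$ (your root $x=\frac{c}{s_1}+\frac{1}{s_2}\leq 1$) nor separately treats the boundary profiles $x\in\{0,1\}$.
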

\begin{proof}
There are two cases for $f(1)$ to determine the best \Ne.\\
\textbf{Case 1:} $f(1)\geq1$.
Let $N_y$ be a Nash equilibrium with the minimum social cost over all Nash equilibria and with a $y$-fraction of players that go in expectation.
If $f(y)>1$, then at least one player of the $y$-fraction of players would rather stay.
Also if $f(y)<1$, then at least one player of the $(1-y)$-fraction of players would rather go.
Thus, we must have $f(y) = 1$.
Assume that each player has a mixed strategy, where player $i$ goes with probability $y_i$. 
Recall that $N_y$ has a $y$-fraction of players that go in expectation. Thus, $y=\frac{1}{n}\sum^{n}_{i=1} y_i$.
Then the social cost is $\sum^{n}_{i=1}(y_if(y)+(1-y_i))$, or equivalently, $n$.\\
\textbf{Case 2:} $f(1)<1$.
In this case, the best \SpacedNe is at which all players would rather go, with a social cost of $f(1) \cdot n$.

Therefore, the social cost of the best \SpacedNe is $min(n, f(1) \cdot n)$. \qed
\end{proof}

% wolframalpha : solve (1-d*(x-c/s))^2 + (c-1)*(1-d*(x-c/s))-(c-1)*d*x = 0 and d>0 and c>1 and s>c  and x< 1/d + c/s
\begin{theorem}\label{thm:optimal}
For any \extelfarolws, if $c \leq 1$, then the best correlated equilibrium is the best \Ne; otherwise, the best correlated equilibrium is $\D\{(\conf{0},p),(\conf{\xstar},1-p)\}$, where 
$
\xlambda(c,s_1,s_2) = c(\frac{1}{s_1} + \frac{1}{s_2}) - \sqrt{\frac{c(\frac{1}{s_1} + \frac{1}{s_2})(c-1)}{s_2}},
$
$$
\xstar = \left\{ 
\begin{array}{l l}
  \xlambda(c,s_1,s_2) & \quad \mbox{if $\frac{c}{s_1}  \leq  \xlambda(c,s_1,s_2) < 1$,}\\
  \frac{c}{s_1} & \quad \mbox{if $ \xlambda(c,s_1,s_2) < \frac{c}{s_1}$,}\\ 
  1 & \quad \mbox{$ otherwise$.}\\ 
\end{array} \right.
$$
and
$p = \frac{(1-\xstar)(1-f(\xstar))}{(1-\xstar)(1-f(\xstar))+c-1}$. 
Moreover,

\begin{enumerate}

\item[1)] the expected social cost is
$
(p+(1-p)(\xstar f(\xstar)+(1-\xstar)))n
$,
\item[2)] the Mediation Value $(\mvsym)$ is
$
\frac{\min(f(1), 1)}{p+(1-p)(\xstar f(\xstar)+(1-\xstar))}
$ and

\item[3)] the Enforcement Value $(\evsym)$ is
$
\frac{p+(1-p)(\xstar f(\xstar)+(1-\xstar))}{\ystar f(\ystar)+(1-\ystar)},
$
where 
$$
\ystar = \left\{ 
\begin{array}{l l}
  \frac{1}{2}(\frac{c}{s_1}+\frac{1}{s_2}) & \quad \mbox{if $\frac{c}{s_1}  \leq  \frac{1}{2}(\frac{c}{s_1}+\frac{1}{s_2}) \leq 1$,}\\
  \frac{c}{s_1} & \quad \mbox{if $ \frac{1}{s_2} < \frac{c}{s_1}$,}\\ 
  1 & \quad \mbox{$ otherwise$.}\\ 
\end{array} \right. .
$$
\end{enumerate}

\end{theorem}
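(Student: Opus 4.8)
The plan is to turn the two correlated-equilibrium inequalities into linear constraints on the configuration distribution and then minimize the social cost as a small linear program. Since the cost to stay is always $1$ and the game is non-atomic (a single deviator does not move the fraction going), for $\D\{(\conf{x_1},p_1),\dots,(\conf{x_k},p_k)\}$ one has $\EXP{C_G\mid\cE_G}=\big(\sum_i p_i x_i f(x_i)\big)/\big(\sum_i p_i x_i\big)$ and $\EXP{C_G\mid\cE_S}=\big(\sum_i p_i (1-x_i) f(x_i)\big)/\big(\sum_i p_i (1-x_i)\big)$, while the conditional cost to stay is $1$. Writing $B:=\sum_i p_i x_i(f(x_i)-1)$ and $\Gamma:=\sum_i p_i (1-x_i)(f(x_i)-1)$, the two CE conditions become exactly $B\le 0$ and $\Gamma\ge 0$. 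The key observation is that the per-player social cost equals $\sum_i p_i\big(x_i f(x_i)+1-x_i\big)=B+1$, so \emph{minimizing the social cost is identical to minimizing $B$ subject to $B\le 0$ and $\Gamma\ge 0$}.

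Next I would pass to the plane by mapping each configuration $\conf{x}$ to the point $\gamma(x)=\big(x(f(x)-1),\,(1-x)(f(x)-1)\big)$; a distribution is a point in the convex hull of this curve, and we minimize the first coordinate on the slice $\Gamma\ge0$. Since the unconstrained minimizer of the first coordinate sits at $x=c/s_1$ (where $f=0$) and there has $\Gamma<0$, the constraint $\Gamma\ge0$ is active, so the optimum lies on $\Gamma=0$ and, by a supporting-line (Lagrangian) argument in two dimensions, is a mixture of at most two configurations. For a two-point mixture with $f(x_1)>1>f(x_2)$, imposing $\Gamma=0$ and writing $b_j=(1-x_j)(f(x_j)-1)$ yields the clean identity $B=\frac{(f(x_1)-1)(f(x_2)-1)(x_2-x_1)}{b_1-b_2}$, whose sign equals that of $-(x_2-x_1)$. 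This gives the dichotomy directly: when $c\le1$ every configuration with $f>1$ lies to the \emph{right} of every configuration with $f<1$, so no feasible mixture improves on the best $\Ne$ and the best CE coincides with it; when $c>1$ one may take $x_1=0$ (since $f(0)=c>1$) together with any $x_2$ with $f(x_2)<1$, forcing $x_2>x_1$ and hence $B<0$, strictly beating every $\Ne$.

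For $c>1$ I would then pin down the optimal two-point form. The supporting-line criterion makes the two support points minimizers of $(f(x)-1)(x-\beta)$ over $x\in[0,1]$ for the appropriate dual slope $\beta$; on the decreasing branch this is a concave parabola, so its left minimizer is the endpoint $x=0$, which identifies the anchor as $\conf{0}$. Imposing $\Gamma=0$ on $\D\{(\conf{0},p),(\conf{\xstar},1-p)\}$ gives $p(c-1)+(1-p)(1-\xstar)(f(\xstar)-1)=0$, which solves to the stated $p=\frac{(1-\xstar)(1-f(\xstar))}{(1-\xstar)(1-f(\xstar))+c-1}$. Substituting back, the per-player cost becomes $S(\xstar)=\frac{(1-f(\xstar))(1-c\,\xstar)+(c-1)}{(1-f(\xstar))(1-\xstar)+(c-1)}$; restricting the second point to the increasing branch $[\,c/s_1,1]$, where $f(\xstar)=s_2(\xstar-c/s_1)$, and minimizing this one-variable rational function via $S'(\xstar)=0$ produces a quadratic whose relevant root is $\xlambda(c,s_1,s_2)$. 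Clamping $\xlambda$ to $[\,c/s_1,1]$, the interval on which the trade-off is feasible, yields the three cases for $\xstar$.

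Finally, statements (1)--(3) follow immediately: plugging $\xstar$ and $p$ into $B+1$ gives the expected social cost $\big(p+(1-p)(\xstar f(\xstar)+1-\xstar)\big)n$, and then $\mvsym$ and $\evsym$ are just the ratios of this quantity to the best $\Ne$ cost $\min(f(1),1)\cdot n$ from Lemma~\ref{lem:bestnash} and to the optimal cost $(\ystar f(\ystar)+1-\ystar)n$ from Lemma~\ref{lem:socialoptimum}, respectively. I expect the main obstacle to be the structural step of the second and third paragraphs: rigorously justifying that an arbitrary correlated equilibrium reduces, without increasing cost, to a two-point mixture anchored at \emph{exactly} $\conf{0}$ — that is, the convexity/supporting-line reduction together with the verification that $x=0$ (rather than some other configuration with $f>1$) is the optimal anchor. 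Once that structure is fixed, the remainder is the routine single-variable optimization producing $\xlambda$ and the bookkeeping for the clamped cases.
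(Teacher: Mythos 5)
Your reduction to the linear program ``minimize $B$ subject to $B\le 0$, $\Gamma\ge 0$'' is exactly the paper's setup in disguise: your $B$ and $\Gamma$ are the negatives of the two sums in the paper's \Fact\ref{fact:const}, and ``social cost $=(1+B)n$'' is \Fact\ref{fact:socialcost}. From there, however, your route genuinely differs. The paper reaches the two-configuration form $\D\{(\conf{0},p),(\conf{\xstar},1-p)\}$ by a chain of explicit exchange lemmas: any configuration with $\Deltaf{x}<0$ is pushed to $x=0$ (\Fact\ref{fact:delta<0-x=0}), any configuration on the decreasing branch with $\Deltaf{x}>0$ is reflected to the increasing branch at equal $f$-value (\Facts\ref{fact:delta>0-x=c/s_1}--\ref{fact:f(1)<1,f(x)>f(1),(c-1)/s_1<x<c/s_1}), and two configurations on the increasing branch are merged by a Jensen-type computation (\Fact\ref{fact:x=py+(1-p)z}), each step strictly decreasing cost while preserving both constraints. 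You instead map each configuration to the planar point $\bigl(x(f(x)-1),(1-x)(f(x)-1)\bigr)$ and invoke a supporting-line/Carath\'{e}odory argument on the convex hull. That is cleaner and explains \emph{why} two configurations suffice, and your closed form $B=\frac{(f(x_1)-1)(f(x_2)-1)(x_2-x_1)}{b_1-b_2}$ on the slice $\Gamma=0$ gives a slicker dichotomy for $c\le 1$ than the paper's Lemma~\ref{lem:c<1,mv=1}. Your derivation of $p$ from the active constraint, the rational function $S(x)$, and the quadratic yielding $\xlambda$ then coincide with Lemma~\ref{lem:socialcost(x_2)}, and parts (1)--(3) follow from Lemmas~\ref{lem:socialoptimum} and~\ref{lem:bestnash} exactly as in the paper.

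The one place your argument is not yet a proof is the step you yourself flag: identifying the anchor as \emph{exactly} $\conf{0}$. Your justification (``on the decreasing branch the Lagrangian is a concave parabola, so its left minimizer is the endpoint $x=0$'') is incomplete on two counts: a concave parabola on an interval is minimized at one of its \emph{two} endpoints, so you must rule out $x=c/s_1$; and when $f(1)>1$ the anchor could a priori lie on the far right of the increasing branch, where $f>1$ as well --- this is precisely the case the paper must dispose of separately (Case~1 of Lemma~\ref{lem:onedelta<0-on-slope-s1}, via \Fact\ref{fact:implication}). A smaller slip: the unconstrained minimizer of $B$ is the social optimum $\ystar$ of Lemma~\ref{lem:socialoptimum}, not $c/s_1$ in general, though your conclusion that $\Gamma<0$ there (so the constraint is active) still holds. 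None of this breaks the plan, but the anchoring step needs the same kind of dedicated argument the paper supplies.
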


%The proof of this theorem is deferred to appendix \ref{sec:proofoftheorem} due to space constraints.
Due to the space constraints, the proof of this theorem is not given here.

The following corollary shows that for $c > 1$, if $\xlambda(c,s_1,s_2) \geq 1$, then the best correlated equilibrium is the best \Ne, where all players would rather go.

\begin{corollary}\label{corollary:no.mediation}
For any \extelfarol, if $c > 1$ and $\xlambda(c,s_1,s_2) \geq 1$ then $\mvsym = 1$.
\end{corollary}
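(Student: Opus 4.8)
The plan is to specialize the formulas of Theorem~\ref{thm:optimal} to the regime $c>1$ and $\xlambda(c,s_1,s_2)\ge 1$, show that the optimal mediator degenerates to the configuration in which every player is advised to go, and observe that this coincides with the best Nash equilibrium so that $\mvsym=1$. Since the statement we must prove is just a numerical consequence of the theorem, I would work entirely from the expressions for $\xstar$, $p$, and $\mvsym$ given there.

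First I would pin down which branch of the definition of $\xstar$ is active. Because $0<c<s_1$ we have $\frac{c}{s_1}<1\le \xlambda(c,s_1,s_2)$, so the condition $\frac{c}{s_1}\le \xlambda(c,s_1,s_2)<1$ of the first branch fails (it needs $\xlambda<1$) and the condition $\xlambda(c,s_1,s_2)<\frac{c}{s_1}$ of the second branch fails as well. Hence the ``otherwise'' branch applies and $\xstar=1$. I would then substitute $\xstar=1$ into $p=\frac{(1-\xstar)(1-f(\xstar))}{(1-\xstar)(1-f(\xstar))+c-1}$: the numerator vanishes and the denominator equals $c-1>0$ (here $c>1$ is used), so $p=0$. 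Thus the best correlated equilibrium $\D\{(\conf{0},p),(\conf{\xstar},1-p)\}$ collapses to the single configuration $\conf{1}$ in which all players go, and its per-player social cost $p+(1-p)(\xstar f(\xstar)+(1-\xstar))$ reduces to $f(1)$.

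Plugging these values into item~2) of Theorem~\ref{thm:optimal} gives $\mvsym=\frac{\min(f(1),1)}{f(1)}$, where $f(1)=s_2(1-\frac{c}{s_1})>0$ since $c<s_1$. Because $\min(f(1),1)\le f(1)$, this quotient is at most $1$. To obtain the matching lower bound I would invoke the standard fact that every Nash equilibrium is itself a correlated equilibrium, so the minimum social cost over all mediators never exceeds the minimum over all Nash equilibria, i.e. $\mvsym\ge 1$. Combining the two inequalities yields $\mvsym=1$.

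The hard part will be the final lower bound. The clean route is the containment argument $\mvsym\ge1$, which avoids any algebra. If a fully self-contained computation is preferred instead, one must verify directly that $\xlambda(c,s_1,s_2)\ge1$ forces $f(1)\le1$, equivalently $c\ge s_1(1-\frac{1}{s_2})$; by Lemma~\ref{lem:bestnash} this makes the everyone-goes profile already the best Nash equilibrium, so that $\min(f(1),1)=f(1)$ and $\mvsym=1$ outright. Establishing this implication means squaring the inequality $c(\frac{1}{s_1}+\frac{1}{s_2})-1\ge \sqrt{c(\frac{1}{s_1}+\frac{1}{s_2})(c-1)/s_2}$ (valid since the left side is nonnegative when $\xlambda\ge1$) and simplifying the resulting quadratic relation; this is routine but tedious, which is why I would favour the containment argument.
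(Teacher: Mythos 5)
Your proposal is correct, and the first half (identifying the ``otherwise'' branch so that $\xstar=1$, hence $p=0$ and a per-player mediator cost of $f(1)$) is exactly what the paper does. Where you diverge is the crux: showing the resulting ratio $\frac{\min(f(1),1)}{f(1)}$ is not merely $\leq 1$ but $=1$. The paper does this by algebra: it proves the implication $\xlambda(c,s_1,s_2)\geq 1 \Rightarrow f(1)<1$ by contradiction, starting from $f(1)\geq 1 \Leftrightarrow \frac{c}{s_1}+\frac{1}{s_2}\leq 1\leq \xlambda(c,s_1,s_2)$ and manipulating the expression for $\xlambda$ until it yields $s_2\cdot\frac{c}{s_1}\leq -1$, which is absurd; this is essentially your fallback route (b), and it is indeed routine but a few lines of computation. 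Your preferred route (a) instead invokes the containment of Nash equilibria among correlated equilibria to get $\mvsym\geq 1$ for free and combines it with the trivial upper bound. That is cleaner and avoids all algebra, and as a byproduct it even re-derives $f(1)\leq 1$ in this regime. The one caveat is that the containment must be checked against the paper's own formalism: a configuration distribution is defined with $k\geq 2$ configurations and $0<p_i<1$, so the best Nash equilibrium (a single configuration $\conf{1}$, or $\conf{y}$ with $f(y)=1$) is not literally a ``mediator'' under that definition. The paper itself ignores this (it writes $p=0$ in Theorem~\ref{thm:optimal} and speaks of a one-configuration distribution in Lemma~\ref{lem:c<1,mv=1}), so your argument is consistent with the paper's conventions, but a fully rigorous write-up should say explicitly that degenerate distributions are admitted, or else fall back on your route (b), which needs no such convention.
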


\begin{proof}
By Theorem~\ref{thm:optimal}, when $\xlambda(c,s_1,s_2) \geq 1$, $\xstar=1$ and $p=0$. 
%This is equivalent to a pure \SpacedNe where all players go.
Now we prove that if $\xlambda(c,s_1,s_2) \geq 1$, then the best correlated equilibrium is the best \SpacedNe of the case $f(1) < 1$ in Lemma \ref{lem:bestnash}.
To do so, we prove that $\xlambda(c,s_1,s_2) \geq 1 \Rightarrow f(1) < 1$.

Now assume by way of contradiction that 
$
\xlambda(c,s_1,s_2) \geq 1 \Rightarrow f(1) \geq 1.
$
Recall that $f(1) = s_2(1-\frac{c}{s_1})$. Then
$
\xlambda(c,s_1,s_2) \geq 1 \Rightarrow \frac{c}{s_1}+\frac{1}{s_2} \leq 1,$
or equivalently, 
$
\xlambda(c,s_1,s_2) \geq 1 \Rightarrow \frac{c}{s_1}+\frac{1}{s_2} \leq \xlambda(c,s_1,s_2).
$
Also recall that $\xlambda(c,s_1,s_2) = c(\frac{1}{s_1}+\frac{1}{s_2})- \sqrt{\frac{c(\frac{1}{s_1}+\frac{1}{s_2})(c-1)}{s_2}}$.
Thus, we have:

\begin{eqnarray*}
\xlambda(c,s_1,s_2) \geq 1 && \Rightarrow	 \frac{c}{s_1}+\frac{1}{s_2} \leq c(\frac{1}{s_1}+\frac{1}{s_2})- \sqrt{\frac{c(\frac{1}{s_1}+\frac{1}{s_2})(c-1)}{s_2}}\\
%	  && \Rightarrow \frac{c}{s_2}- \sqrt{\frac{c(\frac{1}{s_1}+\frac{1}{s_2})(c-1)}{s_2}} > \frac{1}{s_2} \\
%	&& \Rightarrow  c- \sqrt{s_2 \cdot c(\frac{1}{s_1}+\frac{1}{s_2})(c-1)} \geq 1 \\
%	 && \Rightarrow  c- 1 \geq \sqrt{s_2c(\frac{1}{s_1}+\frac{1}{s_2})(c-1)} \\
%	 && \Rightarrow  (c- 1)^2 \geq s_2 \cdot c(\frac{1}{s_1}+\frac{1}{s_2})(c-1) \\
%	 && \Rightarrow  (c- 1) \geq s_2c(\frac{1}{s_1}+\frac{1}{s_2})\\
%	 && \Rightarrow  (c- 1) \geq c(\frac{s_2}{s_1}+1)\\
	 && \Rightarrow  s_2 \cdot \frac{c}{s_1} \leq -1,
\end{eqnarray*}
which contradicts since $s_1,s_2$ and $c$ are all positive.
Therefore, for $c > 1$ and $\xlambda(c,s_1,s_2) \geq 1$, $MV$ must be equal to $1$. \qed
\end{proof}

%In the following corollary, we show that for any \extelfarol, the social cost of our optimal mediator is never worse than the social cost of the best \Ne.
%\begin{corollary}\label{corollary:opt-never-worse}
%For any \extelfarol , $\mvsym \geq 1$ and $\evsym \geq 1$.
%\end{corollary}
%
%\begin{proof}
%Our optimal mediator implements the best correlated equilibrium.
%that is characterized in Theorem \ref{thm:optimal}.
%By definition, any \SpacedNe is a correlated equilibrium.
%Thus, the social cost of the best correlated equilibrium is at most the social cost of the best Nash equilibrium.
%Moreover, the social cost of the best correlated equilibrium is trivially at least the optimal social cost. \qed
%\end{proof}

Now we show that $\mvsym$ and $\evsym$ can be unbounded in the following corollaries.
%$\lim_{x\to\infty}f(x)=0$
\begin{corollary}\label{corollary:unboundedmv}
For any $(2+\epsilon,\frac{2+\epsilon}{1-\epsilon}, \frac{1}{\epsilon})$-El Farol game, as $\epsilon \to 0$, $\mvsym \to \infty$.
\end{corollary}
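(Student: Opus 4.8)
The plan is to substitute the three parameters $c = 2+\epsilon$, $s_1 = \frac{2+\epsilon}{1-\epsilon}$, $s_2 = \frac{1}{\epsilon}$ directly into the formula for $\mvsym$ from Theorem~\ref{thm:optimal}, and to show that the numerator stays constant while the denominator tends to $0$ at rate $\epsilon$. First I would record the elementary reductions $\frac{c}{s_1} = 1-\epsilon$ and $\frac{1}{s_2} = \epsilon$, and then compute $f(1) = s_2(1 - \frac{c}{s_1}) = \frac{1}{\epsilon}\cdot\epsilon = 1$, so that the numerator $\min(f(1),1)$ equals $1$ throughout. Since $c = 2 + \epsilon > 1$ for every $\epsilon > 0$, Theorem~\ref{thm:optimal} places us in the correlated-equilibrium regime $\D\{(\conf{0},p),(\conf{\xstar},1-p)\}$, so the entire evaluation of $\mvsym$ runs through $\xstar$ and $p$.

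The key step is to evaluate $\xlambda(c,s_1,s_2)$ and select the correct branch for $\xstar$. Plugging in the parameters gives $c(\frac{1}{s_1}+\frac{1}{s_2}) = 1 + \epsilon + \epsilon^2$ and a radicand that simplifies to $\epsilon(1+\epsilon)(1+\epsilon+\epsilon^2)$, so $\xlambda = 1 + \epsilon + \epsilon^2 - \sqrt{\epsilon(1+\epsilon)(1+\epsilon+\epsilon^2)}$, which behaves like $1 - \sqrt{\epsilon}$ for small $\epsilon$. To decide the branch I would verify the inequality $\xlambda < \frac{c}{s_1}$. Subtracting $\frac{c}{s_1}$ from $c(\frac{1}{s_1}+\frac{1}{s_2})$ leaves exactly $\frac{c}{s_2}$, so this inequality is equivalent to $\frac{c}{s_2} < \sqrt{\frac{c(\frac{1}{s_1}+\frac{1}{s_2})(c-1)}{s_2}}$; squaring both positive sides and dividing by $\epsilon$ reduces it to the polynomial inequality $2\epsilon + 2\epsilon^2 < 1$, which holds for all small $\epsilon$. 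Hence we are in the second branch and $\xstar = \frac{c}{s_1} = 1 - \epsilon$, at which $f(\xstar) = 0$ since $x = c/s_1$ is precisely the kink of $f$.

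With $\xstar = 1-\epsilon$ and $f(\xstar) = 0$ in hand, the remaining quantities are routine. The mixing probability is $p = \frac{(1-\xstar)(1-f(\xstar))}{(1-\xstar)(1-f(\xstar))+c-1} = \frac{\epsilon}{1+2\epsilon}$, and the denominator of $\mvsym$ becomes $p + (1-p)(\xstar f(\xstar) + (1-\xstar)) = p + (1-p)\epsilon = \frac{\epsilon(2+\epsilon)}{1+2\epsilon}$. Combining with the numerator $1$ yields $\mvsym = \frac{1+2\epsilon}{\epsilon(2+\epsilon)}$, which is asymptotic to $\frac{1}{2\epsilon}$ and therefore diverges as $\epsilon \to 0$.

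The only real obstacle is the branch selection in the second paragraph: one must confirm that $\xlambda$ lands strictly below $\frac{c}{s_1}$ rather than inside $[\frac{c}{s_1},1)$ or above $1$, since the value of $\xstar$ (and hence of $f(\xstar)$ and $p$) depends entirely on which case of the piecewise definition applies. Everything else is substitution. I would also note in passing that the chosen parameters satisfy the game's standing constraints $0 < c < s_1$ and $s_2 > 0$ for every $0 < \epsilon < 1$, so the game is well-defined along the whole limit.
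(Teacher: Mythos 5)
Your proposal is correct and follows essentially the same route as the paper: substitute the parameters into Theorem~\ref{thm:optimal}, obtain $\xstar=1-\epsilon$, $f(\xstar)=0$, $p=\frac{\epsilon}{1+2\epsilon}$, and observe that the denominator $\frac{\epsilon(2+\epsilon)}{1+2\epsilon}$ vanishes while the numerator is $1$. The only difference is that you explicitly verify the branch selection $\xlambda<\frac{c}{s_1}$ via the inequality $2\epsilon+2\epsilon^2<1$, which the paper states only implicitly through the condition $\epsilon\leq\frac{1}{2}(\sqrt{3}-1)$; this is a welcome addition, not a deviation.
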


\begin{proof}
For any $(2+\epsilon,\frac{2+\epsilon}{1-\epsilon}, \frac{1}{\epsilon})$-El Farol game, we have $f(1) = 1$. 
By Theorem \ref{thm:optimal}, we obtain $\xstar=1-\epsilon$, $f(\xstar) = 0$ and $p=\frac{\epsilon}{1+2\epsilon}$ for $\epsilon \leq \frac{1}{2}(\sqrt{3}-1)$.
Thus we have 
\[
\lim_{\epsilon \to 0} \mvsym = \lim_{\epsilon \to 0}\frac{\min{(f(1),1)}}{\frac{\epsilon}{1+2\epsilon}+\epsilon(\frac{1+\epsilon}{1+2\epsilon})} = \infty.
\] \qed
\end{proof}

\begin{corollary}\label{corollary:unboundedev}
For any $(1+\epsilon,\frac{1+\epsilon}{1-\epsilon}, \frac{1}{\epsilon})$-El Farol game, as $\epsilon \to 0$, $\evsym \to \infty$.
\end{corollary}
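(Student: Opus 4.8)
The plan is to mirror the proof of Corollary~\ref{corollary:unboundedmv}: substitute $c=1+\epsilon$, $s_1=\frac{1+\epsilon}{1-\epsilon}$, $s_2=\frac{1}{\epsilon}$ into the closed-form $\evsym$ expression of Theorem~\ref{thm:optimal}, and take $\epsilon\to 0$. First I would record the basic quantities. A direct computation gives $\frac{c}{s_1}=1-\epsilon$ and $f(1)=s_2(1-\frac{c}{s_1})=1$, so $f(1)\geq 1$; and since $c=1+\epsilon>1$ we are in the mediation regime of Theorem~\ref{thm:optimal}, where the best correlated equilibrium is $\D\{(\conf{0},p),(\conf{\xstar},1-p)\}$.

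Next I would evaluate the denominator of $\evsym$, namely the optimal per-player cost $\ystar f(\ystar)+(1-\ystar)$ from Lemma~\ref{lem:socialoptimum}. Since $\frac{1}{s_2}=\epsilon<1-\epsilon=\frac{c}{s_1}$ for $\epsilon<\frac{1}{2}$, the second case applies, giving $\ystar=\frac{c}{s_1}=1-\epsilon$ and $f(\ystar)=0$. Hence the optimal per-player cost equals $1-\ystar=\epsilon$, which vanishes as $\epsilon\to 0$; this is the source of the blow-up.

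The remaining work is the numerator $p+(1-p)(\xstar f(\xstar)+(1-\xstar))$. Here I would compute $c(\frac{1}{s_1}+\frac{1}{s_2})=1+\epsilon^2$, so that $\xlambda(c,s_1,s_2)=1+\epsilon^2-\epsilon\sqrt{1+\epsilon^2}$, and check that $\frac{c}{s_1}\leq \xlambda<1$ for small $\epsilon$, placing us in the first case with $\xstar=\xlambda$. Substituting then yields $f(\xstar)=1+\epsilon-\sqrt{1+\epsilon^2}\to 0$ and $1-\xstar=\epsilon(\sqrt{1+\epsilon^2}-\epsilon)\to 0$, so $\xstar f(\xstar)+(1-\xstar)\to 0$. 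The subtle point is that $p$ does \emph{not} vanish: since both $1-\xstar$ and $1-f(\xstar)$ carry the factor $\sqrt{1+\epsilon^2}-\epsilon$, one finds the clean form $p=\frac{(\sqrt{1+\epsilon^2}-\epsilon)^2}{(\sqrt{1+\epsilon^2}-\epsilon)^2+1}\to \frac{1}{2}$. Consequently the numerator tends to $\frac{1}{2}$ while the denominator is $\epsilon$, so $\evsym=\frac{\frac{1}{2}+o(1)}{\epsilon}\to\infty$.

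The main obstacle is therefore not the final ratio but the bookkeeping that keeps the numerator bounded away from $0$: both summands $\xstar f(\xstar)$ and $1-\xstar$ tend to $0$, so one must expand $1-\xstar$ and $f(\xstar)$ to first order in $\epsilon$ and compare their product against $c-1=\epsilon$ to conclude that $p\to\frac{1}{2}$ rather than to $0$ or $1$. This is precisely where the joint choice of the three parameters, in particular the fact that it forces $f(1)=1$ while driving the optimum cost to $\epsilon$, is used.
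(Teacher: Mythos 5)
Your proposal is correct and follows essentially the same route as the paper: substitute the parameters into the formulas of Theorem~\ref{thm:optimal} and Lemma~\ref{lem:socialoptimum}, observe that the optimal per-player cost is $1-\ystar=\epsilon$, and take $\epsilon\to 0$. In fact your version is tighter than the paper's: you make explicit the essential point that $p\to\frac{1}{2}$ (without which the numerator would itself be $\Theta(\epsilon)$ and the ratio would stay bounded), which the paper leaves implicit, and your $f(\xstar)=1+\epsilon-\sqrt{1+\epsilon^2}$ is the correct value where the paper's stated $\sqrt{1-\epsilon^2}$ appears to be a typo.
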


\begin{proof}
For any $(1+\epsilon,\frac{1+\epsilon}{1-\epsilon}, \frac{1}{\epsilon})$-El Farol game, by Theorem \ref{thm:optimal}, we obtain $\xstar=1+\epsilon^2-\epsilon \sqrt{1+\epsilon^2}$ and $f(\xstar) = 1+\epsilon -\sqrt{1-\epsilon^2}$.
Then we have
\[ p=\frac{(1-(1+\epsilon^2-\epsilon \sqrt{1+\epsilon^2}))(1-(1+\epsilon -\sqrt{1-\epsilon^2}))}{(1-(1+\epsilon^2-\epsilon \sqrt{1+\epsilon^2}))(1-(1+\epsilon -\sqrt{1-\epsilon^2}))+\epsilon}.
\]
Also we have $\ystar = 1-\epsilon$ and $f(\ystar) = 0$ for $\epsilon \leq \frac{1}{2}$. 
Thus we have 
\[
\lim_{\epsilon \to 0} \evsym = \lim_{\epsilon \to 0} 
\frac{p+(1-p)(\xstar f(\xstar)+(1-\xstar))}{\ystar f(\ystar)+(1-\ystar)}
= \infty.
\] \qed
\end{proof}

\begin{figure}
\centerline{
\includegraphics[width=0.45\textwidth,natwidth=\fgxnatwidth,natheight=\fgxnatheight]
{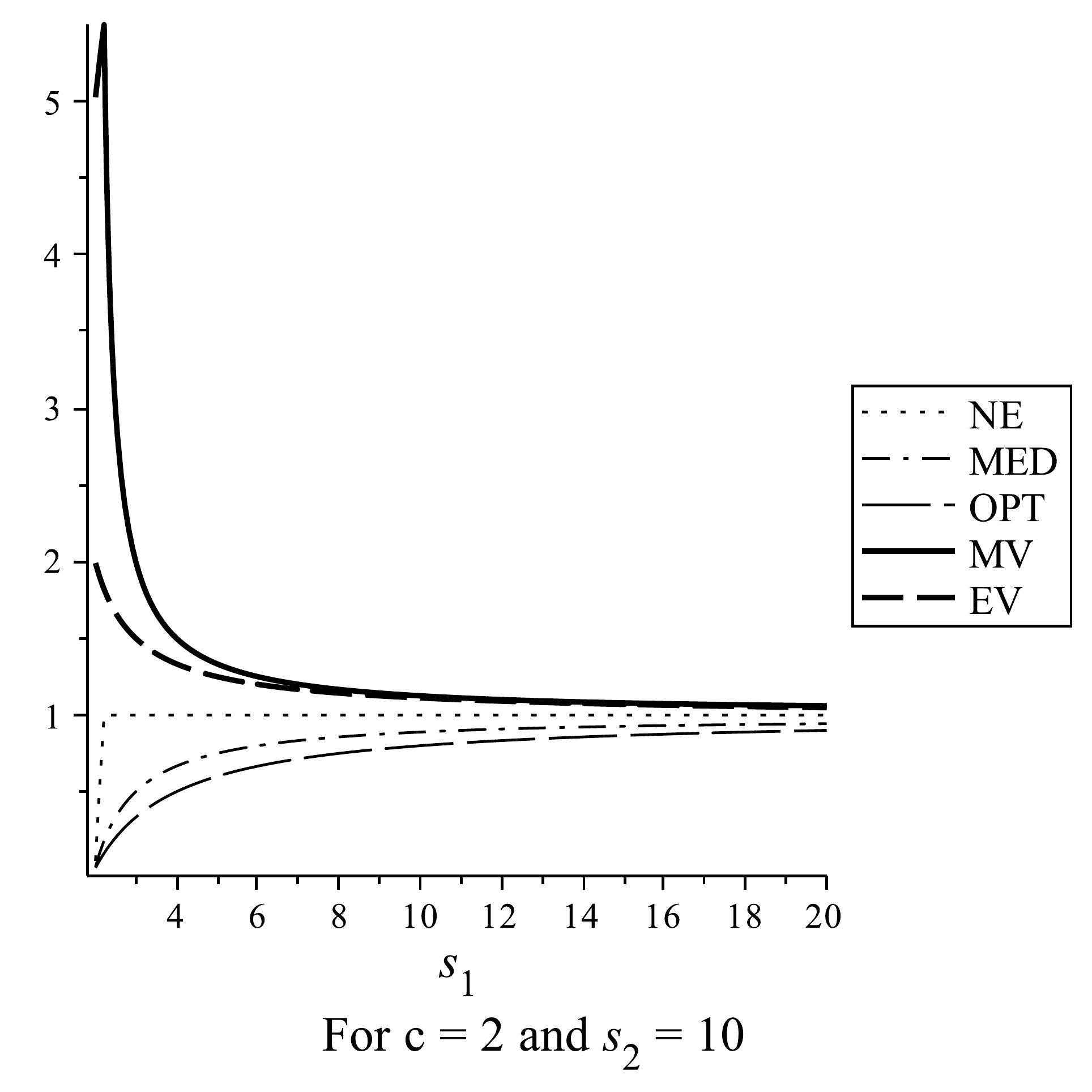}
\includegraphics[width=0.45\textwidth,natwidth=\fgxnatwidth,natheight=\fgxnatheight]
{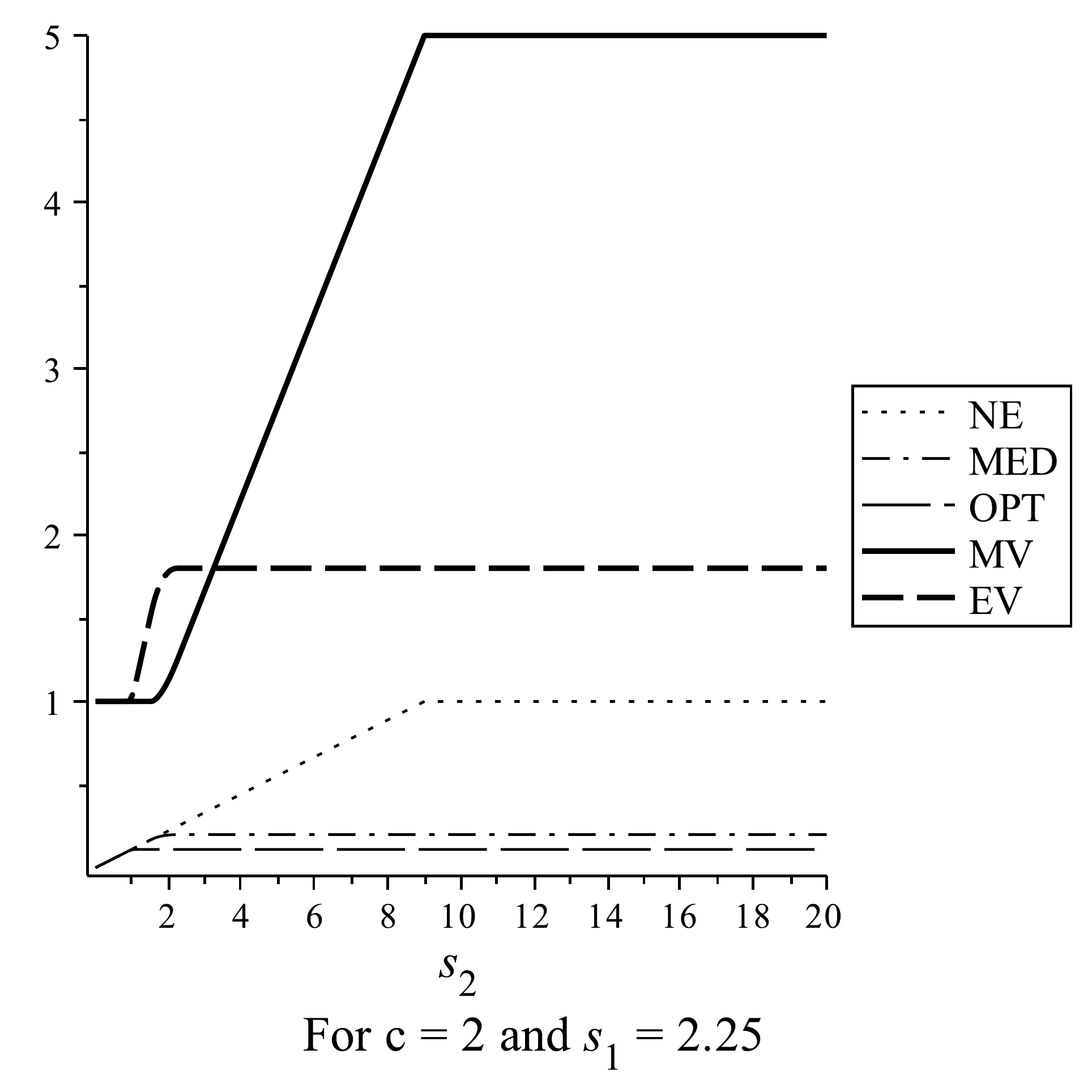}}
\caption{NE, MED, OPT, MV and EV with respect to $s_1$ and $s_2$.}
\label{fig:s1_s2}
\end{figure}

\begin{figure}
\centerline{
\includegraphics[width=0.45\textwidth,natwidth=\plotnatwidth,natheight=\plotnatheight]
{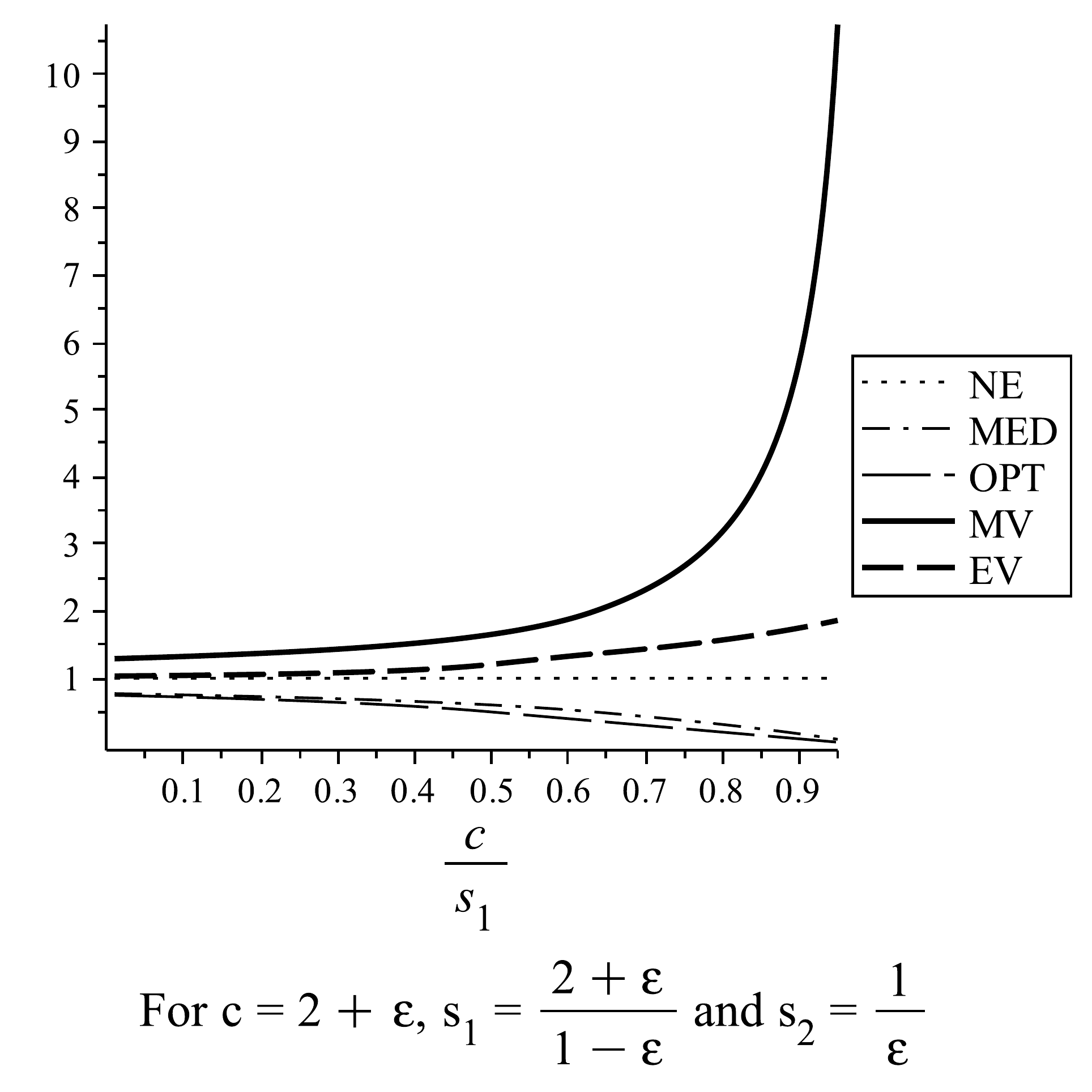}
\includegraphics[width=0.45\textwidth,natwidth=\plotnatwidth,natheight=\plotnatheight]
{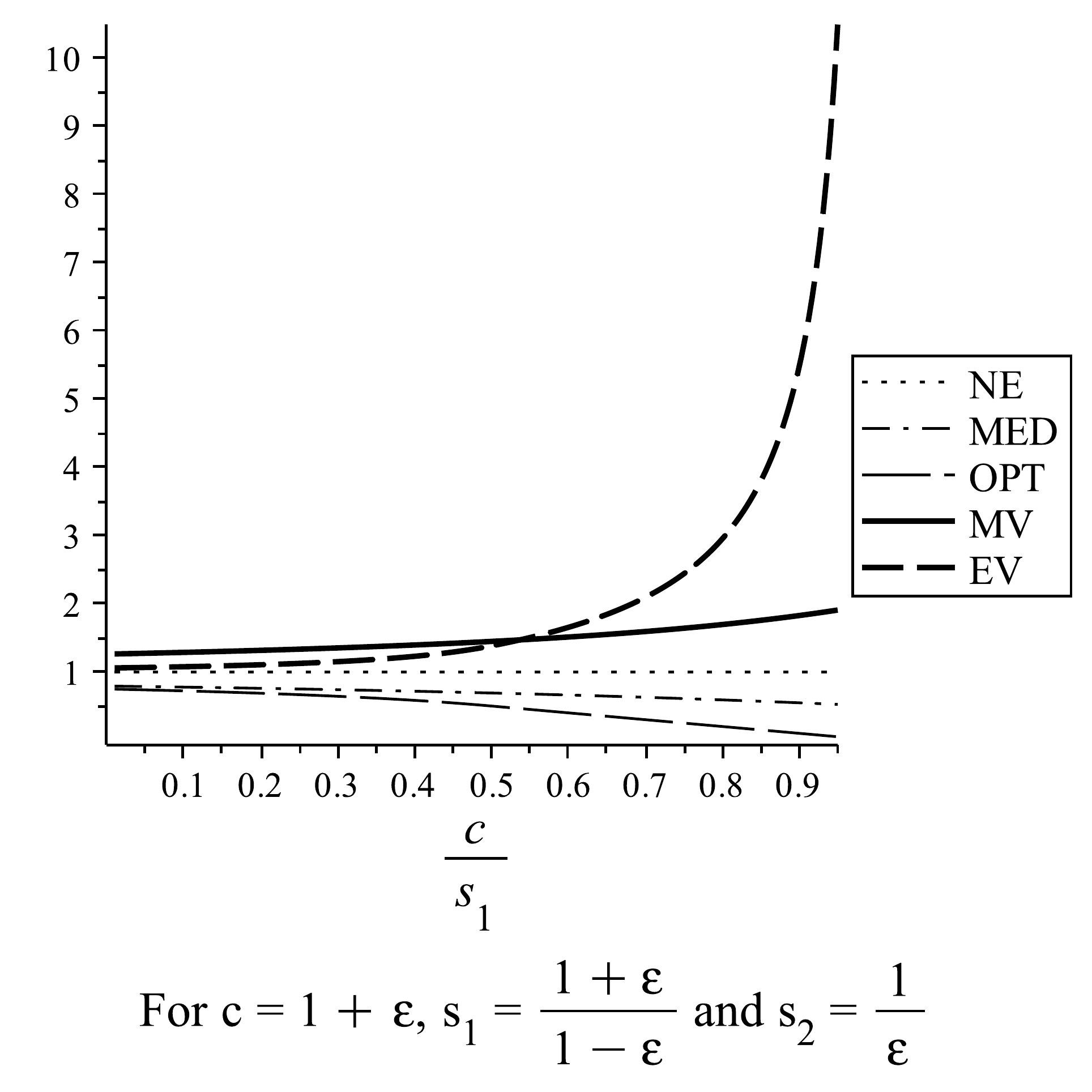}}
\caption{NE, MED, OPT, MV and EV with respect to $c/s_1$.}
\label{fig:c_s1}
\end{figure}

Based on these results, we show in Figures \ref{fig:s1_s2} and \ref{fig:c_s1} the social cost of the best Nash equilibrium (NE), the expected social cost of our optimal mediator (MED) and the optimal social cost (OPT), normalized by $n$, with respect to $s_1$, $s_2$ and $c/s_1$. 
Also we show the corresponding \mv ~(\mvsym) and \ev ~(\evsym).

In Figure \ref{fig:s1_s2}, the left plot shows that for $c=2$ and $s_2=10$, the values of NE, MED, OPT increase, each up to a certain point, when $s_1$ increases; however, the values of \mvsym ~and \evsym ~decrease when $s_1$ increases.
Moreover, \mvsym ~reaches its peak at the point where the best Nash equilibrium starts to remain constant with respect to $s_1$. 
In the right plot, we set $c = 2$ and $s_1 = 2.25$; it shows that the values of NE, MED, OPT, \mvsym ~and \evsym ~increase, each up to a certain point, when $s_2$ increases.

Figure \ref{fig:c_s1} illustrates Corollaries \ref{corollary:unboundedmv} and \ref{corollary:unboundedev}, and it shows how fast \mvsym ~and \evsym ~go to infinity with respect to $c/s_1$, where $c/s_1 = 1 - \epsilon$.
The left plot shows that for any $(2+\epsilon,\frac{2+\epsilon}{1-\epsilon}, \frac{1}{\epsilon})$-El Farol game, as $c/s_1 \to 1$ ($\epsilon \to 0$), $MV \to \infty$ and $EV \to 2$. 
In the right plot, for any $(1+\epsilon,\frac{1+\epsilon}{1-\epsilon}, \frac{1}{\epsilon})$-El Farol game, as $c/s_1 \to 1$ ($\epsilon \to 0$), $EV \to \infty$ and $MV \to 2$.

Note that for any \extelfarol, if $c/s_1 = 1$, then the best correlated equilibrium is at which all players would rather go with a social cost of $0$, that is the best \SpacedNe as well.
Therefore, once $c/s_1$ is equal to $1$, $\mvsym$ drops to $1$.
% and $\evsym$ jumps to the Price of Anarchy.

\subsection*{The cost to stay assumption}\label{section_thecosttostayassumption}

Now we justify our assumption that the cost to stay is unity.
Let $(c',s'_1,s'_2,t')$-El Farol game be a variant of \extelfarol, where $0 < c' < s'_1$, $s' > 0$ and the cost to stay is $t'>0$.
If $x$ is the fraction of players to go, then the cost $f'(x)$ for any player to go is as follows:
$$
f'(x) = \left\{ 
\begin{array}{l l}
  c'- s'_1 x & \quad \mbox{$0 \leq x \leq \frac{c'}{s'_1}$,}\\
  s'_2 (x - \frac{c'}{s'_1}) & \quad \mbox{$\frac{c'}{s'_1} \leq x \leq 1$.}\\ 
\end{array} \right.
$$
The following lemma shows that any $(c',s'_1,s'_2,t')$-El Farol game can be reduced to a \extelfarol .

\begin{lemma}\label{lem:t_normalization}
Any $(c',s'_1,s'_2,t')$-El Farol game can be reduced to a \extelfarolws that has the same \mv ~and \ev, where $c = \frac{c'}{t'}, s_1 = \frac{s'_1}{t'}$ and $s_2 = \frac{s'_2}{t'}$.
\end{lemma}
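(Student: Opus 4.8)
The plan is to exhibit an explicit bijection between the two games that scales every cost by the common positive factor $t'$, and then to observe that both the equilibrium conditions and the mediation metrics are invariant under such a scaling. First I would divide every cost in the $(c',s'_1,s'_2,t')$-El Farol game by $t'$. Under this normalization the cost to stay becomes $t'/t' = 1$, as required, and the key algebraic check is that the cost to go becomes exactly the function $f$ of the $(c,s_1,s_2)$-El Farol game with $c = c'/t'$, $s_1 = s'_1/t'$, $s_2 = s'_2/t'$. This holds because the breakpoint is preserved, $c/s_1 = (c'/t')/(s'_1/t') = c'/s'_1$, and each linear piece scales correctly: for $0 \le x \le c'/s'_1$ we get $f'(x)/t' = c'/t' - (s'_1/t')x = c - s_1 x$, and for $c'/s'_1 \le x \le 1$ we get $f'(x)/t' = (s'_2/t')(x - c'/s'_1) = s_2(x - c/s_1)$.

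Next I would observe that for any configuration distribution $\D$, the social cost in the primed game is exactly $t'$ times the social cost in the unprimed game. Indeed, in a configuration $\conf{x}$ the per-player expected cost is $x f'(x) + (1-x)t' = t'\bigl(x f(x) + (1-x)\bigr)$, and summing over players and over the distribution preserves this common factor $t'$. In particular the identity holds for the optimal social cost, for every Nash equilibrium, and for every mediator.

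The crucial step is to show that the two games have identical sets of Nash and correlated equilibria, so that the minima appearing in the definitions of $\mvsym$ and $\evsym$ range over the same collections (each merely scaled by $t'$). This reduces to noting that the correlated equilibrium inequalities $\EXP{C_S \mid \cE_G} \ge \EXP{C_G \mid \cE_G}$ and $\EXP{C_G \mid \cE_S} \ge \EXP{C_S \mid \cE_S}$ are homogeneous of degree one in the costs; multiplying every cost by $t' > 0$ leaves each inequality, and the analogous best-response condition defining a \Ne, unchanged. Hence a configuration distribution is a correlated (resp.\ Nash) equilibrium for the primed game if and only if it is one for the unprimed game.

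Finally I would combine the two observations: $\mvsym$ and $\evsym$ are each a ratio of two social costs, both extremized over the same equilibrium set in the two games, and both carrying the identical factor $t'$; the factor cancels in every ratio, so the primed and unprimed games share the same $\mvsym$ and $\evsym$. The one point requiring care is the equilibrium-invariance step: one must check that the scaling argument applies uniformly to the conditional-expectation form of the correlated-equilibrium constraints, and not merely to the unconditional social costs, since it is the preservation of the equilibrium sets that legitimizes taking the ratios over the same domain in both games.
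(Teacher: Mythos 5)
Your proof is correct, but it takes a genuinely different route from the paper's. The paper proves this lemma by re-deriving the explicit characterization of the optimal mediator for the four-parameter game ``in a manner similar to Theorem~\ref{thm:optimal}'' --- writing down the analogues $\xlambda'$, $x'$, $p'$, $y'$, $MV'$, $EV'$ --- and then substituting $c=c'/t'$, $s_1=s'_1/t'$, $s_2=s'_2/t'$ to verify term by term that $f'(1)=t'f(1)$, $y'=\ystar$, $\xlambda'=\xlambda$, $x'=\xstar$, $p'=p$, and hence $MV'=MV$ and $EV'=EV$. You instead argue by homogeneity: dividing all costs by $t'$ turns $f'$ into exactly the $f$ of the normalized game (you check the breakpoint and both linear pieces), the correlated- and Nash-equilibrium constraints are degree-one homogeneous in the costs so the equilibrium sets of the two games coincide, every social cost (optimal, best NE, best mediator) carries a uniform factor of $t'$, and the factor cancels in both ratios. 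Your argument is shorter, does not depend on the (appendix-only) proof of Theorem~\ref{thm:optimal} or on the correctness of its four-parameter analogue, and generalizes to any uniform rescaling of costs in any game; what it does not deliver is the explicit closed form of the optimal mediator for the primed game, which the paper's computation produces as a by-product (though the lemma as stated does not require it). You correctly flag the one point needing care --- that the scaling must be checked on the conditional-expectation form of the equilibrium constraints, not just on the aggregate social cost --- and that check goes through since each conditional expectation is a ratio of sums that are linear in the costs.
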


\begin{proof}
%and $\evsym = \frac{\min{(f'(1), t')}}{y'f(y')+(1-y')t'}$
In a manner similar to Theorem~(\ref{thm:optimal}), for any $(c',s'_1,s'_2,t')$-El Farol game, if $c > t'$, then the best correlated equilibrium is $\D\{(\conf{0},p'),(\conf{x'},1-p')\}$, 
where 
$
\xlambda'(c',s'_1,s'_2,t') = c'(\frac{1}{s'_1}+\frac{1}{s'_2})- \sqrt{\frac{c'(\frac{1}{s'_1}+\frac{1}{s'_2})(c'-t')}{s'_2}};
$
$$
x' = \left\{ 
\begin{array}{l l}
  \xlambda'(c',s'_1,s'_2,t') & \quad \mbox{if $\frac{c'}{s'_1}  \leq  \xlambda'(c',s'_1,s'_2,t') < 1$,}\\
  \frac{c'}{s'_1} & \quad \mbox{if $ \xlambda'(c',s'_1,s'_2,t') < \frac{c'}{s'_1}$,}\\ 
  1 & \quad \mbox{otherwise} .\\ 
\end{array} \right.
$$
and 
$
p'=\frac{(1-x')(t'-f(x'))}{(1-x')(t'-f(x'))+c'-t'}.
$
Moreover,
\begin{enumerate}
%\item[1)] the expected social cost is
%$
%(p't'+(1-p')(x'f(x')+(1-x')t'))n;
%$

\item[1)] the Mediation Value ($MV'$) is
$
\frac{\min{(f'(1), t')}}{p't'+(1-p')(x'f(x')+(1-x')t')}
\ and
$
\item[2)] the Enforcement Value ($EV'$) is
$
\frac{p't'+(1-p')(x'f(x')+(1-x')t')}{y'f(y')+(1-y')t'},
$ 
where 
$$
y' = \left\{ 
\begin{array}{l l}
  \frac{1}{2}(\frac{c'}{s'_1}+\frac{t'}{s'_2}) & \quad \mbox{if $\frac{c'}{s'_1}  \leq  \frac{1}{2}(\frac{c'}{s'_1}+\frac{t'}{s'_2}) \leq 1$,}\\
  \frac{c'}{s'_1} & \quad \mbox{if $ \frac{t'}{s'_2} < \frac{c'}{s'_1}$,}\\ 
  1 & \quad \mbox{$ otherwise$.}\\ 
\end{array} \right. .
$$
\end{enumerate}

Similarly, for $c \leq t'$, we have $MV' = 1$ and $EV' = \frac{\min{(f'(1), t')}}{y'f(y')+(1-y')t'}$.

For both cases, by Theorem \ref{thm:optimal}, if we set $c=c'/t'$, $s_1=s'_1/t'$ and $s_2=s'_2/t'$, then we have $f'(1) = f(1) \cdot t'$; also we get $y' = \ystar$ and  $\lambda'(c',s'_1,s'_2,t')  = \lambda(c,s_1,s_2)$.
This implies that $f'(y') = f(\ystar) \cdot t'$ and $x' = \xstar$; which in turn $f'(x') = f(\xstar) \cdot t'$ and $p' = p$.
Thus, we obtain $MV'=MV$ and $EV'=EV$. \qed
\end{proof}

\section{Conclusion}\label{sec:conclusion}
We have extended the traditional \elfarolws to have both negative and positive network effects. We have described an optimal mediator, and we have measured the \mv ~and the \ev ~to completely characterize the benefit of our mediator with respect to the best Nash equilibrium and the optimal social cost.

Several open questions remain including the following: can we generalize our results for our game where the players choose among $k>2$ actions? 
How many configurations are required to design an optimal mediator when there are $k>2$ actions?
Another problem is characterizing the \mvsym ~and \evsym ~values for our game with the more powerful mediators in \cite{Ashlagi:2007,Forgo2010,Forgo2010-2,Monderer:2009,Peleg:2007,RT,RT3,RT2,Tennenholtz:2008}. 
How much would these more powerful mediators reduce the social cost over our type of weaker mediator?

\bibliographystyle{splncs}
\bibliography{elfarol}

% ====================== APPENDIX ===================================

\pagebreak
\appendix
\section{Appendix - Proof of Theorem \ref{thm:optimal}} \label{sec:proofoftheorem}
First of all, we call a mediator over $k$ configurations when the configuration distribution this mediator uses has $k$ configurations.

The proof has four main parts. The first part is \textit{The Reduction of Mediators for $c > 1$}, where we prove that if $c > 1$, then for any optimal mediator over $k > 2$ configurations, there is a mediator over two configurations that has the same social cost. 
The second part is \textit{The Reduction of Mediators for $c \leq 1$}, where we prove that if $c \leq 1$, then the best correlated equilibrium is the best Nash equilibrium. 
The third part is \textit{An Optimal Mediator}, where we describe an optimal mediator for any arbitrary constants $c, s_1$ and $s_2$. 
Finally, the fourth part is \textit{The Mediation Metrics}, where we measure the \mv ~and the \ev .

Recall that $x_i$ is the fraction of players that are advised to go in configuration $\conf{x_i}$ which is selected with probability $p_i$ in a configuration distribution, $\D\{(\conf{x_1},p_1),.., (\conf{x_k},p_k)\}$, for $1\leq i \leq k$. 
We define $\Deltaf{x_i} = 1-f(x_i)$, where $f(x_i)$ is defined in Equation \eqref{eq:fgx}.

\subsection{The Reduction of Mediators for $c > 1$}
In this section, we consider the case that $c >1$. 

\begin{fact}\label{fact:deltas}
For any mediator over $k$ configurations, and for $1\leq i\leq k$,
$\Deltaf{x_i}>0$ iff 
$(\frac{c-1}{s_1} < x_i< \frac{1}{s_2}+\frac{c}{s_1} \ and \ f(1) \geq 1)$ or $(\frac{c-1}{s_1} < x_i \leq 1 \ and \ f(1) < 1)$; and 
$\Deltaf{x_i}<0$ iff 
$0\leq x_i <\frac{c-1}{s_1}$ or $(\frac{1}{s_2}+\frac{c}{s_1}<x_i \leq 1 \ and \ f(1) > 1)$.
\end{fact}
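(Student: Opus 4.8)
The plan is to reduce the two sign conditions on $\Deltaf{x_i}$ to inequalities on $f$ and then read them off from the two linear branches of $f$. Since $\Deltaf{x_i} = 1 - f(x_i)$, we have $\Deltaf{x_i} > 0 \iff f(x_i) < 1$ and $\Deltaf{x_i} < 0 \iff f(x_i) > 1$, so it suffices to pin down, for $x_i \in [0,1]$, exactly where $f(x_i) < 1$ and where $f(x_i) > 1$, using the fact that $c > 1$.

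First I would handle the descending branch $f(x) = c - s_1 x$ on $[0,\frac{c}{s_1}]$. Because $c > 1$, here $f(0) = c > 1$ while $f(\frac{c}{s_1}) = 0 < 1$, and $f$ is strictly decreasing, so it crosses the value $1$ exactly once, at $x = \frac{c-1}{s_1} \in (0,\frac{c}{s_1})$. Hence on this branch $f(x) > 1 \iff 0 \le x < \frac{c-1}{s_1}$ and $f(x) < 1 \iff \frac{c-1}{s_1} < x \le \frac{c}{s_1}$.

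Next I would treat the ascending branch $f(x) = s_2(x - \frac{c}{s_1})$ on $[\frac{c}{s_1},1]$, which is strictly increasing from $f(\frac{c}{s_1}) = 0$ and equals $1$ at $x = \frac{1}{s_2} + \frac{c}{s_1}$. The key observation is the equivalence $f(1) \ge 1 \iff s_2(1 - \frac{c}{s_1}) \ge 1 \iff \frac{1}{s_2} + \frac{c}{s_1} \le 1$, which decides whether this crossing point lies inside the valid domain $[\frac{c}{s_1},1]$. When $f(1) \ge 1$ the crossing point is in the domain, giving $f(x) < 1 \iff \frac{c}{s_1} \le x < \frac{1}{s_2} + \frac{c}{s_1}$ and $f(x) > 1 \iff \frac{1}{s_2} + \frac{c}{s_1} < x \le 1$ (the latter interval being nonempty only when $f(1) > 1$, since $f(1) = 1$ forces $\frac{1}{s_2} + \frac{c}{s_1} = 1$). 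When $f(1) < 1$ the crossing point lies beyond $1$, so $f(x) < 1$ on all of $[\frac{c}{s_1},1]$.

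Finally I would merge the branches. For $\Deltaf{x_i} > 0$: if $f(1) \ge 1$, the union of $(\frac{c-1}{s_1},\frac{c}{s_1}]$ and $[\frac{c}{s_1},\frac{1}{s_2}+\frac{c}{s_1})$ is exactly $(\frac{c-1}{s_1},\frac{1}{s_2}+\frac{c}{s_1})$; if $f(1) < 1$, the union is $(\frac{c-1}{s_1},1]$. For $\Deltaf{x_i} < 0$: the first branch always contributes $[0,\frac{c-1}{s_1})$, and the second contributes $(\frac{1}{s_2}+\frac{c}{s_1},1]$ only when $f(1) > 1$. These are precisely the stated conditions. The only delicate point, and the one I would watch most carefully, is tracking whether the crossing point $\frac{1}{s_2}+\frac{c}{s_1}$ of the ascending branch falls inside $[\frac{c}{s_1},1]$; this is governed entirely by the sign of $f(1) - 1$, so the case split on $f(1)$ is exhaustive and no configuration of $x_i$ is overlooked.
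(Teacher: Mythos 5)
Your proposal is correct and follows essentially the same route as the paper's proof: reduce the sign of $\Deltaf{x_i}=1-f(x_i)$ to the inequalities $f(x_i)<1$ and $f(x_i)>1$, then do a case analysis over the two linear branches of $f$, locating the crossing points $\frac{c-1}{s_1}$ and $\frac{1}{s_2}+\frac{c}{s_1}$ and using $f(1)\geq 1 \Leftrightarrow \frac{1}{s_2}+\frac{c}{s_1}\leq 1$ to decide whether the second crossing lies in the domain. Your write-up is somewhat more explicit than the paper's about why the case split on $f(1)$ is exhaustive, but the argument is the same.
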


\begin{proof}
Recall that $\Deltaf{x_i} = 1-f(x_i)$. Then by Equation (\ref{eq:fgx}), we have

\[
\Deltaf{x_i} = \left\{ 
\begin{array}{l l}
  \Delta_1(x_i) & \quad \mbox{$0 \leq x_i \leq \frac{c}{s_1}$,}\\
  \Delta_2(x_i) & \quad \mbox{$\frac{c}{s_1} \leq x_i \leq 1$.}\\ 
\end{array} \right.
\]
where $\Delta_1(x_i) =1-(c- s_1x_i)$ and $\Delta_2(x_i)=1-s_2 (x_i - \frac{c}{s_1})$. Now we make a case analysis:\\
\textbf{Case 1:} $0\leq x_i\leq \frac{c}{s_1}$:
$\Delta_1(x_i)<0 \Longleftrightarrow 0\leq x_i < \frac{c-1}{s_1}$; and $\Delta_1(x_i)>0 \Longleftrightarrow \frac{c-1}{s_1}<x_i\leq \frac{c}{s_1}$.\\
\textbf{Case 2:} $\frac{c}{s_1} \leq x_i \leq 1$:
$\Delta_2(x_i)>0 \Longleftrightarrow 
(\frac{c}{s_1} \leq x_i< \frac{1}{s_2}+\frac{c}{s_1} \ and \ f(1) \geq 1)$ or $(\frac{c}{s_1} \leq x_i \leq 1 \ and \ f(1) < 1)$; and $\Delta_2(x_i)<0 \Longleftrightarrow (\frac{1}{s_2}+\frac{c}{s_1}<x_i \leq 1 \ and \ f(1) > 1)$. \qed
\end{proof}

\begin{fact}\label{fact:const}
$\D\{(\conf{x_1},p_1),..,(\conf{x_k},p_k)\}$ is a correlated equilibrium iff
\begin{equation}\label{D_kconstraints>0}
\sum_{i=1}^{k} p_ix_i\Deltaf{x_i} \geq 0  
\end{equation}
and
\begin{equation}\label{D_kconstraints<0}
\sum_{i=1}^{k} p_i(1-x_i)\Deltaf{x_i} \leq 0.
\end{equation}
\end{fact}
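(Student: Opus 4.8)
The plan is to translate the two correlated-equilibrium inequalities in the definition directly into the claimed summation form, by evaluating the relevant conditional expectations with Bayes' rule. The whole argument is a computation; the only conceptual input is the symmetry of the players in the non-atomic model.

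First I would fix the posterior over configurations seen by a representative player. Given that configuration $\conf{x_i}$ is selected, a fraction $x_i$ of the players is advised to go, and since the mediator's assignment does not depend on player identity, the representative player is advised to go with probability $x_i$ and to stay with probability $1-x_i$, conditioned on $\conf{x_i}$. Hence $\PR{\cE_G} = \sum_{i=1}^{k} p_i x_i$ and $\PR{\cE_S} = \sum_{i=1}^{k} p_i (1-x_i)$, and the joint probability of being advised to go in configuration $\conf{x_i}$ is $p_i x_i$ (respectively $p_i(1-x_i)$ for stay). By Bayes' rule the posterior is $\PR{\conf{x_i}\mid\cE_G} = p_i x_i / \sum_{j} p_j x_j$, and analogously for $\cE_S$. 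Because all other players conform, the cost to go in configuration $\conf{x_i}$ is exactly $f(x_i)$, so
\[
\EXP{C_G\mid\cE_G} = \frac{\sum_{i=1}^{k} p_i x_i f(x_i)}{\sum_{j=1}^{k} p_j x_j}, \qquad \EXP{C_G\mid\cE_S} = \frac{\sum_{i=1}^{k} p_i (1-x_i) f(x_i)}{\sum_{j=1}^{k} p_j (1-x_j)}.
\]
Since the cost to stay is the constant $1$, we also have $\EXP{C_S\mid\cE_G} = \EXP{C_S\mid\cE_S} = 1$.

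Next I would substitute into the two defining inequalities. Clearing the positive denominator $\sum_j p_j x_j$, the first condition $\EXP{C_S\mid\cE_G}\ge\EXP{C_G\mid\cE_G}$ becomes $\sum_j p_j x_j \ge \sum_i p_i x_i f(x_i)$; rearranging and using $\Deltaf{x_i}=1-f(x_i)$ yields exactly $\sum_i p_i x_i \Deltaf{x_i}\ge 0$, i.e.\ \eqref{D_kconstraints>0}. Likewise, clearing $\sum_j p_j (1-x_j)$, the second condition $\EXP{C_G\mid\cE_S}\ge\EXP{C_S\mid\cE_S}$ becomes $\sum_i p_i (1-x_i) f(x_i)\ge\sum_j p_j (1-x_j)$, which rearranges to $\sum_i p_i (1-x_i)\Deltaf{x_i}\le 0$, i.e.\ \eqref{D_kconstraints<0}. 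Reading the chain of equivalences in both directions gives the ``iff''.

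The main obstacle is the probabilistic bookkeeping rather than any deep step: one must justify $\PR{\cE_G\mid\conf{x_i}}=x_i$ from symmetry in the non-atomic setting, and handle the denominators carefully when multiplying through. In particular I would note that $\sum_j p_j x_j>0$ and $\sum_j p_j(1-x_j)>0$, so the inequalities are preserved under clearing denominators; in the degenerate case where one of these sums vanishes, the corresponding conditioning event $\cE_G$ (resp.\ $\cE_S$) has probability zero, the associated constraint is vacuous, and the claimed inequality holds trivially as $0\ge 0$ (resp.\ $0\le 0$).
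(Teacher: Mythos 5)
Your proof is correct and follows essentially the same route as the paper: compute $\EXP{C_G\mid\cE_G}$ and $\EXP{C_G\mid\cE_S}$ as the ratios $\sum_i p_i x_i f(x_i)/\sum_j p_j x_j$ and $\sum_i p_i(1-x_i)f(x_i)/\sum_j p_j(1-x_j)$, note the cost to stay is $1$, and clear denominators to obtain the two constraints. Your added care about the Bayes-rule justification of the posteriors and the degenerate case of a vanishing denominator goes slightly beyond what the paper writes out, but does not change the argument.
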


\begin{proof}
Recall that $\cE^i_G$ is the event that the mediator advises player $i$ to go, $C^i_G$ is the cost for player $i$ to go, $\cE^i_S$ is the event that the mediator advises player $i$ to stay, and $C^i_S$ is the cost for player $i$ to stay. Also we will omit the index $i$ since the players are symmetric.

By definition, $\D\{(\conf{x_1},p_1),..,(\conf{x_k},p_k)\}$ is a correlated equilibrium iff 
\begin{eqnarray*}
\EXP{C_S |\cE_G} \geq \EXP{C_G |\cE_G}, \\
\EXP{C_G |\cE_S} \geq \EXP{C_S |\cE_S}.
\end{eqnarray*}
Note that:
\begin{eqnarray*}\label{eq:stay_go}
\EXP{C_S |\cE_G} = 1,
\end{eqnarray*}
\begin{eqnarray*}\label{eq:go_go}
\EXP{C_G |\cE_G} =  \frac{\sum_{i=1}^{k} p_i f(x_i) x_i}{\sum_{i=1}^{k} p_i x_i},
\end{eqnarray*}
\begin{eqnarray*}\label{eq:go_stay}
\EXP{C_G |\cE_S} =  \frac{\sum_{i=1}^{k} p_i f(x_i) (1-x_i)}{\sum_{i=1}^{k} p_i (1-x_i)}
\end{eqnarray*}
and
\begin{eqnarray*}\label{eq:stay_stay}
E(C_S |\cE_S) =  1.
\end{eqnarray*}

Therefore, $\D\{(\conf{x_1},p_1),..,(\conf{x_k},p_k)\}$ is a correlated equilibrium iff
\begin{eqnarray}\label{eqn:const1}
\frac{\sum_{i=1}^{k} p_i f(x_i) x_i}{\sum_{i=1}^{k} p_i x_i}\leq 1
\end{eqnarray}
and 
\begin{eqnarray}\label{eqn:const2}
\frac{\sum_{i=1}^{k} p_i f(x_i) (1-x_i)}{\sum_{i=1}^{k} p_i (1-x_i)} \geq  1.
\end{eqnarray}
By rearranging Inequalities  \eqref{eqn:const1} and \eqref{eqn:const2}, we have
\begin{eqnarray*}
\sum_{i=1}^{k} p_i x_i (1-f(x_i))\geq 0
\end{eqnarray*}
and 
\begin{eqnarray*}
\sum_{i=1}^{k} p_i (1-x_i) (1-f(x_i)) \leq  0.
\end{eqnarray*} \qed
\end{proof}

\begin{fact}\label{fact:socialcost}
The expected social cost of $\D\{(\conf{x_1},p_1),..,(\conf{x_k},p_k)\}$ is
\[
(1 - \sum_{i=1}^{k} p_i x_i \Deltaf{x_i})n.
\]
\end{fact}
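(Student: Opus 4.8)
The plan is to compute the expected social cost directly by first evaluating the (deterministic) social cost within a single fixed configuration and then averaging over the distribution. For this I rely only on the definition of a configuration, the cost function $f$ from Equation~\eqref{eq:fgx}, the standing assumption that the cost to stay is $1$, and the definition $\Deltaf{x_i}=1-f(x_i)$ introduced just before this lemma.

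First I would fix a configuration $\conf{x_i}$ and assume all players conform to the advice. Then exactly $x_i n$ players are advised to go, each incurring cost $f(x_i)$, and $(1-x_i)n$ players are advised to stay, each incurring cost $1$. Summing, the social cost of $\conf{x_i}$ is
\[
x_i n\, f(x_i) + (1-x_i)n = \bigl(x_i f(x_i) + (1-x_i)\bigr)n .
\]
Next I would take the expectation over the configuration distribution, which is simply the $p_i$-weighted average of these per-configuration costs:
\[
\sum_{i=1}^{k} p_i \bigl(x_i f(x_i) + (1-x_i)\bigr)n .
\]

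The only algebraic step is to substitute $f(x_i) = 1 - \Deltaf{x_i}$, which gives
\[
x_i f(x_i) + (1-x_i) = x_i\bigl(1-\Deltaf{x_i}\bigr) + (1-x_i) = 1 - x_i \Deltaf{x_i}.
\]
Plugging this back in and using $\sum_{i=1}^{k} p_i = 1$ (from the definition of a configuration distribution) yields
\[
\sum_{i=1}^{k} p_i \bigl(1 - x_i \Deltaf{x_i}\bigr)n = \Bigl(1 - \sum_{i=1}^{k} p_i x_i \Deltaf{x_i}\Bigr)n,
\]
which is exactly the claimed expression.

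I do not anticipate a genuine obstacle here: the statement is a direct bookkeeping computation rather than a structural argument. The only points requiring care are invoking the cost-to-stay $=1$ assumption when tallying the stayers' contribution, correctly splitting the $n$ players into the $x_i n$ who go and the $(1-x_i)n$ who stay, and remembering that within a fixed configuration the social cost is deterministic so that the ``expectation'' reduces to the $p_i$-weighted sum over the $k$ configurations.
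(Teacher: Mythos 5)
Your proof is correct and follows essentially the same route as the paper's: express the per-configuration cost as $(x_i f(x_i)+(1-x_i))n = (1-x_i\Deltaf{x_i})n$, take the $p_i$-weighted average, and use $\sum_{i=1}^{k}p_i=1$. No issues.
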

\begin{proof}
Let $\Csc{\conf{x_i}}$ be the cost of configuration $\conf{x_i}$ in $\D\{(\conf{x_1},p_1),..,\\(\conf{x_k},p_k)\}$, for $1\leq i \leq k$.
We know that the expected social cost of $\D\{(\conf{x_1},p_1)\\,..,(\conf{x_k},p_k)\}$ is
\[ 
\sum_{i=1}^{k} p_i \Csc{\conf{x_i}}.
\]
We have $\Csc{\conf{x_i}}=(x_if(x_i)+(1-x_i))n$, and since $\Deltaf{x_i}=1-f(x_i)$, it follows that $\Csc{\conf{x_i}}=(1-x_i\Deltaf{x_i})n$. Therefore, the expected social cost is
\[ 
\sum_{i=1}^{k} p_i (1-x_i\Deltaf{x_i})n,
\]
or equivalently,
\[ 
(\sum_{i=1}^{k} p_i - \sum_{i=1}^{k}p_ix_i\Deltaf{x_i})n.
\]
Finally, we note that $\sum_{i=1}^{k} p_i=1$. \qed
\end{proof}

\begin{fact}\label{fact:signedDelta_i}
For any optimal mediator over $k\geq 2$ configurations, $\Deltaf{x_i}\neq 0$ for all $1\leq i\leq k$,
and $\Deltaf{x_u}>0$ and $\Deltaf{x_v}<0$ for some $1\leq u,v \leq k$.
\end{fact}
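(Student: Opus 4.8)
The plan is to reformulate the problem as a constrained optimization and then rule out each forbidden configuration by exhibiting a strictly cheaper feasible mediator, contradicting optimality. By Fact~\ref{fact:socialcost} the expected social cost of $\D\{(\conf{x_1},p_1),\dots,(\conf{x_k},p_k)\}$ equals $(1-S)n$ where $S:=\sum_{i=1}^k p_i x_i \Deltaf{x_i}$, and by Fact~\ref{fact:const} being a correlated equilibrium is exactly $S\ge 0$ together with $T:=\sum_{i=1}^k p_i(1-x_i)\Deltaf{x_i}\le 0$. Hence an optimal mediator is precisely one maximizing $S$ subject to $S\ge 0$ and $T\le 0$, and the whole statement becomes a claim about the support of such a maximizer.

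First I would establish that every optimal mediator has $S>0$. Since $c>1$, we have $\Deltaf{0}=1-c<0$, and by Fact~\ref{fact:deltas} (using $0<\frac{c-1}{s_1}<\frac{c}{s_1}\le 1$) there is some $x_0\in(0,1]$ with $\Deltaf{x_0}>0$. The distribution $\D\{(\conf{0},p_1),(\conf{x_0},p_2)\}$ gives $S=p_2 x_0\Deltaf{x_0}>0$, while $T=p_1\Deltaf{0}+p_2(1-x_0)\Deltaf{x_0}\to\Deltaf{0}<0$ as $p_1\to 1$; so for $p_1$ close enough to $1$ it is feasible with $S>0$. Thus the optimum is strictly positive, which immediately yields a configuration with $\Deltaf{x_u}>0$: if every $\Deltaf{x_i}\le 0$ then $S=\sum_i p_i x_i\Deltaf{x_i}\le 0$, contradicting $S>0$.

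Next I would prove the existence of a negative coordinate. Suppose, for contradiction, that every $\Deltaf{x_i}\ge 0$. Then each summand of $T$ is nonnegative, so $T\ge 0$; combined with $T\le 0$ this forces $T=0$ and hence $p_i(1-x_i)\Deltaf{x_i}=0$ for all $i$, i.e. each configuration has $\Deltaf{x_i}=0$ or $x_i=1$. Consequently $S$ collapses to the single term from a configuration at $x_i=1$, so $S=p_{i_0}\Deltaf{1}$ with $\Deltaf{1}>0$ (in particular $f(1)<1$) and $p_{i_0}<1$ because $k\ge 2$. But then $\D\{(\conf{0},1-p_2),(\conf{1},p_2)\}$ with any $p_2\in(p_{i_0},1)$ is feasible ($T=(1-p_2)\Deltaf{0}<0$) and has $S=p_2\Deltaf{1}>p_{i_0}\Deltaf{1}$, strictly cheaper, a contradiction. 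Finally, to show no $\Deltaf{x_i}=0$: if some $\Deltaf{x_j}=0$, deleting $\conf{x_j}$ and rescaling the remaining probabilities by $1/(1-p_j)$ multiplies both $S$ and $T$ by $1/(1-p_j)>1$, preserving feasibility while strictly increasing $S$. For $k\ge 3$ this is again a valid mediator, a contradiction; the residual case $k=2$ with a zero coordinate reduces to the all-nonnegative situation already excluded (when the partner configuration has positive $\Delta$) or gives $S\le 0$ (when the partner has negative $\Delta$), both impossible.

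The main obstacle is the bookkeeping at the boundary of the feasible set: guaranteeing that each constructed improvement is itself a legal mediator with at least two distinct configurations and still satisfies \eqref{D_kconstraints>0}--\eqref{D_kconstraints<0}. The two delicate points are (i) that the positive-$\Delta$ configuration used to raise $S$ does not break $T\le 0$, which is handled by pairing it with the $\Deltaf{0}<0$ configuration that exists only because $c>1$; and (ii) the $k=2$ degenerate cases, where deletion would drop below two configurations and so must be folded back into the all-nonnegative analysis rather than treated by deletion. Once these boundary cases are organized as above, the three claims follow uniformly from the maximality of $S$.
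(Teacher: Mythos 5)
Your proof is correct, and it is organized genuinely differently from the paper's. The paper first disposes of zero-delta configurations by deleting $\conf{x_j}$ and rescaling the remaining probabilities by $1/(1-p_j)$, and then reads off the mixed signs directly from Constraints \eqref{D_kconstraints>0} and \eqref{D_kconstraints<0} together with the observation that at most one configuration can sit at $x=0$ and at most one at $x=1$. You instead first pin down that the optimal value of $S=\sum_i p_ix_i\Deltaf{x_i}$ is strictly positive, by exhibiting the explicit feasible pair $\D\{(\conf{0},p_1),(\conf{x_0},p_2)\}$ with $p_1$ near $1$ (this is exactly where $c>1$, hence $\Deltaf{0}=1-c<0$, enters), and then derive all three claims from $S>0$ plus local improvements. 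Your route buys two things the paper glosses over: the deletion-and-rescaling step only \emph{strictly} lowers the cost when $\sum_{i\neq j}p_ix_i\Deltaf{x_i}>0$, which your $S>0$ lemma supplies but the paper never verifies (if that sum were zero the rescaled mediator would have the \emph{same} cost and no contradiction would arise); and for $k=2$ deletion would leave a single configuration, which is not a legal configuration distribution under the paper's definition, a degenerate case you correctly fold back into the all-nonnegative analysis instead. One microscopic omission: in the all-nonnegative case you assert that $S$ collapses to $p_{i_0}\Deltaf{1}$ for a configuration at $x_{i_0}=1$; if no configuration sits at $1$ then $S=0$, which already contradicts $S>0$, so the conclusion stands, but that subcase deserves a clause.
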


\begin{proof}
First we show that for any optimal mediator over $k\geq2$ configurations, $\Deltaf{x_i}$ is non-zero for all $1\leq i\leq k$.
Assume by way of contradiction that there is an optimal mediator $M_k$ that uses $\D\{(\conf{x_1},p_1),..,(\conf{x_k},p_k)\}$, and there is some $1\leq j\leq k$ such that $\Deltaf{x_j} = 0$.
Recall that $0 < p_j < 1$.
Now let $\D\{(\conf{x_1},\frac{p_1}{1-p_j}),..,(\conf{x_{j-1}},\frac{p_{j-1}}{1-p_j}),(\conf{x_{j+1}},\frac{p_{j+1}}{1-p_j}),..,(\conf{x_k},\frac{p_{k}}{1-p_j})\}$ be a configuration distribution over $k-1$ configurations.

Since $M_k$ is a mediator and $\Deltaf{x_j} = 0$, Constraints \eqref{D_kconstraints>0} and \eqref{D_kconstraints<0} of \Fact \ref{fact:const} imply that
\[
\sum_{1\leq i \leq k, i\neq j} p_ix_i\Deltaf{x_i} \geq 0  
\]
and
\[
\sum_{1\leq i \leq k, i\neq j} p_i(1-x_i)\Deltaf{x_i} \leq 0.
\]
Now if we multiply both sides of these two constraints by $\frac{1}{1-p_j}$, we have
\[
\sum_{1\leq i \leq k, i\neq j} \frac{p_i}{1-p_j} x_i\Deltaf{x_i}\geq 0  
\]
and
\[
\sum_{1\leq i \leq k, i\neq j} \frac{p_i}{1-p_j}(1-x_i)\Deltaf{x_i}\leq 0.
\]
By \Fact \ref{fact:const}, $\D\{(\conf{x_1},\frac{p_1}{1-p_j}),..,(\conf{x_{j-1}},\frac{p_{j-1}}{1-p_j}),(\conf{x_{j+1}},\frac{p_{j+1}}{1-p_j}),..,(\conf{x_k},\frac{p_{k}}{1-p_j})\}$ is a correlated equilibrium. Let $M_{k-1}$ be a mediator that uses this correlated equilibrium.
By \Fact \ref{fact:socialcost}, the expected social cost of $M_{k-1}$ is 
\[
(1-\frac{1}{1-p_j}\sum_{1\leq i \leq k, i\neq j} p_ix_i\Deltaf{x_i})n,
\]
and since $\Deltaf{x_j} = 0$, the expected social cost of $M_k$ is 
\[
(1-\sum_{1\leq i \leq k, i\neq j} p_ix_i\Deltaf{x_i})n.
\]
We know that $0<p_j<1$ implies $\frac{1}{1-p_j}>1$.
Therefore, the expected social cost $M_{k-1}$ is less than the expected social cost of $M_k$. This contradicts the fact that $M_k$ is optimal.\\
Recall that $0<p_i<1$ and $0\leq x_i\leq 1$ for all $1\leq i \leq k$.
By \Fact \ref{fact:const}, Constraint \eqref{D_kconstraints>0} implies that there exists $u$ such that $\Deltaf{x_u}>0$ for $1\leq u \leq k$. Similarly, Constraint \eqref{D_kconstraints<0} implies that there exists $v$ such that $\Deltaf{x_v}<0$ for $1\leq v \leq k$. \qed
\end{proof}

\begin{fact}\label{fact:implication}
Any optimal mediator that uses $\D\{(\conf{x_1},p_1),..,(\conf{x_j},p_j),..,\\(\conf{x_k},p_k)\}$, where $k\geq 2$, has
\[
\sum_{1\leq i \leq k, i\neq j} p_i\Deltaf{x_i}(x_i-x_j) \geq 0, 1\leq j\leq k.
\]
\end{fact}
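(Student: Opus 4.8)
The plan is to observe that the inequality is a direct algebraic consequence of the two correlated-equilibrium constraints \eqref{D_kconstraints>0} and \eqref{D_kconstraints<0} of \Fact \ref{fact:const}, together with the bound $0 \le x_j \le 1$; in fact the optimality of the mediator is never needed, since an optimal mediator is in particular a mediator and hence its configuration distribution is a correlated equilibrium. I would first abbreviate $A = \sum_{i=1}^{k} p_i x_i \Deltaf{x_i}$ and $B = \sum_{i=1}^{k} p_i (1-x_i)\Deltaf{x_i}$, so that \eqref{D_kconstraints>0} reads $A \ge 0$ and \eqref{D_kconstraints<0} reads $B \le 0$. A useful preliminary identity is $\sum_{i=1}^{k} p_i \Deltaf{x_i} = A + B$, obtained by writing $\Deltaf{x_i} = x_i \Deltaf{x_i} + (1-x_i)\Deltaf{x_i}$ and summing.

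Next I would rewrite the target sum by separating the index $i = j$ from the full sums. Expanding $\sum_{i \neq j} p_i \Deltaf{x_i}(x_i - x_j)$ as $\sum_{i\neq j} p_i x_i \Deltaf{x_i} - x_j \sum_{i\neq j} p_i \Deltaf{x_i}$ and replacing each restricted sum by the corresponding full sum minus its $j$-th term gives
\[
\sum_{i \neq j} p_i \Deltaf{x_i}(x_i - x_j) = \big(A - p_j x_j \Deltaf{x_j}\big) - x_j\big((A+B) - p_j \Deltaf{x_j}\big).
\]
The crucial point, which I expect to be the only thing worth checking carefully, is that the two stray terms involving $p_j x_j \Deltaf{x_j}$ cancel, collapsing the right-hand side to the clean expression $A(1-x_j) - x_j B$.

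Finally I would read off the sign. Since $A \ge 0$ and $0 \le x_j \le 1$ we have $A(1-x_j) \ge 0$, and since $B \le 0$ and $x_j \ge 0$ we have $-x_j B \ge 0$; hence the sum is nonnegative, as claimed, for every $1 \le j \le k$. The main (and rather mild) obstacle here is conceptual rather than computational: recognizing that the statement does \emph{not} require any perturbation or exchange argument exploiting optimality, but follows purely from correlated-equilibrium feasibility once the $j$-th term is split off and the cancellation is spotted.
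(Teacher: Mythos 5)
Your proof is correct, but it takes a genuinely different route from the paper's. The paper first invokes \Fact\ref{fact:signedDelta_i} (which relies on optimality) to ensure $\Deltaf{x_j}\neq 0$, then splits into the cases $\Deltaf{x_j}<0$ and $\Deltaf{x_j}>0$, derives an upper and a lower bound on $p_j$ from the two constraints of \Fact\ref{fact:const}, and eliminates $p_j$ by chaining the bounds and cancelling; this implicitly divides by $x_j\,|\Deltaf{x_j}|$ and $(1-x_j)\,|\Deltaf{x_j}|$, so it is cleanest when $x_j\notin\{0,1\}$. You instead set $A=\sum_i p_ix_i\Deltaf{x_i}\ge 0$ and $B=\sum_i p_i(1-x_i)\Deltaf{x_i}\le 0$, use the identity $\sum_i p_i\Deltaf{x_i}=A+B$, split off the $j$-th term, and observe that the two stray $p_jx_j\Deltaf{x_j}$ terms cancel, leaving
\[
\sum_{1\leq i \leq k,\, i\neq j} p_i\Deltaf{x_i}(x_i-x_j) \;=\; A(1-x_j)-x_jB \;\ge\; 0,
\]
which is immediate from $A\ge 0$, $B\le 0$, $0\le x_j\le 1$. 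Your argument buys several things: it needs no case analysis, no sign information on $\Deltaf{x_j}$ (hence no appeal to \Fact\ref{fact:signedDelta_i} or to optimality at all), and no division, so the boundary cases $x_j=0$, $x_j=1$, and $\Deltaf{x_j}=0$ are handled uniformly; it also establishes the inequality for \emph{every} mediator, not just optimal ones, which is strictly stronger than the stated claim. The paper's version, by contrast, makes explicit the interpretation of the two constraints as a feasibility window for $p_j$, which is the intuition it reuses elsewhere, but as a proof of this particular lemma yours is tighter and more general.
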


\begin{proof}
Let $M_k$ be an optimal mediator that uses $\D\{(\conf{x_1},p_1),..,(\conf{x_j},p_j),..,\\(\conf{x_k},p_k)\}$. 
We know by \Fact \ref{fact:signedDelta_i} that any configuration, $\conf{x_j}$, has either $\Deltaf{x_j}<0$ or $\Deltaf{x_j}>0$, for $1\leq j \leq k$. 
Now fix any $1\leq j \leq k$, and do a case analysis for $\Deltaf{x_j}$.\\
\textbf{Case 1: } If $\Deltaf{x_j}<0$, then by repeated application of \Fact \ref{fact:const} we have
\begin{equation}\label{tmp:compatible11}
\frac{\sum_{1\leq i \leq k, i\neq j} p_i\Deltaf{x_i}(1-x_i)}{(1-x_j)|\Deltaf{x_j}|} 
\leq 
p_j
\leq 
\frac{\sum_{1\leq i \leq k, i\neq j} p_i\Deltaf{x_i}x_i}{x_j|\Deltaf{x_j}|}
\end{equation}
Removing $p_j$ from Inequality \eqref{tmp:compatible11} and rearranging, we get
\[
x_j|\Deltaf{x_j}|\sum_{1\leq i \leq k, i\neq j} p_i\Deltaf{x_i}(1-x_i) 
\leq 
(1-x_j)|\Deltaf{x_j}|\sum_{1\leq i \leq k, i\neq j} p_i\Deltaf{x_i}x_i.
\]
By canceling the common terms, we have
\[
\sum_{1\leq i \leq k, i\neq j} p_i\Deltaf{x_i}x_j 
\leq 
\sum_{1\leq i \leq k, i\neq j} p_i\Deltaf{x_i}x_i.
\]
\textbf{Case 2: } If $\Deltaf{x_j}>0$, then similarly by repeated application of \Fact \ref{fact:const} we have
\begin{equation}\label{tmp:compatible21}
\frac{-\sum_{1\leq i \leq k, i\neq j} p_i\Deltaf{x_i}x_i}{x_j\Deltaf{x_j}}
 \leq 
p_j
\leq 
\frac{-\sum_{1\leq i \leq k, i\neq j} p_i\Deltaf{x_i}(1-x_i)}{(1-x_j)\Deltaf{x_j}}
\end{equation}
Removing $p_j$ from Inequality \eqref{tmp:compatible21} and rearranging, we get
\[
x_j\Deltaf{x_j}\sum_{1\leq i \leq k, i\neq j} p_i\Deltaf{x_i}(1-x_i) \leq (1-x_j)\Deltaf{x_j}\sum_{1\leq i \leq k, i\neq j} p_i\Deltaf{x_i}x_i.
\]
By canceling the common terms, we have
\[
\sum_{1\leq i \leq k, i\neq j} p_i\Deltaf{x_i}x_j \leq \sum_{1\leq i \leq k, i\neq j} p_i\Deltaf{x_i}x_i.
\]
Since $j$ is any value between $1$ and $k$, this implies the statement of the lemma for every such $j$. \qed
\end{proof}

\begin{fact}\label{fact:delta<0-x=0}
Consider any mediator $M_k$ that uses $\D\{(\conf{x_1},p_1),..,(\conf{x_j},p_j),..,\\(\conf{x_k},p_k)\}$, where $0<x_j< \frac{c-1}{s_1}$. Then there exists a mediator $M'_k$ of less expected social cost, which uses $\D\{(\conf{x_1},p_1),..,(\conf{x'_j},p_j),..,(\conf{x_k},p_k)\}$, where $x'_j=0$.
\end{fact}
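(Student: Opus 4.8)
The plan is to show that the modified distribution $\D\{(\conf{x_1},p_1),\dots,(\conf{x'_j},p_j),\dots,(\conf{x_k},p_k)\}$ with $x'_j=0$ is again a correlated equilibrium (so it defines a legitimate mediator $M'_k$) and that its expected social cost is strictly smaller than that of $M_k$. Only the $j$-th configuration changes, so the three earlier Facts (\ref{fact:deltas}, \ref{fact:const}, \ref{fact:socialcost}) apply term-by-term and carry the whole argument.

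First I would fix the signs of the $\Delta$ values in play. Since $c>1$ and $0<x_j<\frac{c-1}{s_1}$, \Fact \ref{fact:deltas} gives $\Deltaf{x_j}<0$, and $\Deltaf{0}=1-f(0)=1-c<0$. I would also record the explicit form $\Deltaf{x_j}=1-c+s_1x_j$, valid in the regime $x_j\le \frac{c}{s_1}$.

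Next I would check the two correlated-equilibrium constraints of \Fact \ref{fact:const}. Passing from $x_j$ to $0$ replaces the $j$-th summand $p_jx_j\Deltaf{x_j}$ of Constraint \eqref{D_kconstraints>0} by $0$; since the replaced term was negative, the left-hand side only increases and hence stays $\ge 0$. For Constraint \eqref{D_kconstraints<0}, the summand $p_j(1-x_j)\Deltaf{x_j}$ is replaced by $p_j\Deltaf{0}$, and I would compute
\[
p_j\Deltaf{0}-p_j(1-x_j)\Deltaf{x_j}=p_jx_j\big((1-c)-s_1(1-x_j)\big),
\]
which is negative because $c>1$ and $x_j<1$ (note $\frac{c-1}{s_1}<\frac{c}{s_1}<1$ since $c<s_1$). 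Thus the left-hand side of \eqref{D_kconstraints<0} decreases and remains $\le 0$, so $M'_k$ is indeed a valid mediator.

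Finally, by \Fact \ref{fact:socialcost} the expected social cost is $(1-\sum_i p_ix_i\Deltaf{x_i})n$; the $j$-th summand of the inner sum rises from the negative value $p_jx_j\Deltaf{x_j}$ to $0$, so the sum strictly increases and the expected social cost strictly decreases, which is the claim. The one step that genuinely requires work is preserving Constraint \eqref{D_kconstraints<0}: the other constraint and the cost comparison are immediate from $\Deltaf{x_j}<0$, whereas this one hinges on the short computation above together with the hypothesis $x_j<\frac{c-1}{s_1}$ (equivalently $\Deltaf{x_j}<0$) to guarantee the correct sign.
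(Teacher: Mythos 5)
Your proposal is correct and follows essentially the same route as the paper's proof: establish $\Deltaf{x_j}<0$ via \Fact \ref{fact:deltas}, check the two constraints of \Fact \ref{fact:const} term-by-term after replacing the $j$-th configuration, and compare costs via \Fact \ref{fact:socialcost}. The only cosmetic difference is in verifying Constraint \eqref{D_kconstraints<0}: you compute the change in the $j$-th summand explicitly from the linear formula for $\Deltaf{\cdot}$, whereas the paper argues it by monotonicity ($\Deltaf{0}<\Deltaf{x_j}<0$ and $1>1-x_j>0$ imply $\Deltaf{0}<(1-x_j)\Deltaf{x_j}$); both yield the same strict decrease.
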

\begin{proof}
Let $M_k$ be a mediator that uses $\D\{(\conf{x_1},p_1),..,(\conf{x_j},p_j),..,(\conf{x_k},p_k)\}$, where $0<x_j< \frac{c-1}{s_1}$. 
By \Fact \ref{fact:const}, we have

\begin{equation}\label{eq:fact:delta<0-x=0-const1}
p_jx_j\Deltaf{x_j} + \sum_{1\leq i\leq k, i\neq j} p_ix_i\Deltaf{x_i} \geq 0  
\end{equation}
and

\begin{equation}\label{eq:fact:delta<0-x=0-const2}
p_j(1-x_j)\Deltaf{x_j} + \sum_{1\leq i\leq k, i\neq j} p_i(1-x_i)\Deltaf{x_i} \leq 0  .
\end{equation}

Since $0<x_j< \frac{c-1}{s_1}$, by \Fact \ref{fact:deltas}, $\Deltaf{x_j}<0$.
Now let $\D\{(\conf{x_1},p_1),..,\\(\conf{x'_j},p_j),..,(\conf{x_k},p_k)\}$ be a configuration distribution that has $x'_j=0$. 
Thus, we have $p_jx'_j\Deltaf{x'_j}=0$ and $p_jx_j\Deltaf{x_j}<0$. 
By Inequality \eqref{eq:fact:delta<0-x=0-const1}, we have 
\begin{equation}\label{eq:med1}
p_jx'_j\Deltaf{x'_j} + \sum_{1\leq i\leq k, i\neq j} p_ix_i\Deltaf{x_i} > 0.
\end{equation}
We know that $\Deltaf{x'_j} <\Deltaf{x_j} < 0$ and $(1-x'_j) > (1-x_j) > 0$, so we have $(1-x'_j)\Deltaf{x'_j}<(1-x_j)\Deltaf{x_j}$. 
By Inequality \eqref{eq:fact:delta<0-x=0-const2}, we get
\begin{equation}\label{eq:med2}
p_j\Deltaf{x'_j} + \sum_{1\leq i\leq k, i\neq j} p_i(1-x_i)\Deltaf{x_i} < 0. 
\end{equation}

Now by \Fact \ref{fact:const} and Inequalities \eqref{eq:med1} and \eqref{eq:med2}, $\D\{(\conf{x_1},p_1),..,(\conf{x'_j},p_j)\\,..,(\conf{x_k},p_k)\}$ is a correlated equilibrium. Let $M'_k$ be a mediator that uses this correlated equilibrium. 
By \Fact \ref{fact:socialcost}, and since $x'_j=0$, the expected social cost of $M'_k$ is
 \[
(1 - \sum_{1\leq i\leq k, i\neq j} p_i x_i \Deltaf{x_i})n.
\]
Moreover, by \Fact \ref{fact:socialcost}, the expected social cost of $M_k$ is 
\[
((1 - \sum_{1\leq i\leq k, i\neq j} p_i x_i \Deltaf{x_i}) - p_jx_j\Deltaf{x_j})n.
\]
Since $\Deltaf{x_j}<0$ and $x_j>0$, the expected social cost of $M'_k$ is less than the expected social cost of $M_k$. \qed
\end{proof}

\begin{fact}\label{fact:delta>0-x=c/s_1}
For $f(1) \geq 1$, consider any mediator $M_k$ that uses $\D\{(\conf{x_1},p_1),..,\\(\conf{x_j},p_j) , .. , (\conf{x_k},p_k)\}$, where $\frac{c-1}{s_1}<x_j < \frac{c}{s_1}$. 
Then there exists a mediator $M'_k$ of less expected social cost, which uses $\D\{(\conf{x_1},p_1),..,(\conf{x'_j},p_j),..,(\conf{x_k},p_k)\}$, where $\frac{c}{s_1} < x'_j< \frac{c}{s_1}+\frac{1}{s_2}$ and $f(x'_j)=f(x_j)$.
\end{fact}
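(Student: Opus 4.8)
The plan is to mirror the relocation argument of \Fact \ref{fact:delta<0-x=0}: keep the probability $p_j$ fixed, move the single configuration $\conf{x_j}$ to a new position $\conf{x'_j}$ lying on the increasing branch of $f$, and then check that the result is still a correlated equilibrium of strictly smaller expected social cost.

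First I would fix the sign of $\Deltaf{x_j}$. Since $\frac{c-1}{s_1}<x_j<\frac{c}{s_1}$, \Fact \ref{fact:deltas} gives $\Deltaf{x_j}>0$; explicitly $f(x_j)=c-s_1x_j\in(0,1)$ on this sub-interval. This forces the choice of $x'_j$: I would take the unique point of the second branch with the same cost, $x'_j=\frac{c}{s_1}+\frac{f(x_j)}{s_2}$. Since $f(x_j)\in(0,1)$ we get $\frac{c}{s_1}<x'_j<\frac{c}{s_1}+\frac{1}{s_2}$ and $f(x'_j)=f(x_j)$, hence $\Deltaf{x'_j}=\Deltaf{x_j}>0$. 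The single inequality driving the whole argument is $x'_j>x_j$, which is immediate because $x'_j>\frac{c}{s_1}>x_j$.

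Next I would split the two equilibrium constraints of \Fact \ref{fact:const} into the term indexed by $j$ and the remaining sum, as in \Fact \ref{fact:delta<0-x=0}. Because $\Deltaf{x'_j}=\Deltaf{x_j}$, passing from $x_j$ to $x'_j$ alters only the factors $x_j$ and $1-x_j$. In Constraint \eqref{D_kconstraints>0} the term $p_jx_j\Deltaf{x_j}$ grows to $p_jx'_j\Deltaf{x_j}$ (as $x'_j>x_j$ and $\Deltaf{x_j}>0$), so the left-hand side remains $\geq 0$; in Constraint \eqref{D_kconstraints<0} the factor $1-x_j$ strictly decreases to $1-x'_j$, so the term $p_j(1-x_j)\Deltaf{x_j}$ strictly decreases and the left-hand side remains $\leq 0$. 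Thus $M'_k$ is a valid mediator. Finally, by \Fact \ref{fact:socialcost} its cost is $(1-\sum_{i} p_ix_i\Deltaf{x_i})n$, and since $x'_j\Deltaf{x_j}>x_j\Deltaf{x_j}$ this sum strictly increases, so $M'_k$ costs strictly less than $M_k$.

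I expect the main obstacle to be the simultaneous preservation of both equilibrium inequalities: reducing cost by moving a configuration could in principle break one of them. Here it does not, precisely because moving rightward along the increasing branch helps \emph{both} at once --- it enlarges $x\Deltaf{x}$ (good for the go-constraint) while shrinking $(1-x)\Deltaf{x}$ (good for the stay-constraint). The one place the hypothesis $f(1)\geq 1$ is genuinely needed is to guarantee that the target interval is admissible: $f(1)\geq 1$ is equivalent to $\frac{c}{s_1}+\frac{1}{s_2}\leq 1$, so $x'_j\leq 1$ and the second-branch formula for $f$ legitimately applies at $x'_j$.
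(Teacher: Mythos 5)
Your proposal is correct and follows essentially the same route as the paper's proof: relocate $\conf{x_j}$ to the point $x'_j$ on the increasing branch with $f(x'_j)=f(x_j)$ (hence $\Deltaf{x'_j}=\Deltaf{x_j}>0$), observe that $x'_j>x_j$ preserves both constraints of \Fact \ref{fact:const}, and conclude via \Fact \ref{fact:socialcost} that the cost strictly decreases. Your added remark pinpointing where $f(1)\geq 1$ is needed (to ensure $x'_j\leq 1$) is a nice clarification the paper leaves implicit.
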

\begin{proof}
Let $M_k$ be a mediator that uses $\D\{(\conf{x_1},p_1),..,(\conf{x_j},p_j),..,(\conf{x_k},p_k)\}$, where $\frac{c-1}{s_1}<x_j < \frac{c}{s_1}$. 
By \Fact \ref{fact:const}, we have

\begin{equation}\label{fact:delta>0-x=c/s_1-const1}
p_jx_j\Deltaf{x_j} + \sum_{1\leq i\leq k, i\neq j} p_ix_i\Deltaf{x_i} \geq 0  
\end{equation}
and

\begin{equation}\label{fact:delta>0-x=c/s_1-const2}
p_j(1-x_j)\Deltaf{x_j} + \sum_{1\leq i\leq k, i\neq j} p_i(1-x_i)\Deltaf{x_i} \leq 0  .
\end{equation}

Recall that $\frac{c-1}{s_1}<x_j < \frac{c}{s_1}$ and $f(1)\geq 1$. 
Then $\exists x'_j:$ $\frac{c}{s_1} < x'_j< \frac{c}{s_1}+\frac{1}{s_2}$ and $f(x'_j)=f(x_j)$.
Now let $\D\{(\conf{x_1},p_1),..,(\conf{x'_j},p_j),..,(\conf{x_k},p_k)\}$ be a configuration distribution. 
Since $f(x'_j)=f(x_j)$, $\Delta(x'_j) = \Delta(x_j)$.
We know that $x'_j>x_j$, then $x'_j\Deltaf{x'_j}>x_j\Deltaf{x_i}$.
By Inequality \eqref{fact:delta>0-x=c/s_1-const1}, we obtain 
\[
p_jx'_j\Deltaf{x'_j} + \sum_{1\leq i\leq k, i\neq j} p_ix_i\Deltaf{x_i} > 0. 
\]
Since $(1-x'_j)<(1-x_j)$, we have $(1-x'_j)\Deltaf{x'_j}<(1-x_j)\Deltaf{x_j}$.
By Inequality \eqref{fact:delta>0-x=c/s_1-const2}, we get
\[
p_j(1-x'_j)\Deltaf{x'_j} + \sum_{1\leq i\leq k, i\neq j} p_i(1-x_i)\Deltaf{x_i} < 0  .
\]
Now by \Fact \ref{fact:const}, $\D\{(\conf{x_1},p_1),..,(\conf{x'_j},p_j),..,(\conf{x_k},p_k)\}$ is a correlated equilibrium. 
Let $M'_k$ be a mediator that uses this correlated equilibrium. 
By \Fact \ref{fact:socialcost}, the expected social cost of $M'_k$ is
 \[
((1 - \sum_{1\leq i\leq k, i\neq j} p_i x_i \Deltaf{x_i}) - p_jx'_j\Deltaf{x'_j})n,
\]
and the expected social cost of $M_k$ is 
\[
((1 - \sum_{1\leq i\leq k, i\neq j} p_i x_i \Deltaf{x_i}) - p_jx_j\Deltaf{x_j})n.
\]
Since $p_jx'_j\Deltaf{x'_j}>p_jx_j\Deltaf{x_i}$, the expected social cost of $M'_k$ is less than the expected social cost of $M_k$. \qed
\end{proof}

\begin{fact}\label{fact:f(1)<1,f(x)<=f(1),(c-1)/s_1<x<c/s_1}
For $f(1) < 1$, consider any mediator $M_k$ that uses $\D\{(\conf{x_1},p_1),..,\\(\conf{x_j},p_j) , .. , (\conf{x_k},p_k)\}$, where $\frac{c-1}{s_1} < x_j < \frac{c}{s_1}$ and $f(x_j) \leq f(1)$. 
Then there exists a mediator $M'_k$ of less expected social cost, which uses $\D\{(\conf{x_1},p_1),..,\\(\conf{x'_j},p_j),..,(\conf{x_k},p_k)\}$, where $\frac{c}{s_1} < x'_j \leq 1$ and $f(x'_j)=f(x_j)$. 
\end{fact}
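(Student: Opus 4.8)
The plan is to mirror the proof of \Fact \ref{fact:delta>0-x=c/s_1}, adapting it to the regime $f(1) < 1$. The structure is identical: I replace the configuration $\conf{x_j}$, which sits on the decreasing branch, by a configuration $\conf{x'_j}$ on the increasing branch with the same individual cost, argue that this preserves the correlated-equilibrium constraints, and show it strictly lowers the social cost. The only genuinely new ingredient is verifying that the replacement point $x'_j$ exists, and this is exactly where the extra hypothesis $f(x_j) \leq f(1)$ enters.

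First I would record the sign of $\Deltaf{x_j}$. Since $\frac{c-1}{s_1} < x_j < \frac{c}{s_1}$, Equation \eqref{eq:fgx} gives $0 < f(x_j) < 1$, so $\Deltaf{x_j} = 1 - f(x_j) > 0$ (equivalently, since $f(1) < 1$ and $\frac{c-1}{s_1} < x_j \leq 1$, this follows directly from \Fact \ref{fact:deltas}). Next comes the crucial step: on the increasing branch $[\frac{c}{s_1}, 1]$ the function $f$ rises continuously and strictly from $f(\frac{c}{s_1}) = 0$ to $f(1)$. Because $0 < f(x_j) \leq f(1)$ by hypothesis, the intermediate value theorem yields some $x'_j \in (\frac{c}{s_1}, 1]$ with $f(x'_j) = f(x_j)$, and hence $\Deltaf{x'_j} = \Deltaf{x_j} > 0$. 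I expect this to be the main obstacle, in the sense that it is the one place where the construction would break without the assumption $f(x_j) \leq f(1)$: if that inequality failed, the target value $f(x_j)$ would exceed the maximum value $f(1)$ attained on $[\frac{c}{s_1}, 1]$, and no admissible $x'_j \leq 1$ would exist.

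With $x'_j$ in hand, the rest is routine. Since $x'_j > \frac{c}{s_1} > x_j$ and $\Deltaf{x'_j} = \Deltaf{x_j} > 0$, I would deduce $x'_j \Deltaf{x'_j} > x_j \Deltaf{x_j}$ and $(1-x'_j)\Deltaf{x'_j} < (1-x_j)\Deltaf{x_j}$. Substituting these into Constraints \eqref{D_kconstraints>0} and \eqref{D_kconstraints<0} written for $M_k$ shows that replacing $x_j$ by $x'_j$ keeps the first sum $\geq 0$ (in fact strictly positive) and the second sum $\leq 0$ (in fact strictly negative). Therefore, by \Fact \ref{fact:const}, the distribution $\D\{(\conf{x_1},p_1),..,(\conf{x'_j},p_j),..,(\conf{x_k},p_k)\}$ is again a correlated equilibrium; let $M'_k$ be the mediator that uses it.

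Finally, I would compare the costs using \Fact \ref{fact:socialcost}. The expected social costs of $M_k$ and $M'_k$ differ only in the single term $p_j x_j \Deltaf{x_j}$ versus $p_j x'_j \Deltaf{x'_j}$, which is subtracted from $1$. Since $p_j x'_j \Deltaf{x'_j} > p_j x_j \Deltaf{x_j}$, a strictly larger quantity is subtracted for $M'_k$, so its expected social cost is strictly less than that of $M_k$, completing the proof.
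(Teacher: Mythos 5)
Your proposal is correct and follows essentially the same route as the paper: the paper's proof likewise first asserts the existence of $x'_j \in (\frac{c}{s_1},1]$ with $f(x'_j)=f(x_j)$ (which is where the hypothesis $f(x_j)\leq f(1)$ is used, exactly as you identify), and then defers the remaining constraint-preservation and cost-comparison steps to the argument of \Fact \ref{fact:delta>0-x=c/s_1}. Your write-up simply makes those deferred steps explicit, and they check out.
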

\begin{proof}
We know that for $\frac{c-1}{s_1}<x_j < \frac{c}{s_1}$ and $f(x_j) \leq f(1) < 1$, $\exists x'_j:$ $\frac{c}{s_1} < x'_j \leq 1$ and $f(x'_j)=f(x_j)$.
In a manner similar to the proof of Lemma \ref{fact:delta>0-x=c/s_1}, we prove this Lemma.
\qed
\end{proof}

\begin{fact}\label{fact:f(1)<1,f(x)>f(1),(c-1)/s_1<x<c/s_1}
For $f(1) < 1$, consider any mediator $M_k$ that uses $\D\{(\conf{x_1},p_1),..,\\(\conf{x_j},p_j) , .. , (\conf{x_k},p_k)\}$, where $\frac{c-1}{s_1} < x_j < \frac{c}{s_1}$ and $f(x_j) > f(1)$. 
Then there exists a mediator $M'_k$ of less expected social cost, which uses $\D\{(\conf{x_1},p_1),..,\\(\conf{x'_j},p_j),..,(\conf{x_k},p_k)\}$, where $x'_j = 1$.
\end{fact}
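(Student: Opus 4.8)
The plan is to reuse the single-configuration shifting argument of \Fact~\ref{fact:delta>0-x=c/s_1}, but now push $x_j$ all the way to the right endpoint $x'_j=1$ instead of to a cost-matching interior point. The reason for this change is exactly the hypothesis $f(x_j)>f(1)$: on the segment $[\frac{c}{s_1},1]$ the function $f$ increases from $0$ up to $f(1)$, so no point of that segment attains the value $f(x_j)$, and the best we can do is move to $x'_j=1$, where $f$ is smallest. First I would record the relevant signs. Since $\frac{c-1}{s_1}<x_j<\frac{c}{s_1}$ we have $f(x_j)=c-s_1x_j\in(0,1)$, so $\Deltaf{x_j}=1-f(x_j)>0$ (consistent with \Fact~\ref{fact:deltas} for the case $f(1)<1$); and because $f(x_j)>f(1)$ we get $\Deltaf{x'_j}=\Deltaf{1}=1-f(1)>1-f(x_j)=\Deltaf{x_j}>0$. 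Combining $x'_j=1>x_j$ with $\Deltaf{1}>\Deltaf{x_j}>0$ yields the one inequality that drives the whole proof, namely $x'_j\Deltaf{x'_j}=\Deltaf{1}>x_j\Deltaf{x_j}$.

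Next I would verify that $\D\{(\conf{x_1},p_1),..,(\conf{x'_j},p_j),..,(\conf{x_k},p_k)\}$ is still a correlated equilibrium, using the two constraints of \Fact~\ref{fact:const} for $M_k$ as the starting point. For the ``go'' constraint~\eqref{D_kconstraints>0}, replacing the $j$-th summand $p_jx_j\Deltaf{x_j}$ by $p_jx'_j\Deltaf{x'_j}$ only increases the left-hand side (by the inequality above), so it remains $\geq 0$. For the ``stay'' constraint~\eqref{D_kconstraints<0}, the $j$-th summand changes from $p_j(1-x_j)\Deltaf{x_j}$, which is strictly positive since $1-x_j>0$ and $\Deltaf{x_j}>0$, to $p_j(1-x'_j)\Deltaf{x'_j}=0$ because $x'_j=1$; dropping a positive term makes the sum smaller, so it remains $\leq 0$. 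Hence both constraints are preserved and $M'_k$ is a legitimate mediator.

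Finally I would compare expected social costs via \Fact~\ref{fact:socialcost}: the cost of a mediator is $(1-\sum_i p_ix_i\Deltaf{x_i})n$, and the two mediators agree on every term except the $j$-th, where $M_k$ contributes $p_jx_j\Deltaf{x_j}$ and $M'_k$ contributes the strictly larger $p_jx'_j\Deltaf{x'_j}$; therefore $\sum_i p_ix_i\Deltaf{x_i}$ is strictly larger for $M'_k$, and its expected social cost is strictly smaller. The argument is essentially mechanical once the signs are pinned down, so there is no real obstacle; the only point requiring care --- and the sole place where the hypothesis $f(x_j)>f(1)$ (rather than $f(x_j)\le f(1)$, as in \Fact~\ref{fact:f(1)<1,f(x)<=f(1),(c-1)/s_1<x<c/s_1}) is used --- is the simultaneous monotonicity needed for $x'_j\Deltaf{x'_j}>x_j\Deltaf{x_j}$, which here comes not from matching $f$-values but from the strict drop $\Deltaf{1}>\Deltaf{x_j}$ together with $x'_j>x_j$.
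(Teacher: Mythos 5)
Your proof is correct and follows essentially the same route as the paper's: replace $x_j$ by $x'_j=1$, use $\Deltaf{1}>\Deltaf{x_j}>0$ together with $x'_j>x_j$ to get $x'_j\Deltaf{x'_j}>x_j\Deltaf{x_j}$, check both constraints of \Fact~\ref{fact:const} exactly as you do (the stay-constraint term dropping to $0$), and conclude via \Fact~\ref{fact:socialcost}. No discrepancies to report.
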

\begin{proof}
Let $M_k$ be a mediator that uses $\D\{(\conf{x_1},p_1),..,(\conf{x_j},p_j),..,(\conf{x_k},p_k)\}$, where $\frac{c-1}{s_1}<x_j < \frac{c}{s_1}$. 
By \Fact \ref{fact:const}, we have

\begin{equation}\label{fact:f(1)<1,f(x)>f(1),(c-1)/s_1<x<c/s_1-const1}
p_jx_j\Deltaf{x_j} + \sum_{1\leq i\leq k, i\neq j} p_ix_i\Deltaf{x_i} \geq 0  
\end{equation}
and

\begin{equation}\label{fact:f(1)<1,f(x)>f(1),(c-1)/s_1<x<c/s_1-const2}
p_j(1-x_j)\Deltaf{x_j} + \sum_{1\leq i\leq k, i\neq j} p_i(1-x_i)\Deltaf{x_i} \leq 0  .
\end{equation}

Now let $\D\{(\conf{x_1},p_1),..,(\conf{x'_j},p_j),..,(\conf{x_k},p_k)\}$ be a configuration distribution, where $x'_j = 1$. 
For $f(x_j) > f(x'_j)$ and $f(x'_j) < 1$, $\Delta(x'_j) > \Delta(x_j)$.
We know that $x'_j>x_j$, then $x'_j\Deltaf{x'_j}>x_j\Deltaf{x_i}$.
By Inequality \eqref{fact:f(1)<1,f(x)>f(1),(c-1)/s_1<x<c/s_1-const1}, we obtain 
\[
p_jx'_j\Deltaf{x'_j} + \sum_{1\leq i\leq k, i\neq j} p_ix_i\Deltaf{x_i} > 0. 
\]
Since $(1-x'_j) = 0$, $(1-x_j) > 0$ and $\Delta(x_j) > 0$, $(1-x'_j)\Deltaf{x'_j}<(1-x_j)\Deltaf{x_j}$.
By Inequality \eqref{fact:f(1)<1,f(x)>f(1),(c-1)/s_1<x<c/s_1-const2}, we get
\[
p_j(1-x'_j)\Deltaf{x'_j} + \sum_{1\leq i\leq k, i\neq j} p_i(1-x_i)\Deltaf{x_i} < 0  .
\]
Now by \Fact \ref{fact:const}, $\D\{(\conf{x_1},p_1),..,(\conf{x'_j},p_j),..,(\conf{x_k},p_k)\}$ is a correlated equilibrium. 
Let $M'_k$ be a mediator that uses this correlated equilibrium. 
By \Fact \ref{fact:socialcost}, the expected social cost of $M'_k$ is
 \[
((1 - \sum_{1\leq i\leq k, i\neq j} p_i x_i \Deltaf{x_i}) - p_jx'_j\Deltaf{x'_j})n,
\]
and the expected social cost of $M_k$ is 
\[
((1 - \sum_{1\leq i\leq k, i\neq j} p_i x_i \Deltaf{x_i}) - p_jx_j\Deltaf{x_j})n.
\]
Since $p_jx'_j\Deltaf{x'_j}>p_jx_j\Deltaf{x_i}$, the expected social cost of $M'_k$ is less than the expected social cost of $M_k$. \qed
\end{proof}

\begin{lemma}\label{lem:onedelta<0-on-slope-s1}
For any \extelfarol , any optimal mediator that uses $\D\{(\conf{x_1},p_1),..,(\conf{x_k},p_k)\}$ has exactly one configuration that has no players advised to go, and any other configuration has at least a $\frac{c}{s_1}$-fraction of players advised to go.
\end{lemma}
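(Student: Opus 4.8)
My plan is to first determine, for $c>1$, the only possible locations of the configurations of an optimal mediator, and then to count how many of them sit at $x=0$. Recall from \Fact \ref{fact:deltas} that, since $c>1$, the quantity \Deltaf{x_i} is negative on $[0,\frac{c-1}{s_1})$ and (when $f(1)>1$) on $(\frac{c}{s_1}+\frac{1}{s_2},1]$, positive on the intermediate range, and zero only at the break points $\frac{c-1}{s_1}$ and $\frac{c}{s_1}+\frac{1}{s_2}$ (the latter only when $f(1)\geq 1$). The configurations with \Deltaf{x_i}>0 are the ones that lower the social cost, through the term $p_ix_i\Deltaf{x_i}$ in \Fact \ref{fact:socialcost}, while at least one configuration with \Deltaf{x_i}<0 is forced on us to satisfy the stay-side constraint \eqref{D_kconstraints<0}.

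First I would pin down the support. By \Fact \ref{fact:signedDelta_i}, an optimal mediator has \Deltaf{x_i}\neq 0 for every $i$, so no configuration sits at $\frac{c-1}{s_1}$ (nor at $\frac{c}{s_1}+\frac{1}{s_2}$). By \Fact \ref{fact:delta<0-x=0} no configuration lies in $(0,\frac{c-1}{s_1})$, and by \Fact \ref{fact:delta>0-x=c/s_1} (for $f(1)\geq 1$) together with \Fact \ref{fact:f(1)<1,f(x)<=f(1),(c-1)/s_1<x<c/s_1} and \Fact \ref{fact:f(1)<1,f(x)>f(1),(c-1)/s_1<x<c/s_1} (for $f(1)<1$) no configuration lies in $(\frac{c-1}{s_1},\frac{c}{s_1})$; otherwise each of these Lemmas would produce a strictly cheaper mediator. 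Hence every configuration of an optimal mediator satisfies $x_i=0$ or $x_i\geq \frac{c}{s_1}$, which already gives the second assertion of the Lemma. Moreover, the definition of a configuration distribution forces the $x_i$ to be pairwise distinct, so at most one configuration has $x_i=0$.

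It remains to show that at least one configuration has $x_i=0$. By \Fact \ref{fact:signedDelta_i} some configuration has \Deltaf{x_v}<0, and by the support result it is either at $x_v=0$ (and we are done) or at $x_v>\frac{c}{s_1}+\frac{1}{s_2}$ with $f(1)>1$. Assuming for contradiction that no configuration is at $0$, all the negative pull-down needed for \eqref{D_kconstraints<0} is supplied by configurations $x_v>\frac{c}{s_1}+\frac{1}{s_2}$, each of which simultaneously drags the objective \eqref{D_kconstraints>0} down through its term $p_vx_v\Deltaf{x_v}<0$. I would then build a cheaper correlated equilibrium that reroutes this pull-down through a single zero configuration: replace the negative configuration(s) by one configuration at $x=0$ and rescale the probabilities of the remaining (positive) configurations so that the total probability stays $1$ and \eqref{D_kconstraints<0} remains satisfied. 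A zero configuration contributes $(1-0)\Deltaf{0}=1-c<0$ to \eqref{D_kconstraints<0} but $0\cdot\Deltaf{0}=0$ to the cost, so by \Fact \ref{fact:socialcost} the resulting mediator is strictly cheaper, contradicting optimality.

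The main obstacle is exactly this last construction. Since a zero configuration delivers pull-down $c-1$ per unit probability whereas a configuration at $x_v$ delivers $(1-x_v)(f(x_v)-1)$, the mass cannot be transferred one-for-one, and the compensating rescaling of the positive configurations changes both constraints at once. The delicate point is to choose the rescaling factor so that \eqref{D_kconstraints<0} stays (tightly) satisfied while the objective \eqref{D_kconstraints>0} strictly increases; here I expect to need the optimality of the starting mediator, in the form of the compatibility inequalities of \Fact \ref{fact:implication}, to certify that the rerouted mediator is genuinely cheaper rather than merely feasible.
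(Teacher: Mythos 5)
Your support analysis is sound and matches the paper's second half: \Facts \ref{fact:signedDelta_i}, \ref{fact:delta<0-x=0}, \ref{fact:delta>0-x=c/s_1}, \ref{fact:f(1)<1,f(x)<=f(1),(c-1)/s_1<x<c/s_1} and \ref{fact:f(1)<1,f(x)>f(1),(c-1)/s_1<x<c/s_1} do confine every configuration of an optimal mediator to $\{0\}\cup[\frac{c}{s_1},1]$, and distinctness of the $x_i$ gives ``at most one'' at $0$. The gap is in the step you yourself flag: showing that a configuration at $0$ actually exists. Your plan --- delete the configurations with $\Deltaf{x_v}<0$ sitting above $\frac{c}{s_1}+\frac{1}{s_2}$, insert a single configuration at $0$, and rescale the remaining probabilities so that Constraint \eqref{D_kconstraints<0} is still met while the cost drops --- is never carried out, and it is genuinely delicate: the rescaling perturbs both constraints and the objective simultaneously, and nothing you wrote pins down a factor that works. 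As it stands the existence claim, which is the heart of the lemma, is unproved.

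The fix is that no new mediator needs to be constructed at this point. \Fact \ref{fact:implication} is a \emph{necessary condition} already satisfied by every optimal mediator: $\sum_{r\neq j} p_r\Deltaf{x_r}(x_r-x_j)\geq 0$ for every index $j$. Under your contradiction hypothesis (no configuration in $[0,\frac{c-1}{s_1})$, hence by \Fact \ref{fact:signedDelta_i} every negative-$\Delta$ configuration lies in $(\frac{c}{s_1}+\frac{1}{s_2},1]$ and $f(1)>1$), let $x_q$ be the smallest configuration exceeding $\frac{c}{s_1}+\frac{1}{s_2}$. Then for every $r\neq q$ either $x_r<x_q$ and $\Deltaf{x_r}>0$, or $x_r>x_q$ and $\Deltaf{x_r}<0$; in both cases $p_r\Deltaf{x_r}(x_r-x_q)<0$, so the sum is strictly negative and \Fact \ref{fact:implication} is violated --- a direct contradiction with optimality. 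This forces a configuration in $[0,\frac{c-1}{s_1})$, and \Fact \ref{fact:delta<0-x=0} then pushes it to exactly $0$. So you identified the right tool but used it in the wrong role (as a certificate for a rerouted mediator rather than as a violated optimality condition), and the construction you were trying to build is unnecessary.
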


\begin{proof}
Let $M_k$ be an optimal mediator over $k\geq 2$ configurations that uses $\D\{(\conf{x_1},p_1),..,(\conf{x_j},p_j),..,(\conf{x_k},p_k)\}$. 
First we prove that $M_k$ must have a configuration where no players advised to go.
We know by \Fact \ref{fact:signedDelta_i} that there exists some $1\leq j\leq k$ where $\Deltaf{x_j} < 0$. By \Fact \ref{fact:deltas}, $\Deltaf{x_j}<0$ iff $(x_j \in (1/s_2 + c/s_1,1] \ and \ f(1) > 1)$ or $x_j \in [0,\frac{c-1}{s_1})$. Now we do a case analysis for $x_j$.\\
\textbf{Case 1:} $x_j \in (1/s_2 + c/s_1,1] \ and \ f(1) > 1$.
Assume by way of contradiction that $M_k$ has no configuration that has less than a $\frac{c-1}{s_1}$-fraction of players advised to go. 
Let $x_q$ be the smallest fraction that is $x_q>1/s_2 + c/s_1$, where $1\leq q\leq k$.
By \Facts \ref{fact:deltas} and \ref{fact:signedDelta_i}, for $1\leq r\leq k$,
\[\Deltaf{x_r} = \left\{ 
\begin{array}{l l}
  >0 & \quad \mbox{if $x_r<x_q$,}\\
  <0 & \quad \mbox{otherwise}.\\ 
\end{array} \right. \]
Note that by the definition of the configuration distribution, if $x_r = x_q$ then $r=q$.
Therefore, we have
\begin{equation}\label{eq:not-a-mediator}
\sum_{1\leq r \leq k, r\neq q} p_r\Deltaf{x_r}(x_r-x_q)  <  0.
\end{equation}
By \Fact \ref{fact:implication}, Inequality \eqref{eq:not-a-mediator} contradicts that $M_k$ is an optimal mediator. Thus, $M_k$ must have a configuration, $\conf{x}$, where $x<\frac{c-1}{s_1}$, and the rest of the argument is as in Case 2.\\
\textbf{Case 2:} $x_j \in [0,\frac{c-1}{s_1})$. 
By \Fact \ref{fact:delta<0-x=0}, and since $M_k$ is an optimal mediator, $x_j=0$.

By the definition of the configuration distribution, $M_k$ has no two configurations that have the same fraction of players that are advised to go. 
So $M_k$ has exactly one configuration, over all the $k$ configurations, that has no players advised to go.
%fact:delta>0-x=c/s_1

We know that $\Delta(\frac{c-1}{s_1}) = 0$. 
By \Fact \ref{fact:signedDelta_i}, there is no optimal mediator that has a configuration $\conf{\frac{c-1}{s_1}}$.
Now since $M_k$ is an optimal mediator, by \Facts \ref{fact:delta<0-x=0}, \ref{fact:delta>0-x=c/s_1}, \ref{fact:f(1)<1,f(x)<=f(1),(c-1)/s_1<x<c/s_1} and \ref{fact:f(1)<1,f(x)>f(1),(c-1)/s_1<x<c/s_1}, $M_k$ has no configuration in which an $x$-fraction of players is advised to go, where $x \in (0,\frac{c}{s_1})$. \qed
\end{proof}

\begin{fact}\label{fact:x=py+(1-p)z}
For any $\D\{(\conf{x_1},p_1),..,(\conf{x_i},p_i),..,(\conf{x_j},p_j),..,(\conf{x_k},p_k)\}$, and for any arbitrary $x_i$ and $x_j$ such that $x_j>x_i\geq \frac{c}{s_1}$, 
there exists $\D\{(\conf{x_1},p_1),\\..,(\conf{x_{i-1}},p_{i-1}),(\conf{x_{i+1}},p_{i+1}),..,
(\conf{x'_j},p_i+p_j),..,
(\conf{x_k},p_k)\}$, where $x'_j=\frac{p_i}{p_i+p_j}x_i+\frac{p_j}{p_i+p_j}x_j$. Moreover,

\[
1) \ (p_i+p_j)x'_j\Deltaf{x'_j}>p_ix_i\Deltaf{x_i}+p_jx_j\Deltaf{x_j}.
\]
\[
2) \ (p_i+p_j)(1-x'_j)\Deltaf{x'_j} < p_i(1-x_i)\Deltaf{x_i}+p_j(1-x_j)\Deltaf{x_j}.
\]
\end{fact}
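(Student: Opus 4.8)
The plan is to exploit the fact that, on the interval $[\frac{c}{s_1},1]$, the function $\Deltaf{x}=1-f(x)=1-s_2(x-\frac{c}{s_1})$ is \emph{affine} in $x$ with slope $-s_2<0$. First I would observe that since $\frac{c}{s_1}\leq x_i<x_j\leq 1$, the point $x'_j=\frac{p_i}{p_i+p_j}x_i+\frac{p_j}{p_i+p_j}x_j$ is a convex combination of $x_i$ and $x_j$ and therefore also lies in $[\frac{c}{s_1},1]$. Hence the \emph{same} affine expression for $\Deltaf{\cdot}$ applies simultaneously at $x_i$, $x_j$ and $x'_j$, which is exactly what makes the two inequalities clean to verify. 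It also confirms that $\D\{\ldots,(\conf{x'_j},p_i+p_j),\ldots\}$ is a genuine configuration distribution, since the probabilities still sum to $1$ and $x'_j$ is a valid fraction.

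Write $\Deltaf{x}=A-s_2x$ with $A=1+s_2\frac{c}{s_1}$, and abbreviate the weights $\rho=\frac{p_i}{p_i+p_j}$, $\sigma=\frac{p_j}{p_i+p_j}$, so that $\rho+\sigma=1$ and $x'_j=\rho x_i+\sigma x_j$. For part (1), after dividing through by $p_i+p_j$ it suffices to show $x'_j\Deltaf{x'_j}>\rho\,x_i\Deltaf{x_i}+\sigma\,x_j\Deltaf{x_j}$. Since $x\Deltaf{x}=Ax-s_2x^2$, the linear terms match (because $\rho x_i+\sigma x_j=x'_j$) and the difference reduces to $s_2\bigl(\rho x_i^2+\sigma x_j^2-(x'_j)^2\bigr)$. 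The bracket is the familiar variance identity $\rho x_i^2+\sigma x_j^2-(\rho x_i+\sigma x_j)^2=\rho\sigma(x_i-x_j)^2$, which is strictly positive because $\rho,\sigma\in(0,1)$ and $x_i\neq x_j$; multiplying by $s_2>0$ yields the strict inequality, and re-multiplying by $p_i+p_j$ restores the stated form.

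For part (2) I would run the identical computation on $(1-x)\Deltaf{x}=A-(A+s_2)x+s_2x^2$. The constant and linear terms again cancel against the weighted average, leaving the difference equal to $-s_2\bigl(\rho x_i^2+\sigma x_j^2-(x'_j)^2\bigr)=-s_2\rho\sigma(x_i-x_j)^2<0$, which is precisely the reversed inequality required. Thus both claims follow from a single algebraic identity applied with the two opposite signs of curvature: $x\Deltaf{x}$ is strictly concave while $(1-x)\Deltaf{x}$ is strictly convex on $[\frac{c}{s_1},1]$.

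There is no deep obstacle; the only step demanding care is the very first one — verifying that $x'_j$ does not leave the linear regime $[\frac{c}{s_1},1]$. If $x'_j$ fell onto the decreasing branch $f(x)=c-s_1x$, the formula for $\Deltaf{x'_j}$ would change and the cancellation of the constant and linear terms would break down. Because $x'_j$ is sandwiched between $x_i$ and $x_j$, this containment is automatic, so the whole argument collapses to the one identity $\rho\sigma(x_i-x_j)^2>0$ read off with the two signs of $s_2$.
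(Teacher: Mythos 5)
Your proof is correct and follows essentially the same route as the paper's: both arguments rest on the fact that $\Deltaf{x}$ is affine on $[\frac{c}{s_1},1]$ (so the linear terms cancel against the convex combination) and reduce each inequality to the strict positivity of $\frac{p_ip_j}{(p_i+p_j)^2}(x_j-x_i)^2$. The only cosmetic differences are that the paper proves part (1) by contradiction and obtains part (2) by subtracting (1) from the linearity identity $(p_i+p_j)\Deltaf{x'_j}=p_i\Deltaf{x_i}+p_j\Deltaf{x_j}$, whereas you compute both differences directly via the variance identity.
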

\begin{proof}
Let $\D\{(\conf{x_1},p_1),..,(\conf{x_i},p_i),..,(\conf{x_j},p_j),..,(\conf{x_k},p_k)\}$ be a configuration distribution that has $x_j>x_i\geq \frac{c}{s_1}$. 

Also let $\D\{(\conf{x_1},p_1),..,(\conf{x_{i-1}},p_{i-1}),(\conf{x_{i+1}},p_{i+1}),..,(\conf{x'_j},p_i+p_j),..,\\(\conf{x_k},p_k)\}$ be a configuration distribution that has $x'_j=\frac{p_i}{p_i+p_j}x_i+\frac{p_j}{p_i+p_j}x_j$. 
We know that $0<p_i,p_j<1$ and $x_j>x_i$. Thus $x_i<x'_j<x_j$.
Assume by way of contradiction that
\[
(p_i+p_j)x'_j\Deltaf{x'_j}\leq p_ix_i\Deltaf{x_i}+p_jx_j\Deltaf{x_j},
\]
or equivalently,
\[
x'_j\Deltaf{x'_j}\leq \frac{p_i}{p_i+p_j}x_i\Deltaf{x_i}+\frac{p_j}{p_i+p_j}x_j\Deltaf{x_j}.
\]
Let $p=\frac{p_i}{p_i+p_j}$, so $1-p=\frac{p_j}{p_i+p_j}$. Then we have
\[
x'_j\Deltaf{x'_j}\leq px_i\Deltaf{x_i}+(1-p)x_j\Deltaf{x_j}.
\]
Recall that for $\frac{c}{s_1}\leq x\leq 1$, $\Deltaf{x}=1-s_2(x-\frac{c}{s_1})$.
Since $\frac{c}{s_1}\leq x_i, x_j, x'_j\leq 1$, we get
\[
x'_j(1-s_2(x'_j-\frac{c}{s_1})) \leq 
px_i(1-s_2(x_i-\frac{c}{s_1}))+
(1-p)x_j(1-s_2(x_j-\frac{c}{s_1})).
\]
Since $x'_j=px_i+(1-p)x_j$, we have
\[
x'_j(-s_2(x'_j-\frac{c}{s_1})) \leq 
px_i(-s_2(x_i-\frac{c}{s_1}))+
(1-p)x_j(-s_2(x_j-\frac{c}{s_1})).
\]
We know that $s_2>0$, and hence dividing by $-s_2$, we get 
\[
x'_j(x'_j-\frac{c}{s_1}) \geq px_i(x_i-\frac{c}{s_1})+
(1-p)x_j(x_j-\frac{c}{s_1}).
\]
Since $-\frac{c}{s_1}x'_j=-\frac{c}{s_1}(px_i+(1-p)x_j)$, we have
\[
x'^2_j \geq px_i^2+
(1-p)x_j^2.
\]
Substituting $x'_j$ by $px_i+(1-p)x_j$, we get
\[
p^2x_i^2+2p(1-p)x_ix_j+(1-p)^2x_j^2 \geq px_i^2+
(1-p)x_j^2.
\]
By rearranging, we have
\[
p(1-p)(x_j^2-2x_ix_j+x_i^2) \leq 0.
\]
Now since $0<p<1$, we can divide by $p(1-p)$, and we get
\[
(x_j-x_i)^2 \leq 0,
\]
which contradicts since $ x_j\neq x_i$. This proves that 
\begin{equation}\label{eq:cond1}
(p_i+p_j)x'_j\Deltaf{x'_j} > p_ix_i\Deltaf{x_i}+p_jx_j\Deltaf{x_j}.
\end{equation}
Now we prove that $(p_i+p_j)(1-x'_j)\Deltaf{x'_j} < p_i(1-x_i)\Deltaf{x_i}+p_j(1-x_j)\Deltaf{x_j}$. To do so, we first show that $(p_i+p_j)\Deltaf{x'_j} = p_i\Deltaf{x_i}+
p_j\Deltaf{x_j}$. 

We know that 
\begin{eqnarray*}
	&& x'_j=\frac{p_i}{p_i+p_j}x_i+\frac{p_j}{p_i+p_j}x_j \\
	 \Longleftrightarrow && (p_i+p_j)x'_j = p_ix_i+p_jx_j \\
	 \Longleftrightarrow && (p_i+p_j)(x'_j-\frac{c}{s_1}) = p_i(x_i-\frac{c}{s_1})+p_j(x_j-\frac{c}{s_1}) \\
	 \Longleftrightarrow && (p_i+p_j)s_2(x'_j-\frac{c}{s_1}) = p_is_2(x_i-\frac{c}{s_1})+p_js_2(x_j-\frac{c}{s_1}) \\
	 \Longleftrightarrow && (p_i+p_j)(1-s_2(x'_j-\frac{c}{s_1})) = p_i(1-s_2(x_i-\frac{c}{s_1}))+p_j(1-s_2(x_j-\frac{c}{s_1}))
\end{eqnarray*}
Recall that $\Deltaf{x}=1-s_2(x-\frac{c}{s_1})$ when $\frac{c}{s_1}\leq x\leq 1$.
Since $\frac{c}{s_1}\leq x_i, x_j, x'_j\leq 1$, we get
\begin{equation}\label{eq:cond2}
(p_i+p_j)\Deltaf{x'_j} = p_i\Deltaf{x_i}+p_j\Deltaf{x_j}.
\end{equation}
By subtracting \eqref{eq:cond1} from \eqref{eq:cond2}, we obtain
\[
(p_i+p_j)(1-x'_j)\Deltaf{x'_j} < p_i(1-x_i)\Deltaf{x_i}+p_j(1-x_j)\Deltaf{x_j}.
\] \qed
\end{proof}

\begin{lemma}\label{lem:oneDelta_i>0-on-s2-slope}
For any \extelfarol , any optimal mediator that uses $\D\{(\conf{x_1},p_1),..,(\conf{x_k},p_k)\}$ has at most one configuration that has at least a $\frac{c}{s_1}$-fraction of players advised to go, and any other configuration has less than a $\frac{c}{s_1}$-fraction of players advised to go.
\end{lemma}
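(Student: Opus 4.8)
The plan is to argue by contradiction, using the merging operation supplied by \Fact \ref{fact:x=py+(1-p)z}. Suppose some optimal mediator $M_k$ uses $\D\{(\conf{x_1},p_1),..,(\conf{x_k},p_k)\}$ and has two distinct configurations, say $\conf{x_i}$ and $\conf{x_j}$, each with at least a $\frac{c}{s_1}$-fraction advised to go. Without loss of generality assume $x_j>x_i\geq \frac{c}{s_1}$ (they are distinct by the definition of a configuration distribution). First I would replace these two configurations by the single averaged configuration $\conf{x'_j}$ with probability $p_i+p_j$, where $x'_j=\frac{p_i}{p_i+p_j}x_i+\frac{p_j}{p_i+p_j}x_j$, exactly as in \Fact \ref{fact:x=py+(1-p)z}. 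Since $x_i<x'_j<x_j\leq 1$, the point $x'_j$ lies on the same $s_2$-slope, so the hypotheses of \Fact \ref{fact:x=py+(1-p)z} are met and its conclusions apply.

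Next I would check, via \Fact \ref{fact:const}, that the resulting distribution over $k-1$ configurations is still a correlated equilibrium. The left-hand side of Constraint \eqref{D_kconstraints>0} for the new distribution differs from that of $M_k$ only in the two merged terms, and by part~1 of \Fact \ref{fact:x=py+(1-p)z} we have $(p_i+p_j)x'_j\Deltaf{x'_j} > p_ix_i\Deltaf{x_i}+p_jx_j\Deltaf{x_j}$, so this left-hand side strictly increases and hence remains $\geq 0$. Likewise, the left-hand side of Constraint \eqref{D_kconstraints<0} differs only in the merged terms, and by part~2 of \Fact \ref{fact:x=py+(1-p)z} it strictly decreases and hence remains $\leq 0$. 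Therefore the new distribution is a correlated equilibrium, defining a mediator $M_{k-1}$.

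Finally, I would compare social costs through \Fact \ref{fact:socialcost}. The expected social cost is $(1-\sum p_i x_i\Deltaf{x_i})n$, and since the merge strictly increases $\sum p_i x_i\Deltaf{x_i}$ (again part~1 of \Fact \ref{fact:x=py+(1-p)z}), the cost of $M_{k-1}$ is strictly smaller than that of $M_k$, contradicting the optimality of $M_k$. Hence at most one configuration has an $x\geq \frac{c}{s_1}$, and every other configuration has fewer than a $\frac{c}{s_1}$-fraction advised to go.

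The step I expect to require the most care is the bookkeeping in the two correlated-equilibrium constraints: I must confirm that only the $i$- and $j$-indexed terms change when the two configurations are merged, so that the strict inequalities from \Fact \ref{fact:x=py+(1-p)z} transfer directly to the full sums in \eqref{D_kconstraints>0} and \eqref{D_kconstraints<0}. One mild edge case is when the averaged value $x'_j$ coincides with some already present $x_r\geq \frac{c}{s_1}$; in that event the two equal configurations are simply combined by adding their probabilities, which leaves both the constraint sums and the social-cost comparison intact and still yields the desired contradiction.
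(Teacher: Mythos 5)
Your proposal is correct and follows essentially the same route as the paper's own proof: assume two configurations with $x_j>x_i\geq \frac{c}{s_1}$, merge them into $\conf{x'_j}$ with $x'_j=\frac{p_i}{p_i+p_j}x_i+\frac{p_j}{p_i+p_j}x_j$, use the two strict inequalities of \Fact \ref{fact:x=py+(1-p)z} together with \Fact \ref{fact:const} to verify the merged distribution is still a correlated equilibrium, and then use \Fact \ref{fact:socialcost} to show the merge strictly lowers the expected social cost, contradicting optimality. Your remark about the edge case where $x'_j$ coincides with an existing $x_r$ is a small point of care that the paper's proof does not spell out, but it is handled exactly as you say.
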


\begin{proof}
Assume by way of contradiction that there is an optimal mediator 
$M_k$ that uses $\D\{(\conf{x_1},p_1),..,(\conf{x_i},p_i),..,(\conf{x_j},p_j),..,(\conf{x_k},p_k)\}$, where $x_j>x_i\geq \frac{c}{s_1}$.
Let $\D\{(\conf{x_1},p_1),..,(\conf{x_{i-1}},p_{i-1}),(\conf{x_{i+1}},p_{i+1}),..,(\conf{x'_j},p_i+p_j),\\..,(\conf{x_k},p_k)\}$ be a configuration distribution that has $x'_j=\frac{p_i}{p_i+p_j}x_i+\frac{p_j}{p_i+p_j}x_j$.
Since $M_{k}$ is a mediator, by \Fact \ref{fact:const}, we have
\begin{equation}\label{eq:lem:oneDelta_i>0-on-s2-slope-const1}
p_ix_i\Deltaf{x_i}+ p_jx_j\Deltaf{x_j} + \sum_{1\leq r\leq k, r\neq i, r\neq j} p_rx_r\Deltaf{x_r} \geq 0  
\end{equation}
and
\begin{equation}\label{eq:lem:oneDelta_i>0-on-s2-slope-const2}
p_i(1-x_i)\Deltaf{x_i}+ p_j(1-x_j)\Deltaf{x_j} + \sum_{1\leq r\leq k, r\neq i, r\neq j} p_r(1-x_r)\Deltaf{x_r}\leq 0.
\end{equation} 
By \Fact \ref{fact:x=py+(1-p)z}, we have
\begin{equation}\label{eq:lem:oneDelta_i>0-on-s2-slope-cond1}
(p_i+p_j)x'_j\Deltaf{x'_j} > p_ix_i\Deltaf{x_i}+p_jx_j\Deltaf{x_j}
\end{equation}
and 
\begin{equation}\label{eq:lem:oneDelta_i>0-on-s2-slope-cond2}
(p_i+p_j)(1-x'_j)\Deltaf{x'_j} < p_i(1-x_i)\Deltaf{x_i}+p_j(1-x_j)\Deltaf{x_j}.
\end{equation}
By Inequalities \eqref{eq:lem:oneDelta_i>0-on-s2-slope-const1} and \eqref{eq:lem:oneDelta_i>0-on-s2-slope-cond1}, we get
\begin{equation}\label{eq:lem4.11:cond1}
(p_i+p_j)x'_j\Deltaf{x'_j} + \sum_{1\leq r\leq k, r\neq i, r\neq j} p_rx_r\Deltaf{x_r}> 0.  
\end{equation}
Similarly, by Inequalities \eqref{eq:lem:oneDelta_i>0-on-s2-slope-const2} and \eqref{eq:lem:oneDelta_i>0-on-s2-slope-cond2}, we obtain
\begin{equation}\label{eq:lem4.11:cond2}
(p_i+p_j)(1-x'_j)\Deltaf{x'_j} + \sum_{1\leq r\leq k, r\neq i, r\neq j} p_r(1-x_r)\Deltaf{x_r}< 0.
\end{equation}
By \Fact \ref{fact:const} and Inequalities \eqref{eq:lem4.11:cond1} and \eqref{eq:lem4.11:cond2}, $\D\{(\conf{x_1},p_1),..,(\conf{x_{i-1}},p_{i-1}),\\(\conf{x_{i+1}},p_{i+1}),..,(\conf{x'_j},p_i+p_j),..,(\conf{x_k},p_k)\}$ is a correlated equilibrium. Let $M_{k-1}$ be a mediator that uses this correlated equilibrium.
By \Fact \ref{fact:socialcost}, the expected social cost of $M_{k}$ is 
\[
((1 - \sum_{1\leq r\leq k, r\neq i, r\neq j} p_r x_r \Deltaf{x_r}) - p_ix_i\Deltaf{x_i}-p_jx_j\Deltaf{x_j})n,
\]
and the expected social cost of $M_{k-1}$ is
 \[
((1 - \sum_{1\leq r\leq k, r\neq i, r\neq j} p_r x_r \Deltaf{x_r}) -(p_i+p_j)x'_j\Deltaf{x'_j})n.
\]
Since $(p_i+p_j)x'_j\Deltaf{x'_j}>p_ix_i\Deltaf{x_i}+p_jx_j\Deltaf{x_j}$, the expected social cost of $M_{k-1}$ is less than the expected social cost of $M_{k}$. This contradicts that $M_{k}$ is an optimal mediator. \qed
\end{proof}

\begin{lemma}\label{lem:two-configurations}
For any \extelfarol , there exists an optimal mediator that uses $\D\{(\conf{0},p),(\conf{x},1-p)\}$, where $0<p<1$; and $\frac{c}{s_1} \leq x < \frac{1}{s_2}+\frac{c}{s_1}$ if $f(1) \geq 1$, otherwise $\frac{c}{s_1} \leq x \leq 1$.
\end{lemma}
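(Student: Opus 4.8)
The plan is to read off the support of an optimal mediator directly from the two counting lemmas just proved, and then to pin down the single nonzero configuration using the sign analysis of $\Deltaf{\cdot}$. Throughout I work in the case $c > 1$ that governs this section, so that $\conf{0}$ will turn out to be the ``negative $\Delta$'' configuration.

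First I would fix an optimal mediator over $k \geq 2$ configurations, using $\D\{(\conf{x_1},p_1),..,(\conf{x_k},p_k)\}$. Lemma~\ref{lem:onedelta<0-on-slope-s1} guarantees that exactly one configuration has $x_i = 0$, that every remaining configuration satisfies $x_i \geq \frac{c}{s_1}$, and hence that none lies in $(0,\frac{c}{s_1})$. Lemma~\ref{lem:oneDelta_i>0-on-s2-slope} guarantees that at most one configuration has $x_i \geq \frac{c}{s_1}$. Together these bound the number of configurations by $k \leq 2$, and since a configuration distribution has $k \geq 2$ by definition, I conclude $k = 2$, with one configuration $\conf{0}$ and one configuration $\conf{x}$ satisfying $x \geq \frac{c}{s_1}$. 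Writing $p$ for the probability of $\conf{0}$, the mediator uses $\D\{(\conf{0},p),(\conf{x},1-p)\}$, and $0 < p < 1$ by the definition of a configuration distribution.

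It then remains to determine the admissible range of $x$. Because $c > 1$, I have $\Deltaf{0} = 1 - f(0) = 1 - c < 0$, so $\conf{0}$ is the configuration of negative $\Delta$ demanded by \Fact \ref{fact:signedDelta_i}; that same lemma also forces some configuration of positive $\Delta$, and $\conf{x}$ is the only candidate, so $\Deltaf{x} > 0$. Feeding $\Deltaf{x} > 0$ together with $x \geq \frac{c}{s_1}$ into \Fact \ref{fact:deltas} splits into exactly the two advertised regimes: $\frac{c}{s_1} \leq x < \frac{1}{s_2} + \frac{c}{s_1}$ when $f(1) \geq 1$, and $\frac{c}{s_1} \leq x \leq 1$ when $f(1) < 1$, which completes the statement.

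The step I expect to be the crux is the bookkeeping that merges the two counting lemmas into ``$k=2$'': I must be sure that the ``exactly one at $0$'' count of Lemma~\ref{lem:onedelta<0-on-slope-s1} and the ``at most one at $x \geq \frac{c}{s_1}$'' count of Lemma~\ref{lem:oneDelta_i>0-on-s2-slope} refer to disjoint regions, which is precisely why the absence of any configuration in $(0,\frac{c}{s_1})$ is needed, and that the distinctness of support points forbids two configurations from collapsing to the same $x_i$. A subsidiary concern is the prior existence of an optimal mediator over finitely many configurations; this I would justify by noting that the cost-decreasing reductions of the preceding facts (notably \Fact \ref{fact:delta<0-x=0} and \Fact \ref{fact:x=py+(1-p)z}) drive every distribution toward the two-configuration form without increasing the social cost, and that over this family the cost is a continuous function of $(x,p)$ on a compact parameter set and hence attains its minimum.
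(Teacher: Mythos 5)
Your proposal is correct and follows essentially the same route as the paper's proof: combining Lemmas~\ref{lem:onedelta<0-on-slope-s1} and \ref{lem:oneDelta_i>0-on-s2-slope} to force exactly two configurations ($\conf{0}$ and $\conf{x}$ with $x \geq \frac{c}{s_1}$), then using Lemmas~\ref{fact:deltas} and \ref{fact:signedDelta_i} to get $\Deltaf{0}<0$, $\Deltaf{x}>0$, and the stated range for $x$. Your explicit counting step ($k\leq 2$ from the two lemmas, $k\geq 2$ by definition) and your remark on the existence of an optimal mediator merely spell out details the paper leaves implicit.
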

\begin{proof}
By the definition of the configuration distribution, a mediator has at least two configurations. By Lemmas \ref{lem:onedelta<0-on-slope-s1} and \ref{lem:oneDelta_i>0-on-s2-slope}, there exists an optimal mediator that has exactly two configurations. 
The first configuration has no players advised to go, and the second configuration has an $x$-fraction of players advised to go, where $x\geq \frac{c}{s_1}$.
Since the first configuration has $zero$ players advised to go, by \Fact \ref{fact:deltas}, $\Deltaf{0}<0$. 
By \Fact \ref{fact:signedDelta_i}, we must have $\Deltaf{x}>0$. 
We know that $x\geq \frac{c}{s_1}$. 
By \Fact \ref{fact:deltas}, if $f(1) \geq 1$, then $\frac{c}{s_1} \leq x< \frac{1}{s_2}+\frac{c}{s_1}$; otherwise, $\frac{c}{s_1} \leq x \leq 1$.
\qed
\end{proof}

\subsection{The Reduction of Mediators for $c \leq 1$}
Now we consider the case that $c \leq 1$ in the following lemma.
\begin{lemma}\label{lem:c<1,mv=1}
For any \extelfarolws, if $c \leq 1$, then $\mvsym = 1$.
\end{lemma}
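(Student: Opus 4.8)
The plan is to show that no mediator can achieve expected social cost strictly below that of the best Nash equilibrium, namely $\min(n, f(1)n)$ from Lemma \ref{lem:bestnash}; since any Nash equilibrium is itself a correlated equilibrium, the two minima coincide and $\mvsym = 1$ follows. By Fact \ref{fact:socialcost} the expected social cost of a mediator using $\D\{(\conf{x_1},p_1),\dots,(\conf{x_k},p_k)\}$ is $(1 - S)n$ with $S := \sum_{i=1}^{k} p_i x_i \Deltaf{x_i}$, so the whole task reduces to upper-bounding $S$. The crucial consequence of the hypothesis $c \le 1$ is that $\Deltaf{0} = 1 - c \ge 0$, which forces $\Deltaf{x} \ge 0$ on the entire first piece of $f$; combined with the second piece this yields the clean sign rule $\Deltaf{x} \ge 0 \iff x \le \beta$ and $\Deltaf{x} < 0 \iff x > \beta$, where $\beta := \frac{c}{s_1} + \frac{1}{s_2}$. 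I would record this sign rule first, together with the observation that $f(1) \le 1 \iff \beta \ge 1$, and then split on the sign of $f(1)-1$.

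When $f(1) \le 1$ (so $\beta \ge 1$), every configuration has $\Deltaf{x_i} \ge 0$ because $x_i \le 1 \le \beta$. Then constraint \eqref{D_kconstraints<0} of Fact \ref{fact:const}, $\sum_i p_i(1-x_i)\Deltaf{x_i} \le 0$, is a sum of nonnegative terms forced to be $\le 0$, so every term vanishes: for each $i$ either $x_i = 1$ or $\Deltaf{x_i} = 0$. Substituting this into $S$ collapses it to $\big(\sum_{i:\,x_i=1} p_i\big)\Deltaf{1} \le \Deltaf{1}$, since $\Deltaf{1} \ge 0$ here; hence the social cost is $\ge (1 - \Deltaf{1})n = f(1)n = \min(n, f(1)n)$, as required.

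The harder case is $f(1) > 1$ (so $0 < \beta < 1$), where I want $S \le 0$. The key observation is that the sign rule makes $(x-\beta)\Deltaf{x} \le 0$ for every $x \in [0,1]$ (both factors nonpositive when $x \le \beta$, of opposite sign when $x > \beta$); the $p_i$-weighted sum of these then gives $S \le \beta \sum_i p_i \Deltaf{x_i}$. On the other hand, constraint \eqref{D_kconstraints<0} rewritten reads exactly $\sum_i p_i \Deltaf{x_i} \le S$. Chaining the two, and using $\beta>0$, yields $S \le \beta \sum_i p_i \Deltaf{x_i} \le \beta S$, whence $(1-\beta)S \le 0$ and, since $1-\beta > 0$, $S \le 0$. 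Thus the social cost is $\ge n = \min(n, f(1)n)$. Finally, since the best Nash equilibrium is realizable as a correlated equilibrium, the minimum social cost over all mediators equals that over all Nash equilibria, giving $\mvsym = 1$.

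I expect the last case to be the main obstacle: the bound $S \le 0$ follows from neither equilibrium constraint in isolation, and the trick is to pair the geometric inequality $(x-\beta)\Deltaf{x}\le 0$ with constraint \eqref{D_kconstraints<0}, a combination that only closes up because $\beta < 1$ precisely when $f(1) > 1$. Throughout, the hypothesis $c \le 1$ is exactly what guarantees $\Deltaf{x} \ge 0$ on $[0,\beta]$; without it a third sign region $x < \frac{c-1}{s_1}$ appears and the clean separation breaks, which is the regime in which mediation actually helps.
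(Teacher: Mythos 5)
Your argument is correct, but it takes a genuinely different route from the paper's. The paper proves this lemma by rerunning its $c>1$ machinery: it argues ``in a manner similar to'' \Facts \ref{fact:signedDelta_i}, \ref{fact:delta>0-x=c/s_1}--\ref{fact:f(1)<1,f(x)>f(1),(c-1)/s_1<x<c/s_1} and Lemma~\ref{lem:oneDelta_i>0-on-s2-slope} that when $c\leq 1$ every optimal mediator collapses to a single configuration (no configuration with $x<\frac{c}{s_1}$ survives, and at most one with $x\geq\frac{c}{s_1}$ remains), and a one-configuration ``distribution'' is trivially a Nash equilibrium. You instead prove a direct lower bound on the expected social cost of \emph{any} correlated equilibrium: writing the cost as $(1-S)n$ via \Fact~\ref{fact:socialcost}, you exploit that $c\leq1$ forces the clean sign rule $\Deltaf{x}\geq 0\iff x\leq\beta:=\frac{c}{s_1}+\frac{1}{s_2}$, handle $f(1)\leq1$ by noting constraint \eqref{D_kconstraints<0} is a sum of nonnegative terms that must vanish, and handle $f(1)>1$ by pairing the pointwise inequality $(x-\beta)\Deltaf{x}\leq0$ with \eqref{D_kconstraints<0} to get $S\leq\beta S$ and hence $S\leq0$; both cases give cost at least $\min(n,f(1)n)$, matching Lemma~\ref{lem:bestnash}. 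Your approach buys rigor and economy --- it uses only \Facts \ref{fact:const} and \ref{fact:socialcost}, needs no optimality assumption on the mediator, and replaces the paper's sketchy ``similar to'' appeals with explicit inequalities --- at the price of not characterizing the structure of optimal mediators, which the paper's reduction route provides and reuses elsewhere. Both proofs share the same minor definitional looseness (a configuration distribution formally requires $k\geq2$, so the best Nash equilibrium is only a degenerate mediator); you at least make explicit that the best \SpacedNe is realizable as a correlated equilibrium, which is what closes the ratio to $\mvsym=1$.
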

\begin{proof}
In a manner similar to Lemma \ref{fact:signedDelta_i}, any optimal mediator over $k \geq 2$ does not have a configuration $\conf{x}$ with $\Delta(x) = 0$.

Also in a manner similar to Lemmas \ref{fact:delta>0-x=c/s_1}, \ref{fact:f(1)<1,f(x)<=f(1),(c-1)/s_1<x<c/s_1} and \ref{fact:f(1)<1,f(x)>f(1),(c-1)/s_1<x<c/s_1}, for any mediator $M_k$ that uses $\D\{(\conf{x_1},p_1),..,(\conf{x_j},p_j),..,(\conf{x_k},p_k)\}$, where $0 \leq x_j < \frac{c}{s_1}$, there exists a mediator $M'_k$ of less expected social cost, which uses $\D\{(\conf{x_1},p_1),..,\\(\conf{x'_j},p_j),..,(\conf{x_k},p_k)\}$, where $\frac{c}{s_1} \leq x'_j< \frac{1}{s_2}+\frac{c}{s_1}$ if $ f(1) \geq 1$; otherwise, $\frac{c}{s_1} \leq x'_j \leq 1$.

Finally, in a manner similar to Lemma \ref{lem:oneDelta_i>0-on-s2-slope}, any optimal mediator has at most one configuration, $\conf{x}$, where $x \geq \frac{c}{s_1}$.

Therefore, for $c \leq 1$, the best correlated equilibrium is a configuration distribution over just one configuration, which is trivially the best Nash equilibrium.
\end{proof}

\subsection{An Optimal Mediator}
We have proved that for any \extelfarol, if $c \leq 1$ then the best correlated equilibrium is the best Nash equilibrium; otherwise, there exists an optimal mediator that is over two configurations. 
Now we describe this mediator in detail.
\begin{lemma}\label{lem:socialcost(x_2)}  
For any \extelfarol, if $c > 1$, then $\D\{(\conf{0},p),(\conf{x},1-p)\}$ is the best correlated equilibrium, where $\xlambda(c,s_1,s_2) = c(\frac{1}{s_1}+\frac{1}{s_2})- \sqrt{\frac{c(\frac{1}{s_1}+\frac{1}{s_2})(c-1)}{s_2}}$,
$$
x = \left\{ 
\begin{array}{l l}
  \xlambda(c,s_1,s_2) & \quad \mbox{if $\frac{c}{s_1}  \leq  \xlambda(c,s_1,s_2) < 1$,}\\
  \frac{c}{s_1} & \quad \mbox{if $ \xlambda(c,s_1,s_2) < \frac{c}{s_1}$,}\\ 
  1 & \quad \mbox{$ otherwise$.}\\ 
\end{array} \right.
$$
and $p=\frac{(1-x)(1-f(x))}{(1-x)(1-f(x))+c-1}$. Moreover, the expected social cost is
$$
(p+(1-p)(x f(x)+(1-x)))n.
$$
\end{lemma}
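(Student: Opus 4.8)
The plan is to start from Lemma~\ref{lem:two-configurations}, which already guarantees that for $c>1$ an optimal mediator has the two-configuration form $\D\{(\conf{0},p),(\conf{x},1-p)\}$ with $\frac{c}{s_1}\le x$ and $0<p<1$; so all that remains is to pin down the optimal $x$ and $p$ and to evaluate the resulting cost. First I would record the two sign facts that drive everything: since $c>1$ we have $\Deltaf{0}=1-c<0$, while on the feasible range $\Deltaf{x}>0$ by \Fact\ref{fact:signedDelta_i} and \Fact\ref{fact:deltas}. Feeding these two configurations into the correlated-equilibrium conditions of \Fact\ref{fact:const}, the ``go'' constraint \eqref{D_kconstraints>0} reduces to $(1-p)x\Deltaf{x}\ge 0$, which holds automatically, and the ``stay'' constraint \eqref{D_kconstraints<0} becomes $p(1-c)+(1-p)(1-x)\Deltaf{x}\le 0$. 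Solving this for $p$ (legitimate since $c-1>0$) yields the single inequality $p\ge \frac{(1-x)\Deltaf{x}}{(1-x)\Deltaf{x}+(c-1)}$.

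Next I would invoke \Fact\ref{fact:socialcost}, by which the expected social cost equals $(1-(1-p)x\Deltaf{x})n$. Since $x\Deltaf{x}>0$, this is strictly increasing in $p$, so for every fixed feasible $x$ the cost is minimized by taking $p$ as small as the equilibrium constraint permits, i.e. at equality $p=\frac{(1-x)(1-f(x))}{(1-x)(1-f(x))+c-1}$, which is exactly the claimed formula. Substituting this $p$ back, minimizing the social cost over $x$ is equivalent to maximizing $\phi(x)=\frac{(c-1)\,x\Deltaf{x}}{(1-x)\Deltaf{x}+(c-1)}$, and I would pass to the reciprocal $\frac{1}{\phi(x)}=\frac{1}{c-1}\big(\frac1x-1\big)+\frac{1}{x\Deltaf{x}}$, which is cleaner to differentiate.

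The analytic heart of the proof is then to minimize $1/\phi$ over the feasible interval, which is $[\frac{c}{s_1},\frac{c}{s_1}+\frac1{s_2})$ when $f(1)\ge1$ and $[\frac{c}{s_1},1]$ when $f(1)<1$, in both cases precisely the range on which $\Deltaf{x}>0$. Writing $\Deltaf{x}=a-s_2x$ with $a=1+\frac{s_2c}{s_1}$, the first-order condition simplifies to $(a-s_2x)^2=(c-1)(2s_2x-a)$, a quadratic whose two roots are $\frac{b\pm\sqrt{b(c-1)}}{s_2}$ with $b=c(\frac1{s_1}+\frac1{s_2})s_2$; the $+$ root already exceeds $\frac{c}{s_1}+\frac1{s_2}$ (where $\Deltaf{\cdot}$ has changed sign) and is infeasible, while the $-$ root simplifies to exactly $\xlambda(c,s_1,s_2)$. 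I expect the main obstacle to be justifying that this interior critical point is the constrained minimizer and thereby explaining the three-way clamp defining $\xstar$; the clean route is to note that $1/\phi$ is a sum of two convex functions on the feasible interval (both $\frac1x$ and $\frac{1}{x\Deltaf{x}}=1/w$, where $w$ is a positive downward parabola, are convex there), hence $1/\phi$ is convex and its constrained minimizer is simply the projection of $\xlambda$ onto the feasible interval. This gives $\xstar=\xlambda$ when $\frac{c}{s_1}\le\xlambda<1$, $\xstar=\frac{c}{s_1}$ when $\xlambda<\frac{c}{s_1}$, and $\xstar=1$ otherwise. To make the last case consistent with feasibility I would reuse the computation from Corollary~\ref{corollary:no.mediation} that $\xlambda\ge1\Rightarrow f(1)<1$ (together with $\xlambda<\frac{c}{s_1}+\frac1{s_2}$, which always holds), so that $x=1$ is reached only in the regime where it is feasible, where it degenerates to the all-go Nash equilibrium with $p=0$.

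Finally, having identified $\xstar$ and $p$, I would substitute them into $(1-(1-p)x\Deltaf{x})n$ and rewrite it using the identity $xf(x)+(1-x)=1-x\Deltaf{x}$ to obtain the stated expected social cost $(p+(1-p)(\xstar f(\xstar)+(1-\xstar)))n$, completing the proof.
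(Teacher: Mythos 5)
Your proposal is correct and follows the same overall route as the paper's proof: invoke Lemma~\ref{lem:two-configurations} to restrict to $\D\{(\conf{0},p),(\conf{x},1-p)\}$, observe that the cost $(1-(1-p)x\Deltaf{x})n$ is increasing in $p$ so the stay-constraint must be tight (giving the stated $p$), and then optimize the resulting one-variable expression over $x$. The one substantive difference is in that last optimization step. The paper maximizes $g(x)=\frac{(c-1)x\Deltaf{x}}{(c-1)+(1-x)\Deltaf{x}}$ by differentiating directly, discarding the root $\bar{\xlambda}>\frac{c}{s_1}+\frac{1}{s_2}$, and then \emph{asserting} the signs of $g'$ at the endpoints $x=\frac{c}{s_1}$ and $x=1$ to justify the three-way clamp defining $\xstar$. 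You instead pass to $1/\phi=\frac{1}{c-1}\bigl(\frac{1}{x}-1\bigr)+\frac{1}{x\Deltaf{x}}$ and note it is a sum of convex functions on the feasible interval (since $x\Deltaf{x}$ is a positive concave parabola there), so the constrained minimizer is the projection of the unique interior critical point $\xlambda$ onto the interval. This buys a genuinely cleaner and more airtight justification of the clamping than the paper's endpoint-sign claims, and it automatically explains why only one root of the quadratic is relevant. Two further small points in your favor: you explicitly check that the ``go'' constraint \eqref{D_kconstraints>0} is automatically satisfied for this two-configuration family (the paper leaves this implicit), and you explicitly reconcile the case $\xstar=1$ with feasibility via the implication $\xlambda\geq 1\Rightarrow f(1)<1$ from Corollary~\ref{corollary:no.mediation}, which the paper's proof of this lemma glosses over.
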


\begin{proof}
By Lemma \ref{lem:two-configurations}, there exists an optimal mediator $M_2$ that uses $\D\{(\conf{0},p),\\(\conf{x},1-p)\}$, where $\frac{c}{s_1} \leq x < \frac{1}{s_2}+\frac{c}{s_1}$ if $f(1) \geq 1$; otherwise, $\frac{c}{s_1} \leq x \leq 1$.

Now we determine $p$ and $x$ so that $M_2$ is an optimal mediator.

First, we determine $p$.
By Constraint \eqref{D_kconstraints<0} of \Fact \ref{fact:const}, we have
\begin{equation}\label{lem:socialcost(x)-temp1}
p\Deltaf{0} + (1-p)(1-x)\Deltaf{x} \leq 0.
\end{equation}
We know that $c>1$, $\Deltaf{0}=1-c$ and $\Deltaf{x}>0$. By rearranging Inequality \eqref{lem:socialcost(x)-temp1}, we obtain 
\begin{equation}\label{lem:socialcost(x)-p}
p \geq \frac{(1-x)\Deltaf{x}}{(c-1) + (1-x)\Deltaf{x}}.
\end{equation}
Recall that the cost of any configuration, $\conf{x_i}$, is $\Csc{\conf{x_i}}=(1-x_i\Deltaf{x_i})n$. 
Thus $\Csc{\conf{0}}=n$, and $\Csc{\conf{x}}=(1-x\Deltaf{x})n$. 
Since $\Deltaf{x}>0$, $\Csc{\conf{x}}<n$. 
Thus, $\Csc{\conf{x}}<\Csc{\conf{0}}$. 
We know that the social cost of $M_2$ is
\begin{equation}\label{lem:socialcost(x)-socialcost}
p\Csc{\conf{0}}+(1-p)\Csc{\conf{x}}.
\end{equation}
Since $\Csc{\conf{x}}<\Csc{\conf{0}}$, the minimum expected social cost is when $p$ is the smallest possible value in Inequality \eqref{lem:socialcost(x)-p} which is $\frac{(1-x)\Deltaf{x}}{(c-1) + (1-x)\Deltaf{x}}$.

Now we determine $x$. 
By \Fact \ref{fact:socialcost}, the expected social cost of $M_2$ is
\[
(1 -  (1-p) x \Deltaf{x})n.
\]
Since $p=\frac{(1-x)\Deltaf{x}}{(c-1) + (1-x)\Deltaf{x}}$, the expected social cost is then
\[
(1 -  \frac{(c-1)x \Deltaf{x}}{(c-1) + (1-x)\Deltaf{x}})n.
\]

As $M_2$ is an optimal mediator, we minimize its expected social cost with respect to $x$. 
Thus $g(x)$ is maximized with respect to $x$, where 
\[
g(x) = \frac{(c-1)x \Deltaf{x}}{(c-1) + (1-x)\Deltaf{x}}.
\]
Hence, we have
\[
\frac{dg(x)}{dx} = \frac{(c-1)[(c-1 + (1-x)\Deltaf{x})(\Deltaf{x}-s_2x) + x \Deltaf{x}((1-x)s_2+\Deltaf{x})]}{((c-1) + (1-x)\Deltaf{x})^2}.
\]
By rearranging and canceling common terms, we obtain
\[
\frac{dg(x)}{dx} = \frac{(c-1)[(\Deltaf{x})^2+(c-1)\Deltaf{x}-(c-1)s_2x]}{((c-1) + (1-x)\Deltaf{x})^2}.
\]

We know that $\Deltaf{x}>0$, $\frac{c}{s_1}\leq x < \frac{c}{s_1}+\frac{1}{s_2}$, $x \leq 1$ and $c>1$, so the denominator is always positive. 
By setting the numerator to zero and dividing by $c-1$, we get
\begin{equation}\label{maxsw:df1}
(\Deltaf{x})^2+(c-1)\Deltaf{x}-(c-1)s_2x=0
\end{equation}

By solving Equation \eqref{maxsw:df1}, we have $x = c(\frac{1}{s_1}+\frac{1}{s_2})\pm \sqrt{\frac{c(\frac{1}{s_1}+\frac{1}{s_2})(c-1)}{s_2}}$. 
Now let $\xlambda(c,s_1,s_2) = c(\frac{1}{s_1}+\frac{1}{s_2})- \sqrt{\frac{c(\frac{1}{s_1}+\frac{1}{s_2})(c-1)}{s_2}}$ and 
$\bar{\xlambda}(c,s_1,s_2) = (c(\frac{1}{s_1}+\frac{1}{s_2})+ \sqrt{\frac{c(\frac{1}{s_1}+\frac{1}{s_2})(c-1)}{s_2}})$.

Since $ \bar{\xlambda}(c,s_1,s_2) > (\frac{1}{s_2}+\frac{c}{s_1})$, by Lemma \ref{lem:two-configurations}, it is out of range.
Therefore, we have exactly one root $x = \xlambda(c,s_1,s_2)$.

%Note that $\frac{dg(x)}{dx}\mid_{(x = \xlambda(c,s_1,s_2))}=0$ iff $(\frac{c}{s_1} \leq \xlambda(c,s_1,s_2) <\frac{1}{s_2}+\frac{c}{s_1} \ and \ f(1) \geq 1)$ or $(\frac{c}{s_1} \leq \xlambda(c,s_1,s_2) \leq 1 \ and \ f(1) < 1)$.

We know $\frac{dg(x)}{dx}\mid_{(x = \frac{c}{s_1})}<0$ iff $\xlambda(c,s_1,s_2)<\frac{c}{s_1}$, and $\frac{dg(x)}{dx}\mid_{(x = 1)}>0$ iff $\xlambda(c,s_1,s_2) > 1$.
Also we know that $\xlambda(c,s_1,s_2) < \frac{c}{s_1}+\frac{1}{s_2}$; and $\frac{c}{s_1} \leq x < \frac{1}{s_2}+\frac{c}{s_1}$ if $f(1) \geq 1$, otherwise, $\frac{c}{s_1} \leq x \leq 1$. 
Therefore, for $\frac{c}{s_1} \leq \xlambda(c,s_1,s_2) \leq 1$, the maximum of $g(x)$ is at $x=\xlambda(c,s_1,s_2)$.
Moreover, if $\xlambda(c,s_1,s_2) <\frac{c}{s_1}$, then the maximum of $g(x)$ is at $x = \frac{c}{s_1}$; and for $\xlambda(c,s_1,s_2) > 1$, the maximum is at $x = 1$.

Recall that the expected social cost of $\D\{(\conf{0},p),(\conf{x},1-p)\}$ is
\[
p\Csc{\conf{0}}+(1-p)\Csc{\conf{x}},
\]
or equivalently,
\[
(p+(1-p)(xf(x)+(1-x)))n.
\] \qed
\end{proof}

\subsection{The Mediation Metrics}
Now we compute the \mv ~and the \ev .
To obtain the \mv ~and \ev; recall that the \mv ~(\mvsym) is the ratio of the minimum social cost over all Nash equilibria to the minimum social cost over all mediators, and the \ev ~is the ratio of the minimum social cost over all mediators to the optimal social cost.

For $c \leq 1$, by Lemma \ref{lem:c<1,mv=1}, $\mvsym = 1$; and by Lemmas \ref{lem:socialoptimum} and \ref{lem:bestnash}, $\evsym = \frac{\min(f(1),1)}{yf(y)+(1-y)}$.

For $c > 1$, by Lemmas \ref{lem:bestnash} and \ref{lem:socialcost(x_2)}, the \mv ~is:
$$
\frac{\min(f(1),1)}{p+(1-p)(xf(x)+(1-x))};
$$
and by Lemmas \ref{lem:socialoptimum} and \ref{lem:socialcost(x_2)}, the \ev ~is:
$$
\frac{p+(1-p)(xf(x)+(1-x))}{yf(y)+(1-y)}.
$$

\end{document}